\newcommand{\vleq}{\rotatebox[origin=c]{-90}{$\leq$}}
\newcommand{\vvleq}{\rotatebox[origin=c]{-30}{$\leq$}}
\theoremstyle{plain}
\newtheorem{theorem}{Theorem}[section]
\newtheorem{proposition}[theorem]{Proposition}
\newtheorem{lemma}[theorem]{Lemma}
\newtheorem{corollary}[theorem]{Corollary}
\newtheorem{definition}[theorem]{Definition}
\newtheorem{remark}[theorem]{Remark}
\newtheorem{example}[theorem]{Example}
\newcommand{\sotlim}{\displaystyle \text{sot-}\lim}
\newcommand{\F}{\mathbb{F}}
\newcommand{\norm}[1]{\left\|#1\right\|}
\newcommand{\R}{\mathbb{R}}
\newcommand{\C}{\mathbb{C}}
\newcommand{\N}{\mathbb{N}}
\newcommand{\Z}{\mathbb{Z}}
\newcommand{\rk}{\mathrm{rank}}
\newcommand{\cl}{\mathrm{cl}}
\newcommand{\ran}{\mathrm{ran}}
\newcommand{\cA}{\mathcal{A}}
\newcommand{\cG}{\mathcal{G}}
\newcommand{\cH}{\mathcal{H}}
\newcommand{\cM}{\mathcal{M}}
\newcommand{\cL}{\mathcal{L}}
\newcommand{\bX}{\mathbf{X}}
\newcommand{\Qto}{\overset{\smash{q}}{\to}}
\newcommand{\QCto}{\overset{\smash{qc}}{\to}}
\newcommand{\Cto}{\overset{\smash{C^*}}{\to}}
\newcommand{\Bto}{\overset{\smash{B}}{\to}}
\newcommand{\tto}{\overset{\smash{t}}{\to}}
\DeclareMathOperator{\Hom}{Hom}
\newcommand{\eps}{\varepsilon}
\DeclareMathOperator{\Tr}{Tr}
\DeclareMathOperator{\tr}{tr}
\renewcommand{\epsilon}{\varepsilon}
\renewcommand{\eps}{\varepsilon}
\newcommand{\xitr}{\overline{\xi}_{{tr}}}
\newcommand{\al}{\alpha}
\newcommand{\la}{\lambda}
\title{On a tracial version of Haemers bound}
\author{Li Gao\thanks{Department of Mathematics, University of Houston, Houston, TX, 77204, USA. \texttt{lgao12@uh.edu}} \and Sander Gribling\thanks{IRIF, Universit\'e de Paris, CNRS, Paris, France. \texttt{gribling@irif.fr}} \and Yinan Li\thanks{Graduate school of mathematics, Nagoya University, Nagoya, Japan. \texttt{Yinan.Li@maths.nagoya-u.ac.jp}}}
\date{\today}
\begin{document}

\maketitle
\begin{abstract}
We extend upper bounds on the quantum independence number and the quantum Shannon capacity of graphs to their counterparts in the commuting operator model. We introduce a von Neumann algebraic generalization of the fractional Haemers bound (over $\C$) and prove that the generalization upper bounds the commuting quantum independence number. We call our bound the tracial Haemers bound, and we prove that it is multiplicative with respect to the strong product. In particular, this makes it an upper bound on the Shannon capacity. The tracial Haemers bound is incomparable with the Lov\'asz theta function, another well-known upper bound on the Shannon capacity. We show that separating the tracial and fractional Haemers bounds would refute Connes' embedding conjecture.

Along the way, we prove that the tracial rank and tracial Haemers bound are elements of the (commuting quantum) asymptotic spectrum of graphs (Zuiddam, Combinatorica, 2019). We also show that the inertia bound (an upper bound on the quantum independence number) upper bounds the commuting quantum independence number.
\end{abstract}

\section{Introduction}
\subsection{Independence number from nonlocal games}
Two players, Alice and Bob, attempt to convince a referee that a given graph $G$ contains an independent set of size $d$. The referee sends two integers $i_A,i_B\in[d]$ to Alice and Bob, respectively, who then response vertices $g_A$ and $g_B$ satisfying that
\begin{itemize}
\item if $i_A=i_B$ then $g_A=g_B$;
\item if $i_A\neq i_B$ then $\{g_A,g_B\}\in E(\overline{G})$.
\end{itemize}
Alice and Bob can decide a strategy beforehand, but they cannot communicate during the game. Such a nonlocal game is known as the independent set game~\cite{manvcinska2016quantum,cj16-05}. A classical strategy corresponds to the case that Alice and Bob only use shared (classical) randomness. A perfect classical strategy enables Alice and Bob to win the independent set game for graph $G$ with certainty. It can be shown that a perfect classical strategy exists if and only if the graph $G$ has an independent set of size $d$. Thus, the largest possible $d$ for a graph $G$ where a perfect classical strategy for the independent set game exists is exactly the independence number $\alpha(G)$ of $G$.

One may also consider quantum strategies, where Alice and Bob are allowed to share quantum entanglement and perform measurements on their own systems. A perfect quantum strategy enables Alice and Bob to win the independent set game for a graph $G$ with a potentially larger $d$. The largest such $d$ leads to the definition of quantum versions of the independence number of graphs. Note that different choices of the underlying model of quantum mechanics result in different sets of quantum strategies. In the tensor-product model, the shared entangled state is modeled by a bipartite state in a tensor product Hilbert space $\cH_A\otimes \cH_B$, and the measurements performed by Alice and Bob are modeled as observables on $\cH_A$ and $\cH_B$, respectively. Define the quantum independence number of~$G$, $\alpha_q(G)$, as the largest integer $d$ for which there exists a perfect quantum strategy in the tensor-product model for the independent set game. One may also consider the commuting-operator model, where the shared entangled state is given by a state in a joint Hilbert space $\cH$, and the measurements performed by Alice and Bob are modeled by commuting observables on $\cH$. One similarly defines  the commuting quantum independence number of $G$, $\alpha_{qc}(G)$, as the largest integer $d$ for which there exists a perfect quantum strategy in the commuting-operator model for the independent set game. It is straightforward to see that for any graph $G$ we have the chain of inequalities
\[
\alpha(G)\leq\alpha_q(G)\leq\alpha_{qc}(G).
\]

\subsection{Quantum independence number: motivation and bounds}

The quantum independence number $\alpha_q(G)$ has been extensively studied in the settings of nonlocal games~\cite{6466384,chailloux_et_al:LIPIcs:2014:4807,manvcinska2016quantum,cj16-05}, combinatorics~\cite{robersonthesis,qinertia}, and zero-error quantum information theory~\cite{cubitt2010improving,Leung2012,Briet19227,briet2015entanglement}. It has been shown that perfect quantum strategies for the independent set game can always be achieved by performing projective measurements on a shared maximally entangled state (of a finite dimension)~\cite[Theorem 2.1]{manvcinska2016quantum}.
A natural question is whether we can determine $\alpha_q(G)$ for a given graph~$G$. As a computational problem, determining whether $\alpha_q(G)\geq d$ for some integer $d$ is NP-hard~\cite{qalpha}. In fact, any algorithm for determining whether $\alpha_{q}(G)\geq d$ leads to an algorithm for determining whether a given nonlocal game admits perfect quantum strategies achieved by performing projective measurements on a shared maximally entangled state (of a finite dimension)~\cite{cj16-05}. Note that the latter family of problems includes the one which determines whether a synchronous game admits perfect quantum strategies (in the tensor-product model).

Another important reason to study the quantum independence number arises from the study of the zero-error capacity of classical channels, where the sender and the receiver are allowed to share an entangled state to assist their communication. It is known that the maximal number of zero-error messages one can send through a classical channel assisted by quantum entanglement is determined by its confusability graph~\cite{cubitt2010improving}. This correspondence leads to the definition of the entanglement-assisted independence number $\alpha_*(G)$ of a (confusability) graph $G$. From this information-theoretic point of view, the quantum independence number of the confusability graph $G$ can be equivalently interpreted as the maximal number of zero-error messages one can send through the corresponding classical channel assisted by performing projective measurements on a maximally entangled state~\cite[Section 5]{manvcinska2016quantum}. In information theory, it is also natural to consider the asymptotic zero-error rates, which leads to the definitions of the entanglement-assisted Shannon capacity $\Theta_*(G)$ and the quantum Shannon capacity $\Theta_q(G)$ of the confusability graph $G$:
\[
\Theta_q(G)=\lim_{k\to\infty}\sqrt[k]{\alpha_q(G^{\boxtimes k})}~\text{and}~\Theta_*(G)=\lim_{k\to\infty}\sqrt[k]{\alpha_*(G^{\boxtimes k})}.
\]
From the definitions, we have
\[
\alpha(G)\leq\alpha_q(G)\leq\alpha_*(G)~\text{and}~\Theta(G)\leq\Theta_q(G)\leq\Theta_*(G),
\]
where $\alpha(G)$ and $\Theta(G)$ denote the standard independence number and Shannon capacity of a graph~$G$.

Note that in the vanishing-error setting\footnote{The error probability goes to $0$ when the number of uses of the channel goes to infinity~\cite{https://doi.org/10.1002/j.1538-7305.1948.tb01338.x}.}, quantum entanglement is not helpful for enhancing the communication through classical channels~\cite{1035117}. A natural question is whether a quantum advantage can be found in the zero-error setting. In~\cite{cubitt2010improving,briet2015entanglement}, specific zero-error encoding schemes (using maximally entangled state and projective measurements) are proposed, which lead to lower bounds on $\alpha_q(G)$ (and thus also lower bounds on $\alpha_*(G)$). These lower bounds have been used to separate $\alpha(G)$ and $\alpha_q(G)$ for graphs constructed from proofs of the Kochen-Specker theorem~\cite{cubitt2010improving,6466384}. Moreover, separations of $\Theta(G)$ and $\Theta_q(G)$ can be shown by constructing geometric graphs whose Haemers bounds over certain finite fields are strictly smaller than these lower bounds~\cite{Leung2012,Briet19227,briet2015entanglement}. Here, the Haemers bound $\cH(G;\F)$ of a graph $G$ over any field $\F$ is an important classical upper bound on the Shannon capacity $\Theta(G)$~\cite{Haemers1978,haemers1979some}

On the other hand, the celebrated Lov\'asz theta function $\vartheta(G)$~\cite{lovasz1979shannon}, which is another important upper bound on the Shannon capacity $\Theta(G)$, also upper bounds $\alpha_*(G)$ and $\Theta_*(G)$~\cite{Beigi2010,duan2013}. Two central open problems in quantum zero-error information theory are to determine whether $\Theta_*(G)=\vartheta(G)$ and $\alpha_q(G)=\alpha_*(G)$ for all graphs $G$. Recently, it was observed that these two equalities cannot both hold~\cite{Li2018quantum}: $\alpha_q(G)$ and $\Theta_q(G)$ are upper bounded by the fractional Haemers bounds $\cH_f(G;\R)$ and $\cH_f(G;\C)$~\cite{blasiak2013graph,bukh2018fractional} over $\R$ and $\C$ and there exist graphs $G$ such that $\cH_f(G;\R)\leq\cH(G;\R)<\vartheta(G)$. Such an observation was made during the study of quantum asymptotic spectra of graphs, an important set of graph parameters which provide dual characterizations for the Shannon capacity and its quantum variants, initialized by Zuiddam~\cite{zuiddam2018asymptotic,phd}.

\subsection{Commuting quantum independence number: known and our results}
The central problem in the theory of nonlocal games is to understand the differences between the two different models of quantum mechanics, the (finite-dimensional) tensor-product model and the commuting-operator model. A recent breakthrough (preprint) of Ji, Natarajan, Vidick, Wright and Yuen shows that these two models are not equivalent, which refutes the celebrated Connes' embedding conjecture~\cite{JNVWY}.

Among graph-theoretic nonlocal games, the commuting quantum chromatic number and commuting quantum homomorphism have been extensively studied in the past decade. These games are defined in terms of the existence of perfect commuting quantum strategies for the graph coloring and homomorphism games, respectively~\cite{paulsen2015,PAULSEN2016,ORTIZ2016,synchronous}. In the seminal paper~\cite{PAULSEN2016}, a semidefinite programming (SDP) hierarchy is proposed which converges to the commuting quantum chromatic number. They also propose a new lower bound, the tracial rank, of the commuting quantum chromatic number. The tracial rank can be viewed as a $C^*$-algebraic generalization of the projective rank, which was introduced in~\cite{6466384} for lower bounding the quantum chromatic number.\footnote{In this paper, we evaluate the tracial rank and the projective rank on the complement of a graph $G$.} Tracial noncommutative polynomial optimization can be used to unify existing bounds and obtain SDP hierarchies of bounds on the commuting quantum independence number~\cite{gribling2018bounds}.

In this paper, we further extend recent results on quantum independence numbers to commuting quantum independence numbers. In particular, we prove that several known bounds on the quantum independence number and quantum Shannon capacity are also upper bounds on the commuting quantum independence number and commuting quantum Shannon capacity.

We first establish a characterization of the commuting quantum Shannon capacity, defined as
\[
\Theta_{qc}(G)=\lim_{k\to\infty}\sqrt[k]{\alpha_{qc}(G^{\boxtimes k})},
\]
in terms of the asymptotic spectrum of graphs~\cite{zuiddam2018asymptotic,phd}. This can be done by exhibiting desired properties of the commuting quantum homomorphism. Such characterizations indicate that, to compute the Shannon capacity and its quantum variants, it is important to find graph parameters that are additive (with respect to disjoint union), multiplicative (with respect to strong product), normalized (the value of $\overline{K}_d$ is $d$ for any $d\in\N$) and monotonic (with respect to the corresponding homomorphism). Note that almost all well-known upper bounds on the Shannon capacities satisfy these four properties, including the Lov\'asz theta function $\vartheta(G)$, fractional Haemers bound $\cH_f(G;\F)$ over any field $\F$, the projective rank $\overline{\xi}_f(G)$ and the fractional clique cover number $\overline{\chi}_f(G)$. We prove that the tracial rank $\overline{\xi}_{tr}(G)$ is another such graph parameter.

Our main contribution is a tracial version of the fractional Haemers bound (over $\C$) that upper bounds the commuting quantum independence number of graphs. We call this new bound the tracial Haemers bound. The bound is motivated by the study of the Shannon capacity (resp.~quantum Shannon capacity) of graphs, where the fractional Haemers bound over any field (resp.~over $\C$ and $\R$) and Lov\'asz theta function are incomparable elements in the asymptotic spectrum (resp.~quantum asymptotic spectrum) of graphs. We show that the tracial Haemers bound is an element of the commuting quantum asymptotic spectrum of graphs, and it is incomparable with the Lov\'asz theta function.

Our tracial Haemers bound is defined by generalizing a coordinate-free definition of the fractional Haemers bound due to Lex Schrijver~\cite[Remark after Prop.~7]{bukh2018fractional} to the von Neumann algebraic setting. When restricting to finite-dimensional von Neumann algebras, we show that it reduces to the fractional Haemers bound (over $\C$). We then prove that the tracial Haemers bound satisfies many properties of the fractional Haemers bound: it is additive, multiplicative, normalized and monotonic with respect to the commuting quantum homomorphism. This shows that the tracial Haemers bound is another element in the commuting quantum asymptotic spectrum of graphs. Thus, we provide a (potentially) better upper bound on the Shannon capacity and its quantum variants. We show the tracial Haemers bound of odd cycles equals their fractional clique cover number. This implies that the tracial Haemers bound and the Lov\'asz theta function are incomparable. To compare the tracial Haemers bound with fractional Haemers bound, we show that the existence of a graph $G$ for which $\cH_{tr}(G)<\cH_f(G;\C)$ would refute Connes' embedding conjecture.

We now have the following inequality which illustrates the relations between independence numbers, Shannon capacities and elements in the asymptotic spectrum of graphs:
\begin{center}
\begin{tabular}{ccccccccccccc}
$\alpha(G)$&$\leq$&$\alpha_q(G)$&$\leq$&$\alpha_{qc}(G)$&&&&&&&&\\
$\vleq$&&$\vleq$&&$\vleq$&&$\cH_{tr}(G)$&$\leq$&$\cH_f(G;\C)$& & &\\
$\Theta(G)$&$\leq$&$\Theta_q(G)$&$\leq$&$\Theta_{qc}(G)$&$\leq$& & \vvleq & & $\leq$&$\overline{\xi}_f(G)$&$\leq$&$\overline{\chi}_f(G)$,\\
&&&&&&$\vartheta(G)$&$\leq$&$\overline{\xi}_{tr}(G)$&&&&\\
\end{tabular}
\end{center}
where $\vartheta(G)$ is incomparable with $\cH_{tr}(G)$ and with $\cH_f(G;\C)$, and $\cH_f(G;\C)$ is not known to be comparable with $\overline{\xi}_{tr}(G)$. Moreover, our tracial Haemers bound provides a unified way to describe almost all the above elements in the asymptotic spectrum of graphs (except the Lov\'asz theta function, cf.~\cref{prop: finite dim C^* algebra,cor: abelian C*-algebra),prop: upper bound tracial rank}).

To derive the desired properties of the tracial Haemers bound, we use an infinite-dimensional version of the rank-nullity theorem. This theorem can be established using an explicit representation of the $C^*$-algebra generated by two projections~\cite{twoprojections}, which plays an important role in our proofs and has also been used in the study of tracial rank~\cite{synchronous}. As a byproduct, we prove that the inertia bound for the independence number~\cite{inertia} and the quantum independence number~\cite{WE18,qinertia}, is also an upper bound for the commuting quantum independence number. For this, we first upper bound the commuting quantum independence number by the tracial packing number, an infinite-dimensional generalization of the projective packing number introduced in~\cite{robersonthesis}. Then we upper bound the tracial packing number by the inertia bound using a characterization of the largest dimension of totally isotropic subspaces of an operator by allowing the ancillary system to be a von Neumann algebra with a tracial state. Although the inertia bound is only supermultiplicative (thus it fails to upper bound the commuting quantum Shannon capacity), it is incomparable with the Lov\'asz theta function and tracial Haemers bound. In particular, the inertia bound allows us to identify the commuting quantum independence number of graphs for which the inertia bound equals the classical independence number: famous examples include odd cycles, perfect graphs, Kneser graphs, Andrasfai graphs, the Petersen graph, the Clebsch graph and many others.
\medskip

The rest of this paper is organized as follows. In \cref{sec: pre} we list some necessary definitions and useful lemmas in graph theory and operator algebras, which will be useful for proving our results. In \cref{sec: asymptotic spectra} we establish asymptotic spectral characterizations for the $C^*$-algebraic and commuting quantum Shannon capacity of graphs, and prove that the tracial rank is an element in the commuting quantum asymptotic spectrum of graphs. In \cref{sec: tracial haemers}, we introduce the tracial Haemers bound, exhibit desired properties to show that it is an element in the commuting quantum asymptotic spectrum of graphs, and compare it with the Lov\'asz theta function and the fractional Haemers bound. In \cref{sec: inertia}, we prove the inertia bound for independence number and quantum independence number is also an upper bound for the commuting quantum independence number.

\section{Preliminaries}\label{sec: pre}
\subsection{Notation and preliminaries about graphs}\label{subsec: graph}

We work with unweighted and undirected graphs with no loops, no parallel edges and finitely many vertices. For a graph $G$, $V(G)$ denotes its vertex set and $E(G)\subseteq \{\{g,g'\}:~g,g'\in V(G)~\text{with}~g\neq g'\}$ denotes its edge set. For $g\in V(G)$, let $N_G(g)=\{g'\in V(G):~\{g,g'\}\in E(G)\}$ be the set of vertices which is adjacent to $g$. The complement graph of $G$ is the graph $\overline{G}=(V(G),E(\overline{G}))$, where $E(\overline{G})=\{\{g,g'\}: g,g'\in V(G)~\text{with}~g\neq g' \text{ and } \{g,g'\}\not\in E(G)\}$. For $d \in \N$, let $[d]:=\{1,\dots,d\}$ and $K_d$ denotes the complete graph on $d$ vertices with edge set $\{\{i,j\}:~i,j\in[d]~\text{with}~i\neq j\}$.
By convention let $K_0=(\emptyset,\emptyset)$ and $K_1=([1],\emptyset)$. Given two graphs $G$ and $H$, their {\em disjoint union}  is the graph $G\sqcup H$ with vertex set $V(G)\cup V(H)$ and edge set $E(G)\cup E(H)$. The {\em strong product} of two graphs $G$ and $H$ is the graph $G\boxtimes H$ with vertex set $V(G)\times V(H)$ and edge set
\[
\begin{split}
E(G\boxtimes H)=\big\{ \{(g,h),(g',h')\}:~&\{g,g'\}\in E(G)~\text{and}~\{h,h'\}\in E(H),\\
&~\text{or}~g=g'\in V(G)~\text{and}~\{h,h'\}\in E(H),\\
&~\text{or}~\{g,g'\}\in E(G)~\text{and}~h=h'\in V(H) \big\}.
\end{split}
\]
A {\em graph homomorphism} from $G$ to $H$ is an edge-preserving vertex map $f:V(G)\to V(H)$, i.e., a map such that $\{g,g'\}\in E(G)$ implies $\{f(g),f(g')\}\in E(H)$. The notation  $G\to H$ is used to denote that  there exists  a graph homomorphism from $G$ to $H$ and the notation $G\leq_{loc} H$ means   $\overline{G}\to \overline{H}$. The map $f:V(G)\to V(H)$ is an isomorphism if it is a bijection that preserves edges and non-edges.
Graph isomorphism is an equivalence relation on the set of graphs and we let $\{\text{Graphs}\}$ denote the set of isomorphism classes of graphs. Note that the disjoint union $\sqcup$ is an associative and commutative binary operation (addition) on $\{\text{Graphs}\}$ with identity element $K_0$, and the strong product $\boxtimes$ is an associative and commutative binary operation (multiplication) on $\{\text{Graphs}\}$ with identity element $K_1$. Thus $\cG=(\{\text{Graphs}\},\sqcup,\boxtimes,K_0,K_1)$ is a \emph{semiring}, called the semiring of graphs. The relation $\leq_{loc}$ defines a preorder on $\cG$, i.e.,~it satisfies $G\leq_{loc} G$, and $G\leq_{loc} H$, $H\leq_{loc} K$ implies $G\leq_{loc} K$ for any $G,H,K\in\cG$. The following notions of semiring homomorphism and asymptotic spectrum of graphs play a central role in this paper.

\begin{definition}[Asymptotic spectrum of graphs]\label{defring}
A {\em spectral point} $\phi$ is a $\leq_{loc}$-monotone semiring homomorphism from the semiring $\cG$ to the semiring $\R_{\geq 0}$. In particular, it satisfies the following conditions: For $G,H\in\{\emph{Graphs}\}$:
\begin{enumerate}[label=\upshape(\roman*)]
\item $\phi(\overline{K}_1)=1$,
\item $\phi(G\boxtimes H)=\phi(G)\phi(H)$,
\item $\phi(G\sqcup H)=\phi(G)+\phi(H)$,
\item $G\leq_{loc} H$ implies $\phi(G)\leq_{loc}\phi(H)$.
\end{enumerate}
The {\em asymptotic spectrum of graphs} $\bX(\cG,\leq_{loc})$ is the set of all spectral points.

\end{definition}

\begin{remark}\label{remark: zero to zero}
A semiring homomorphism is also required to map $K_0$ to $0$. This may not be well-defined for some explicit maps $\phi:\cG \to \R_{\geq 0}$. By convention we set $\phi(K_0)=0$ for all $\phi\in \bX(\cG,\leq_{loc})$.
\end{remark}

Known elements in $\bX(\cG,\leq_{loc})$ include the Lov\'asz theta function $\vartheta$~\cite{lovasz1979shannon}, the fractional Haemers bound $\cH_f(G;\F)$ over any field $\F$ ~\cite{blasiak2013graph,bukh2018fractional}, the projective rank $\overline\xi_f(G)$~\cite{manvcinska2016quantum}
and the fractional clique cover number $\overline{\chi}_f(G)$ (cf.~\cite[Eq. (67.112)]{schrijver2003combinatorial}).

The following property  of the theta function $\vartheta$, introduced in~\cite{cubitt2014bounds}, will be useful for us.

\begin{proposition}\label{prop: preorder definition for lovasz theta}
Given two graphs $G$ and $H$, we write $G\leq_B H$ if there exist an integer $d\in\N$ and  vectors $w\neq 0$, $w_{g,h}\in\C^d$ for every $g\in V(G)$ and $h\in V(H)$, satisfying the conditions:
\begin{enumerate}[label=\upshape(\roman*)]
\item $\sum_{h\in V(H)}w_{g,h}=w$ for every $g\in V(G)$,
\item $\langle w_{g,h},w_{g',h'}\rangle=0$ for all $\{g,g'\}\in E(\overline{G})$ and ($\{h,h'\}\in E(H)$ or $h=h'$),
\item $\langle w_{g,h},w_{g,h'}\rangle=0$ for all $g\in V(G)$ and $h,h'\in V(H)$ with $h\neq h'$.
\end{enumerate}
Then we have $\vartheta(G)\le \vartheta(H) $ $\Longleftrightarrow $ $G\leq_B H$.
\end{proposition}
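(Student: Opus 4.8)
The plan is to prove both implications by combining two equivalent formulations of the Lov\'asz number. Write, for a graph $G$, the primal (``vector'') formulation
\[
\vartheta(G)=\max\Big\{\,\norm{\textstyle\sum_{g}x_g}^2 \ :\ \textstyle\sum_g\norm{x_g}^2=1,\ \langle x_g,x_{g'}\rangle=0\ \forall\,\{g,g'\}\in E(G)\,\Big\},
\]
together with the orthonormal-representation formulation $\vartheta(G)=\min_{c,\{u_g\}}\max_g |\langle c,u_g\rangle|^{-2}$, the minimum over unit vectors $c$ and $\{u_g\}$ with $\langle u_g,u_{g'}\rangle=0$ for all $\{g,g'\}\in E(\overline G)$. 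I will also use the elementary consequence that if $\{t_h\}$ are unit vectors with $\langle t_h,t_{h'}\rangle=0$ on $E(H)$, then the Gram matrix $(\langle t_h,t_{h'}\rangle)_{h,h'}$ has largest eigenvalue at most $\vartheta(H)$: substitute $z_h=v_h t_h$, with $v$ a top eigenvector, into the vector formulation of $\vartheta(H)$.

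For the implication $G\le_B H \Rightarrow \vartheta(G)\le\vartheta(H)$ I would take an optimal vector solution $\{x_g\}$ for $\vartheta(G)$ and set $z_h:=\sum_{g}w_{g,h}\otimes x_g$. A short computation, using \textup{(iii)} together with the two clauses of \textup{(ii)} (for $g\not\sim g'$ and $h=h'$, resp.\ $h\sim h'$) and the orthogonality of $x$ on $E(G)$, shows $\langle z_h,z_{h'}\rangle=0$ whenever $\{h,h'\}\in E(H)$, that $\sum_h\norm{z_h}^2=\norm{w}^2$, and---using \textup{(i)}---that $\norm{\sum_h z_h}^2=\norm{w}^2\,\vartheta(G)$. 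Normalising by $\norm{w}$ then exhibits a feasible point of the vector formulation of $\vartheta(H)$ of value $\vartheta(G)$, whence $\vartheta(H)\ge\vartheta(G)$.

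The converse is the substantial direction. Assume $\vartheta(G)\le\vartheta(H)$ and set $\lambda:=\vartheta(G)$, $\mu:=\vartheta(H)$; the case $\lambda=1$ (i.e.\ $G$ complete) makes condition \textup{(ii)} vacuous and is disposed of directly, so assume $\lambda>1$. Take an optimal orthonormal representation $\{u_g\},c$ of $G$, rescaled (by rotating each $u_g$ into a fresh coordinate) so that $\langle c,u_g\rangle=1/\sqrt\lambda$ for \emph{all} $g$, and write $u_g=\tfrac1{\sqrt\lambda}c+\sqrt{1-1/\lambda}\,\bar u_g$ with $\bar u_g\perp c$; then $\langle \bar u_g,\bar u_{g'}\rangle=-1/(\lambda-1)$ whenever $\{g,g'\}\in E(\overline G)$. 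I would seek the witness in the form $w_{g,h}=c\otimes a_h+\bar u_g\otimes b_h$ for auxiliary vectors $a_h,b_h$ in a common space, so that $\langle w_{g,h},w_{g',h'}\rangle=\langle a_h,a_{h'}\rangle+\langle \bar u_g,\bar u_{g'}\rangle\langle b_h,b_{h'}\rangle$. Condition \textup{(i)} then forces $\sum_h b_h=0$ and $w=c\otimes\sum_h a_h$; condition \textup{(iii)} fixes the off-diagonal of the Gram matrix $B:=(\langle b_h,b_{h'}\rangle)$ to be $-\langle a_h,a_{h'}\rangle$; the diagonal clause of \textup{(ii)} fixes its diagonal to $(\lambda-1)\norm{a_h}^2$; and the edge clause of \textup{(ii)} will be automatic once $\{a_h\}$ is orthogonal on $E(H)$.

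The crux is to realise these $b_h$, that is, to produce a system $\{a_h\}$ orthogonal on $E(H)$ with $\sum_h a_h\neq0$ for which the prescribed matrix $B$ is positive semidefinite and annihilates the all-ones vector $\mathbf 1$. This is exactly where $\mu\ge\lambda$ enters. Starting from an optimal vector solution $\{y_h\}$ of $\vartheta(H)$ (so $\sum_h\norm{y_h}^2=1$ and $\norm{\sum_h y_h}^2=\mu$), I would set $a_h:=y_h\oplus r_h$ with the $r_h$ mutually orthogonal and $\norm{r_h}^2=\tfrac{\mu-\lambda}{\lambda-1}\norm{y_h}^2\ge0$. A direct computation collapses the prescribed $B$ to $B=\mu D-P_y$, where $P_y$ is the Gram matrix of $\{y_h\}$ and $D=\mathrm{diag}(P_y)$. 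Positivity $B\succeq0$ is precisely the eigenvalue bound from the first paragraph applied to the unit vectors $y_h/\norm{y_h}$ (orthogonal on $E(H)$), and $B\mathbf 1=0$ follows since $\mathbf 1^{\!\top}B\,\mathbf 1=\mu-\norm{\sum_h y_h}^2=0$ together with $B\succeq0$; any Gram factorisation $B=(\langle b_h,b_{h'}\rangle)$ then satisfies $\sum_h b_h=0$ automatically, completing the witness (note $w=c\otimes(\sum_h a_h)\neq0$ because $\sum_h y_h\neq0$). I expect the main difficulty to be exactly this simultaneous fulfilment of \textup{(i)}--\textup{(iii)}: the tension is that \textup{(i)} demands $\sum_h b_h=0$ while the diagonal of \textup{(ii)} demands the $b_h$ nonzero, and it is resolved only by the auxiliary components $r_h$, whose norms are tuned by the slack $\mu-\lambda$ so that $B$ becomes the manifestly admissible matrix $\mu D-P_y$.
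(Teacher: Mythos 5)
The paper does not actually prove this proposition; it imports it from \cite{cubitt2014bounds}, so there is no in-paper argument to compare against. Your proof is correct and self-contained, and it follows essentially the route of the cited reference: I checked the forward direction via $z_h=\sum_g w_{g,h}\otimes x_g$, and in the converse the collapse $\lambda D_a-P_a=\mu D_y-P_y$, its positive semidefiniteness via the top-eigenvalue bound, and $\mathbf{1}^{*}B\mathbf{1}=0$ all work out (the only quibble is that condition (i) does not literally \emph{force} $\sum_h b_h=0$ within your ansatz --- it is a sufficient choice, which you then realize).
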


The notion of graph homomorphism can be extended to the quantum setting via nonlocal games, as introduced in~\cite{manvcinska2016quantum}.
Let $M(d\times d',\C)$ be the set of all $d\times d'$ matrices over $\C$. Let $M(d,\C)= M(d\times d,\C)$ for short. A matrix $E\in M(d,\C)$ is a projection if $E^2=E=E^*$, where $E^*$ denotes the conjugate transpose of $E$. Let $I_d$ be the $d\times d$ identity matrix.

\begin{definition}[Quantum homomorphism~\cite{manvcinska2016quantum}]\label{def: quantum hom}
We say there is a {\em quantum homomorphism} from a graph $G$ to to a graph $H$, denoted as $G\Qto H$, if, for some $d \in \N$, there exist projections $E_{g,h}\in M(d,\C)$  for all $g\in V(G)$ and $h\in V(H)$ that satisfy
\begin{enumerate}[label=\upshape(\roman*)]
\item $\sum_{h\in V(H)}E_{g,h}=I_d$ for any $g\in V(G)$,
\item $E_{g,h}E_{g',h'}=0$ if $\{g,g'\}\in E(G)$ and ($\{h,h'\}\in E(\overline{H})$ or $h=h'\in V(H)$).
\end{enumerate}
In addition we denote $G\le_q H$ if $\overline G \Qto \overline H$.
\end{definition}

Note that a graph homomorphism $f: V(G)\to V(H)$ defines a quantum homomorphism (with $d=1$) by setting $E_{g,h}=\delta_{f(g),h}$.
Hence,  $\overline{G}\to \overline{H}$ implies $\overline{G}\Qto \overline{H}$ and thus
$G\leq_{loc} H$ implies $G\leq_q H$. $\leq_q$ is also a preorder on $\cG$. We can analogously define
\[
\bX(\cG,\leq_{q}):=\{\phi:\cG\to \R_{\geq 0}:~\phi~\text{is a}~\leq_{q}\text{-monotone semiring homomorphism}\}
\]
as the {\em quantum asymptotic spectrum of graphs}, this notion was first studied in~\cite{Li2018quantum}. By the above observations it follows that $\bX(\cG,\leq_{q}) \subseteq \bX(\cG,\leq_{loc})$. Known elements in $\bX(\cG,\leq_{q})$ include the Lov\'asz theta function $\vartheta$~\cite{lovasz1979shannon}, the fractional Haemers bound $\cH_f(G;\C)$~\cite{blasiak2013graph,bukh2018fractional}\footnote{It has been shown in~\cite{Li2018quantum} that $\cH_f(G;\C)=\cH_f(G;\R)$ for any graph $G$.} and the projective rank $\overline{\xi}_f(G)$~\cite{manvcinska2016quantum}. It remains unknown whether there are infinite many elements in $\bX(\cG,\leq_{q})$.

\subsection{A dual characterization for the (quantum) Shannon capacity of graphs}

The main motivation  to study the (quantum) asymptotic spectrum of graphs comes from its relevance to the study of (quantum versions of) independence numbers and the Shannon capacity of graphs.
Recall that  $\alpha(G)$ denotes the independence number of $G$, defined as the largest cardinality of an independent set of vertices of $G$, or, equivalently,  $\alpha(G)=\max\{d:~\overline{K}_d\leq_{loc} G\}$. Recall also that  $\overline{\chi}(G)$ is the clique cover number of $G$ (and equals the chromatic number $\chi(\overline{G})$ of $\overline{G}$), which can be equivalently defined as $\overline{\chi}(G)=\min\{d:~G\leq_{loc} \overline{K}_d\}$.
The  inequalities
$\alpha (G)\le \vartheta(G)\le \chi(\overline G)$ are known as the sandwich theorem~\cite{doi:10.1137/1.9781611970203,knuth1994sandwich}.
It follows directly from the definitions that for $\phi\in\bX(\cG,\leq_{loc})$ the sandwich theorem holds:
\[
\alpha(G)\leq\phi(G)\leq\overline{\chi}(G)\quad \text{ for any graph } G.
\]
In addition, one can  lift the sandwich theorem to asymptotic quantities. Let
\[
\Theta(G):=\sup_{k}\sqrt[k]{\alpha(G^{\boxtimes k})}=\lim_{k\to\infty}\sqrt[k]{\alpha(G^{\boxtimes k})}
\]
be the {\em Shannon capacity} of $G$~\cite{MR0089131}. It is proved by Lov\'asz~\cite{10.1016/0012-365X(75)90058-8} (see also~\cite[Theorem 67.17]{schrijver2003combinatorial}) that
\[
\overline{\chi}_f(G)=\inf_{k}\sqrt[k]{\overline{\chi}(G^{\boxtimes k})}=\lim_{k\to\infty}\sqrt[k]{\overline{\chi}(G^{\boxtimes k})}.
\]
As an application of the definitions, for any $\phi\in\bX(\cG,\leq_{loc})$, we have
 \[
\Theta(G)\leq\phi(G)\leq\overline{\chi}_f(G).
\]
Surprisingly, Zuiddam~\cite{zuiddam2018asymptotic}  proved a stronger result, using Strassen's theory of asymptotic spectra (cf.~\cite{strassen1988asymptotic,phd}):

\begin{theorem}[\cite{zuiddam2018asymptotic}] \label{thm: strassen dual}
For any graph $G$ we have
\[
\Theta(G)=\min\left\{\phi(G):~\phi\in\bX(\cG,\leq_{loc})\right\},~\overline{\chi}_f(G)=\max\left\{\phi(G):~\phi\in\bX(\cG,\leq_{loc})\right\}.
\]
\end{theorem}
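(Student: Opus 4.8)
The plan is to prove \cref{thm: strassen dual} by recognizing it as an instance of Strassen's duality theorem for asymptotic spectra, and then verifying that the semiring of graphs $\cG=(\{\text{Graphs}\},\sqcup,\boxtimes,K_0,K_1)$ together with the preorder $\leq_{loc}$ satisfies the hypotheses of that general theory. Strassen's theorem states that for a commutative semiring $S$ equipped with a suitable preorder $\preceq$ (a so-called \emph{Strassen preorder}), the asymptotic value of any element with respect to the rate function is simultaneously the minimum and maximum over all monotone semiring homomorphisms to $\R_{\geq 0}$. Concretely, I would first recall the abstract statement from \cite{strassen1988asymptotic,phd}: if $\preceq$ is a Strassen preorder on $S$ in which some element $u$ plays the role of a normalizing unit, then the asymptotic subrank equals the minimum over the spectrum $\bX(S,\preceq)$ and the asymptotic rank equals the maximum. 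The entire proof then reduces to (a) checking the axioms that make $\leq_{loc}$ a Strassen preorder on $\cG$, and (b) identifying the asymptotic subrank and asymptotic rank with $\Theta(G)$ and $\overline{\chi}_f(G)$ respectively.

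\textbf{Verifying the Strassen axioms.} I would check that $\leq_{loc}$ satisfies: it is a preorder compatible with $\sqcup$ and $\boxtimes$ (if $G\leq_{loc}H$ and $G'\leq_{loc}H'$ then $G\sqcup G'\leq_{loc}H\sqcup H'$ and $G\boxtimes G'\leq_{loc}H\boxtimes H'$, which follow from functoriality of complementation and of graph homomorphisms under disjoint union and strong product); that $\overline{K}_1=K_1$ acts as a unit with $K_1\leq_{loc}G$ and $G\leq_{loc}$ (something) for every nonempty $G$; and crucially the Archimedean/totality condition that for any two nonzero $G,H$ there is an $n$ with $G\leq_{loc}H^{\boxtimes n}$ (equivalently $\overline{G}\to\overline{H}^{\boxtimes n}$), which holds because $\overline{G}$ maps into a large enough complete-graph power or clique-cover power of $\overline{H}$. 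The normalization $\overline{K}_1$ and the fact that $\overline{K}_d=\overline{K}_1^{\sqcup d}$ give the needed generating element. These are the standard bookkeeping conditions, and I expect them to be routine once the definitions are unwound.

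\textbf{Identifying subrank and rank with $\Theta$ and $\overline{\chi}_f$.} The substantive combinatorial input is matching the abstract asymptotic quantities to the concrete graph parameters. For the subrank: the maximal $d$ with $\overline{K}_d\leq_{loc}G$ is exactly $\alpha(G)$ (as noted in the excerpt, $\alpha(G)=\max\{d:\overline{K}_d\leq_{loc}G\}$), so the asymptotic subrank is $\lim_k\sqrt[k]{\alpha(G^{\boxtimes k})}=\Theta(G)$. For the rank: the minimal $d$ with $G\leq_{loc}\overline{K}_d$ is $\overline{\chi}(G)$, so the asymptotic rank is $\lim_k\sqrt[k]{\overline{\chi}(G^{\boxtimes k})}$, which by Lov\'asz's theorem \cite{10.1016/0012-365X(75)90058-8} equals $\overline{\chi}_f(G)$. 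Substituting these identifications into Strassen's duality theorem yields precisely the two equalities in the statement.

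\textbf{Main obstacle.} The genuinely nontrivial ingredient is not any one of these verifications individually but the fact that the general Strassen duality machine applies at all, i.e.\ that $(\cG,\leq_{loc})$ really is a Strassen preorder; in particular the Archimedean property and the existence of the total preorder structure on the asymptotic quotient require care, since graphs do not a priori sit inside a totally ordered structure. The deepest content is imported wholesale from Strassen's theory (which is where the nonconstructive use of a Hahn--Banach / separation argument producing the spectral points lives), so in this paper the work is purely to confirm the hypotheses and translate the conclusion; I would flag that the crux of the argument is therefore the correct invocation of \cite{strassen1988asymptotic,phd} rather than any new estimate, and I would cite \cite{zuiddam2018asymptotic} for the detailed verification that graphs fit this framework.
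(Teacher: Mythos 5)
Your proposal takes the same route as the paper, which does not prove this theorem from scratch but cites it from Zuiddam and remarks that it follows from the abstract duality theorem for preordered semirings (\cite[Theorem 2.13, Corollary 2.14--2.15]{phd}) once the properties collected in \cref{prop: strassen preorder for loc and q} are verified; your identifications of the asymptotic subrank with $\Theta(G)$ via $\alpha(G)=\max\{d:\overline{K}_d\leq_{loc}G\}$ and of the asymptotic rank with $\overline{\chi}_f(G)$ via Lov\'asz's theorem are exactly the intended ones.

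One concrete slip: you state the Archimedean axiom as ``for any two nonzero $G,H$ there is an $n$ with $G\leq_{loc}H^{\boxtimes n}$.'' That condition is false in general (take $H=K_1$, so $H^{\boxtimes n}=K_1$ for all $n$, while $G\leq_{loc}K_1$ forces $\overline{G}\to K_1$, i.e.\ $G$ complete). The correct axiom, as in \cref{prop: strassen preorder for loc and q}(iii), bounds $G$ by an \emph{additive} multiple of $H$, namely $G\leq_{loc}\overline{K_d}\boxtimes H$ for some finite $d$ (equivalently the $d$-fold disjoint union of $H$), which does hold whenever $H\neq K_0$. With that correction the verification goes through and the rest of your outline is sound.
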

A follow-up work~\cite{Li2018quantum} extends the above theorem to the quantum setting. The quantum independence number $\alpha_q(G)$~\cite{manvcinska2016quantum} and the quantum clique cover number $\overline{\chi}_q(G)$~\cite{Cameron07quantumchrom} are quantum analogues of $\alpha(G)$ and $\overline \chi(G)$, defined as
\[
\alpha_q(G)=\max\{d:~\overline{K}_d\leq_{q} G\} \quad \text{ and } \quad \overline{\chi}_q(G)=\min\{d:~G\leq_{q} \overline{K}_d\}.
\]
 Then the {\em quantum Shannon capacity} is defined as
\[
\Theta_q(G):=\sup_{k}\sqrt[k]{\alpha_q(G^{\boxtimes k})}=\lim_{k\to\infty}\sqrt[k]{\alpha_q(G^{\boxtimes k})}
\]
and the asymptotic {\em quantum clique cover number} is defined as
\[
\overline{\chi}_{q,\infty}(G)=\inf_{k}\sqrt[k]{\overline{\chi}_q(G^{\boxtimes k})}=\lim_{k\to\infty}\sqrt[k]{\overline{\chi}_q(G^{\boxtimes k})}.
\]
The following quantum analogue of \cref{thm: strassen dual} holds.

\begin{theorem}[\cite{Li2018quantum}] \label{thm: quantum strassen dual}
For any graph $G$ we have
\[
\Theta_q(G)=\min\left\{\phi(G):~\phi\in\bX(\cG,\leq_{q})\right\},~\overline{\chi}_{q,\infty}(G)=\max\left\{\phi(G):~\phi\in\bX(\cG,\leq_{q})\right\}.
\]
\end{theorem}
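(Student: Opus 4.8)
The plan is to follow Zuiddam's proof of \cref{thm: strassen dual} verbatim, replacing the local preorder $\leq_{loc}$ throughout by the quantum preorder $\leq_q$. Concretely, the statement is an instance of Strassen's duality theorem for asymptotic spectra (cf.~\cite{strassen1988asymptotic,zuiddam2018asymptotic,phd}): once one checks that $(\cG,\leq_q)$ is a \emph{Strassen preorder} — a commutative semiring carrying a preorder that is compatible with $+=\sqcup$ and $\cdot=\boxtimes$, restricts to the usual order on $\N\hookrightarrow\cG$, and is Archimedean — Strassen's theorem guarantees that $\bX(\cG,\leq_q)$ is nonempty and that the induced asymptotic preorder is dual to it. The two equalities then follow by recognising $\Theta_q(G)$ as the asymptotic subrank and $\overline{\chi}_{q,\infty}(G)$ as the asymptotic rank of $G$ in $(\cG,\leq_q)$.

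So the work is to verify the Strassen axioms for $\leq_q$. That $\leq_q$ is a preorder is already noted in the preliminaries. For compatibility I would prove that $\leq_q$ is additive and multiplicative, i.e.\ that $G\leq_q H$ and $G'\leq_q H'$ imply $G\sqcup G'\leq_q H\sqcup H'$ and $G\boxtimes G'\leq_q H\boxtimes H'$: given quantum homomorphisms $\overline G\Qto\overline H$ and $\overline{G'}\Qto\overline{H'}$ realised by projection systems $\{E_{g,h}\}$ and $\{F_{g',h'}\}$, one builds the required systems by block-diagonal direct sums for the disjoint union and by tensor products $E_{g,h}\otimes F_{g',h'}$ for the strong product, checking the orthogonality relations of \cref{def: quantum hom} against the edges of the (complements of the) product graphs; these are exactly the properties established in~\cite{manvcinska2016quantum,Li2018quantum}. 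The restriction-to-$\N$ axiom amounts to $\overline K_n\leq_q \overline K_m\iff n\le m$: the forward implication follows from the classical inclusion $K_n\to K_m$ (so $\overline K_n\leq_{loc}\overline K_m\Rightarrow\overline K_n\leq_q\overline K_m$), while the reverse is a rank count — a quantum homomorphism $K_n\Qto K_m$ gives projections $E_{g,h}\in M(d,\C)$ with $\sum_h E_{g,h}=I_d$ for each $g\in[n]$ and $E_{g,h}E_{g',h}=0$ for $g\ne g'$, so $\sum_{g,h}\rk E_{g,h}=nd$ while for each fixed $h$ orthogonality gives $\sum_g \rk E_{g,h}\le d$, whence $nd\le md$ and $n\le m$. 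Finally the Archimedean axiom reduces to the classical one: for nonempty $H$ and any $G$, mapping the $i$-th vertex of $\overline G$ to a fixed vertex in the $i$-th copy of $H$ inside $\overline{|V(G)|\cdot H}$ is edge-preserving (distinct copies are complete in the complement of a disjoint union), so $G\leq_{loc}|V(G)|\cdot H$ and hence $G\leq_q|V(G)|\cdot H$.

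With the axioms in place, Strassen's theorem yields $G\lesssim_q H\iff \phi(G)\le\phi(H)$ for all $\phi\in\bX(\cG,\leq_q)$, where $\lesssim_q$ denotes the asymptotic version of $\leq_q$. The easy half of each equality is then immediate from the homomorphism properties of spectral points: if $\phi\in\bX(\cG,\leq_q)$ then $\phi(\overline K_d)=d$ and $\phi(G^{\boxtimes k})=\phi(G)^k$, so $\overline K_{\alpha_q(G^{\boxtimes k})}\leq_q G^{\boxtimes k}$ gives $\alpha_q(G^{\boxtimes k})\le\phi(G)^k$ and hence $\Theta_q(G)\le\phi(G)$; dually $G^{\boxtimes k}\leq_q \overline K_{\overline\chi_q(G^{\boxtimes k})}$ gives $\phi(G)\le\overline{\chi}_{q,\infty}(G)$. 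For the reverse (existence of a minimiser, resp.\ maximiser) I would identify $\Theta_q(G)=\lim_k\alpha_q(G^{\boxtimes k})^{1/k}$ with the asymptotic subrank $\sup\{d:\overline K_d\lesssim_q G\}$ and $\overline{\chi}_{q,\infty}(G)$ with the asymptotic rank $\inf\{d:G\lesssim_q\overline K_d\}$ — using that $\alpha_q$ and $\overline\chi_q$ are super-/sub-multiplicative and additive over disjoint unions — and then read off from the duality that these equal $\min_\phi\phi(G)$ and $\max_\phi\phi(G)$.

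The main obstacle will be the compatibility axiom, and within it the multiplicativity of $\leq_q$ under the strong product: because complementation does not commute with $\boxtimes$, one must verify that the tensor-product projection system respects the edge relations of $\overline{H\boxtimes H'}$ rather than those of a strong product of complements, which is exactly where the definition of quantum homomorphism has to be used carefully. The normalization lower bound ($\overline K_n\not\leq_q \overline K_m$ for $n>m$) is the other genuinely quantum input, but the rank-counting argument above settles it cleanly. Everything else transfers unchanged from the classical proof, since $\leq_{loc}$ implies $\leq_q$ lets the classical Archimedean property and the classical inclusions be reused verbatim.
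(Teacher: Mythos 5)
Your proposal is correct and follows essentially the same route the paper indicates for this theorem: it is obtained by invoking the abstract duality theorem for preordered semirings (Strassen/Zuiddam, \cite[Theorem 2.13, Corollaries 2.14--2.15]{phd}) once the Strassen-preorder axioms of \cref{prop: strassen preorder for loc and q} are verified for $\leq_q$, which is exactly what \cite{Li2018quantum} does and what you carry out (your direct-sum/tensor-product constructions, the rank count for $\overline{K}_n\leq_q\overline{K}_m\Rightarrow n\leq m$, and the reduction of the Archimedean axiom to the classical one are all sound). The only cosmetic difference is that the paper's analogous verification for the $qc$/$C^*$ preorders derives the normalization axiom from monotonicity of the Lov\'asz theta function rather than your elementary rank-counting argument; both are valid.
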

We point out that \cref{thm: strassen dual,thm: quantum strassen dual} follow from an abstract theorem for preordered semirings (cf.~\cite[Theorem 2.13, Corollary 2.14-2.15]{phd}). For our purpose, it is sufficient to collect that $\leq_{loc}$ and $\leq_q$ satisfy the following properties:
\begin{proposition}\label{prop: strassen preorder for loc and q}
Let $t\in\{loc,q\}$. The preorder $\leq_t$ satisfies the following properties:
\begin{enumerate}[label=\upshape(\roman*)]
\item $d\leq d'$ in $\N$ \ $\Longleftrightarrow$\  $\overline{K}_d\leq_t\overline{K}_{d'}$ in $\cG$.
\item If $G\leq_t H$ and $K\leq_t L$, then $G\sqcup K\leq_t H\sqcup L$ and $G\boxtimes K\leq_t H\boxtimes L$.
\item For every $G,H\in\cG$ with $H\neq K_0$, there exists a finite $d \in \N$, such that $G \leq_t \overline{K_d} \boxtimes H$.
\end{enumerate}
\end{proposition}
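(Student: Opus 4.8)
The goal is to verify that the preorders $\leq_{loc}$ and $\leq_q$ satisfy properties (i)--(iii) of \cref{prop: strassen preorder for loc and q}, which are exactly the hypotheses needed to invoke the abstract Strassen duality for preordered semirings. The plan is to treat each property in turn, handling both values $t\in\{loc,q\}$ simultaneously where possible, and otherwise reducing the quantum case to the classical one.

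\begin{proof}[Proof plan]
\textbf{Property (i).} The forward direction is immediate: if $d\leq d'$, then the inclusion $[d]\hookrightarrow[d']$ is a graph homomorphism $\overline{K}_d\to\overline{K}_{d'}$ on the (edgeless) complements $K_d\to K_{d'}$, giving $\overline{K}_d\leq_{loc}\overline{K}_{d'}$, and hence $\overline{K}_d\leq_q\overline{K}_{d'}$ by the general implication $\leq_{loc}\Rightarrow\leq_q$ noted after \cref{def: quantum hom}. For the reverse direction I would argue by contraposition using a spectral point that separates the $\overline{K}_d$: since $\alpha(\overline{K}_d)=d$ and $\alpha$ is $\leq_q$-monotone (indeed $\overline{K}_e\leq_q G$ forces $e\leq\alpha_q(G)$, and one checks $\alpha_q(\overline{K}_d)=d$ because a perfect quantum strategy on an edgeless complement still forces distinct outputs), monotonicity of $\alpha_q$ under $\leq_q$ yields $d\leq d'$. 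The cleanest route is to observe that $\overline{K}_d\leq_t\overline{K}_{d'}$ gives $d=\alpha(\overline{K}_d)\leq\alpha_{t}(\overline{K}_{d'})=d'$, where $\alpha_{loc}=\alpha$ and $\alpha_{q}=\alpha_q$.

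\textbf{Property (ii).} This is the compatibility of $\leq_t$ with the two semiring operations. For $\leq_{loc}$ it follows from the standard fact that graph homomorphisms compose well with disjoint unions and strong products: given homomorphisms $\overline{G}\to\overline{H}$ and $\overline{K}\to\overline{L}$, the disjoint union of the two maps is a homomorphism $\overline{G\sqcup K}=\overline{G}\sqcup\overline{K}\to\overline{H}\sqcup\overline{L}=\overline{H\sqcup L}$ after checking that complementation interacts appropriately with the operations on the relevant factors, and similarly the product map handles the strong product. For $\leq_q$ the same structure must be realized at the level of projection systems: given projections $E_{g,h}$ witnessing $\overline{G}\Qto\overline{H}$ on $M(d,\C)$ and $F_{k,l}$ witnessing $\overline{K}\Qto\overline{L}$ on $M(d',\C)$, I would build the witness for the strong product by tensoring, setting the projections on $M(d,\C)\otimes M(d',\C)=M(dd',\C)$ equal to $E_{g,h}\otimes F_{k,l}$, and verify conditions (i)--(ii) of \cref{def: quantum hom} directly using the product structure of the edge set of $G\boxtimes K$; for the disjoint union one takes a block-diagonal (direct sum) construction. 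The bookkeeping here is routine but must be done carefully to match the orthogonality condition (ii) against the definition of $E(G\boxtimes K)$.

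\textbf{Property (iii).} This asks that for every $G$ and every $H\neq K_0$ there is a finite $d$ with $G\leq_t\overline{K_d}\boxtimes H$. Since $\leq_{loc}\Rightarrow\leq_q$, it suffices to prove the statement for $t=loc$, i.e. to exhibit a graph homomorphism $\overline{G}\to\overline{\overline{K_d}\boxtimes H}$ for $d$ large enough. The key observation is that $\overline{K_d}$ with $d$ large is ``homomorphism-universal'' in the relevant sense: taking $d=\overline{\chi}(G)=\chi(\overline G)$ already gives $G\leq_{loc}\overline{K_d}$, and since $H\neq K_0$ there is a vertex $h_0\in V(H)$, so the map $g\mapsto(f(g),h_0)$ composed with a proper clique cover $f$ of $G$ lands in $\overline{K_d}\boxtimes H$; one verifies this is a homomorphism on complements by noting that $\{(a,h_0),(a',h_0)\}$ is a non-edge of $\overline{K_d}\boxtimes H$ whenever $a\neq a'$. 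Thus $d=\overline{\chi}(G)$ works.

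\textbf{Main obstacle.} The step requiring the most care is property (ii) in the quantum case, specifically checking that the tensor-product projection system $E_{g,h}\otimes F_{k,l}$ satisfies orthogonality condition (ii) of the quantum homomorphism definition precisely along the edges of the strong product complement. The strong product edge set is a union of three types of pairs, so one must confirm that in each case at least one tensor factor vanishes by the corresponding orthogonality among the $E$'s or the $F$'s; the delicate point is matching the ``$h=h'$'' diagonal clauses against the mixed cases where one coordinate is equal and the other adjacent. Everything else reduces either to elementary graph homomorphism constructions or to the already-established implication $\leq_{loc}\Rightarrow\leq_q$.
\end{proof}
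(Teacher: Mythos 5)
The paper does not actually prove \cref{prop: strassen preorder for loc and q}; it is asserted with a pointer to the literature, and the closest in-paper analogue is the detailed proof of \cref{prop: strassen preorder} for $t\in\{qc,C^*\}$. Your overall architecture (reduce to $t=loc$ where possible, tensor projection systems for $\boxtimes$, use a monotone invariant for the reverse direction of (i), and take $d=\overline{\chi}(G)$ for (iii)) matches that proof, and your treatments of (i) forward, (ii) for $\boxtimes$, and (iii) are essentially correct --- indeed for (iii) you supply more detail than the paper does.

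There are, however, two concrete problems in your treatment of property (ii) for the disjoint union. First, the identity $\overline{G\sqcup K}=\overline{G}\sqcup\overline{K}$ is false: the complement of a disjoint union is the \emph{join}, containing every pair $\{g,k\}$ with $g\in V(G)$, $k\in V(K)$. The desired conclusion still holds, but only because such cross pairs map to pairs $\{f(g),f'(k)\}$ with $f(g)\in V(H)$, $f'(k)\in V(L)$, which are automatically edges of $\overline{H\sqcup L}$; this is exactly the case analysis you cannot skip. Second, and more seriously, a block-diagonal (direct sum) construction for the quantum case fails the completeness condition: if $A_{u,v}=E_{u,v}\oplus 0$ for $u\in V(G)$, then $\sum_{v}A_{u,v}=I_d\oplus 0\neq I$. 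The working construction (used in the paper's proof of \cref{prop: strassen preorder}(iv)) places everything in $M(d,\C)\otimes M(d',\C)$ and sets $A_{u,v}=E_{u,v}\otimes I$ for $u\in V(G),v\in V(H)$, $A_{u,v}=I\otimes F_{u,v}$ for $u\in V(K),v\in V(L)$, and $A_{u,v}=0$ otherwise; one then checks that the cross pairs impose no orthogonality constraint between a vertex of $V(H)$ and a vertex of $V(L)$. Finally, in (i) your ``cleanest route'' $d=\alpha(\overline{K}_d)\leq\alpha_t(\overline{K}_{d'})=d'$ quietly assumes $\alpha_q(\overline{K}_{d'})=d'$, which is essentially the statement being proved; it does hold, either by the trace-counting argument you allude to (for each $j$ the projections $E_{1,j},\dots,E_{e,j}$ are mutually orthogonal, so summing traces gives $e\cdot d\leq d'\cdot d$) or, as the paper does for $t\in\{qc,C^*\}$, by monotonicity of $\vartheta$ under $\leq_t$ via \cref{eq: preorder relation} and \cref{prop: preorder definition for lovasz theta} --- but one of these must be made explicit.
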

\begin{remark}{\rm
These quantum analogues of $\alpha(G)$ and $\overline \chi(G)$ are defined by allowing (finite-dimensional) quantum strategies in corresponding nonlocal games. Another possible quantization is to consider entanglement-assisted zero-error information transmission, which is also studied in~\cite{Li2018quantum}. A major open problem in zero-error quantum information theory is whether these two quantum notions are identical. In this paper we only focus on the nonlocal game setting.}
\end{remark}

\subsection{Notation and preliminaries about operator algebras}\label{subsec: operator algebra}
We present necessary preliminary knowledge about operator algebra. See~\cite{blackadar2006operator,kadison1986fundamentalsI} for a more detailed introduction. Let $\cH$ be a Hilbert space with inner product $\langle\cdot,\cdot\rangle_\cH$. We denote the norm on $\cH$ as $\norm{\cdot}_\cH:=\sqrt{\langle\cdot,\cdot\rangle_\cH}$ or simply $\norm{\cdot}$ if there is no ambiguity. Let $B(\cH)$ be the space of bounded operators from $\cH$ to $\cH$, equipped with the operator norm $\norm{X}_{B(\cH)}=\sup\{\norm{X\eta}_\cH:~\eta\in\cH,~\norm{\eta} \leq 1\}$. We denote the identity operator in $B(\cH)$ by $I_{\cH}$. We omit the subscript if it is clear in the context.
For $X\in B(\cH)$, we write $\ran(X)=\{X\eta:~\eta\in\cH\}$ (resp. $\cl(\ran(X))$) denote the range of $X$ (resp. the closure of $\ran(X)$) and $\ker(X)=\{\eta\in\cH:~X\eta=0\}$ denotes the kernel of $X$.
 The right support projection $P_X$ of $X$ is the projection onto $\ker(X)^\perp$ and the left support projection $Q_X$ of $X$ is the projection onto $\cl(\ran(X))$. The right and left support projections satisfy the following relations:
\begin{equation}\label{eq: left and right support projection}
\begin{split}
XP_X=X~\text{and}~P_XY=0~\text{whenever}~XY=0,\\
Q_XX=X~\text{and}~YQ_X=0~\text{whenever}~YX=0.
\end{split}
\end{equation}
A sequence of operators $\{T_n\}_{n\in \N}$ converges to $T$ in the operator norm topology if $\norm{T_n-T}_{B(\cH)} \to 0$. We denote this using $\displaystyle \lim_{n\to\infty} T_n=T$. A sequence of operators $\{T_n\}_{n\in \N}$ converges to $T$ in strong operator topology (in short, SOT) if for any $\eta\in \cH$, we have $\norm{T_n\eta - T\eta}_{\cH}\to 0$. We write $\displaystyle \sotlim_{n\to\infty} T_n=T$.  In general, the strong operator topology is weaker than the operator norm topology in $B(\cH)$, but they are the same if $\cH$ is finite dimensional.

A $C^*$-algebra $\cA$ is a norm-closed $*$-subalgebra (closed under adjoint) of $B(\cH)$ for some Hilbert space $\cH$. We denote the identity element of $\cA$ by $I_{\cA}$ (or $I$ if there is no confusion). A {\em state} $\tau$ on $\cA$ is a linear functional $\tau:\cA\to\C$ that is {\em positive}, i.e., $\phi(X^*X)\geq 0$ for any $X\in\cA$, and {\em unital}, i.e.,  $\phi(I_\cA)=1$.
For example, a unit vector $\eta\in \cH$ with $\norm{\eta}_{\cH}=1$ gives a state $\tau_\eta(X)=\langle\eta, X \eta\rangle$. We call such states vector states. We say a state $\tau:\cA\to \mathbb{C}$ is tracial if $\tau(XY)=\tau(YX)$ for all $X,Y\in \cA$. The normalized trace $\tr: M(d,\C)\to\C$ is a tracial state on the matrix algebra $M(d,\C)$. We specify $\Tr(X)=\sum_{i=1}^k X_{i,i}$ as the (unnormalized) trace of matrix $X$. 

A von Neumann algebra $\cM$ is a SOT-closed $*$-subalgebra of $B(\cH)$ for some $\cH$.
Since the strong operator topology is weaker than the norm topology in $B(\cH)$, every von Neumann algebra is a (unital) $C^*$-algebra. All $C^*$-algebras mentioned in this paper are unital (i.e., contain an identity element) and all mentioned von Neumann algebras are finite (i.e., admit a trace).

In contrast to $C^*$-algebras, a von Neumann algebra contains ``enough'' projections. That is, if $X\in\cM$, then both its left support projection $Q_X$ and right support projection $P_X$ are also elements of $\cM$. Moreover, a self-adjoint element $T=T^*\in \cM$  admits a spectral decomposition in $\cM$:
\[T=\int_{-\infty}^\infty\lambda d E_{\lambda}\]
where $E_{\lambda}$ is the spectral projection of $T$ onto the interval $(-\infty,\lambda]$. For a bounded Borel measurable function $f:\mathbb{R}\to \mathbb{R}$, the operator $f(T)$ by spectrum theorem is given by
\[  f(T)=\int_{-\infty}^\infty f(\lambda) d E_{\lambda}\ .\]
If $T\in B(\cH)$ is normal (i.e., $TT^*=T^*T$), its left support projection $Q_X$ and right support projection $P_X$ coincide. If furthermore $T\geq 0$, then $T(T+\eps I)^{-1}\to P_T$ as $\eps\to 0$ in the strong operator topology.

Let $P,Q\in B(\cH)$ be two projections. We denote $P\vee Q$ as the (union) projection onto the closure of $\ran(P)+\ran(Q)$ and $P\wedge Q$ as the intersection projection onto $\ran(P)\cap\ran(Q)$. We have by functional calculus that (cf.~\cite[I.5.2.1]{blackadar2006operator})
\[ \text{sot-}\lim_{n\to \infty} (P+Q)^\frac{1}{n}=P\vee Q \ ,\ \text{sot-}\lim_{n\to \infty} (PQP)^n=P\wedge Q\ .\]
If $\ran(P)\subseteq\ran(Q)$, then $PQ=QP=P$ and $P\leq Q$ in the sense that $\langle\eta,P\eta\rangle\leq\langle\eta,Q\eta\rangle$ for every $\eta\in\cH$. Moreover, by SOT-closedness, if $P$ and $Q$ belong to some von Neumann algebra $\cM$, then $P\wedge Q$ and $P\vee Q$ also belong to $\cM$.

A state $\tau:\cM\to\C$ is normal if it is weak$^*$-continuous, equivalently,
for any monotone increasing net of positive operators $\{X_\alpha\}\subseteq \cM$ with $X_\alpha \to X$, we have $\tau(X_\alpha)\to\tau(X)$. Vector states are always normal and normal states preserves the SOT-limit. As an example, for two projections $P,Q\in\cM$ and a normal state $\displaystyle\tau:\cM\to\C$ we have $\tau(P\wedge Q)=\lim_{n\to\infty}\tau((PQP)^n)$.

Let $\cA_1\subseteq B(\cH_1)$  and $\cA_2\subseteq B(\cH_2)$ be two $C^*$-algebras. Their direct sum $\cA_1\oplus \cA_2$ is again a $C^*$-algebra and similarly for two von Neumann algebras $\cM_1\subseteq B(\cH_1)$  and $\cM_2\subseteq B(\cH_2)$.
A $*$-homomorphism from $\cA_1$ to $\cA_2$ is a linear map $\pi:\cA_1\to \cA_2$ which preserves multiplication and the $*$-operation. We say $\cA_1$ and $\cA_2$ are $*$-isomorphic if there is a one-to-one and onto $*$-homomorphism from $\cA_1$ to $\cA_2$.  Let $\cA_1\odot\cA_2$ denote the the algebraic tensor product of $\cA_1$ and $\cA_2$.
The minimal $C^*$-algebraic tensor product $\cA_1\otimes_{\min} \cA_2$ is the norm closure of $\cA_1\odot \cA_2 \subseteq B(\cH_1\otimes \cH_2)$. It does not depend on the choice of embeddings $\cA_1\subseteq B(\cH_1)$ and $\cA_2\subseteq B(\cH_2)$. Let $\phi_1$ and $\phi_2$ be states on $\cA_1$ and $\cA_2$, respectively. The tensor product state $\phi_1\otimes\phi_2:\cA_1\odot \cA_2\to \mathbb{C}$ is defined by $(\phi_1\otimes \phi_2)(X_1\otimes X_2)=\phi_1(X_1)\phi_2(X_2)$ for $X_1\in\cA_1$, $X_2\in \cA_2$. It extends to a state on $\cA_1 \otimes_{\min} \cA_2$ (and also to the maximal tensor product $\cA_1 \otimes_{\max} \cA_2$, see \cite[II.9.3.5]{blackadar2006operator}). This state extension is the only property needed of the $C^*$-algebraic tensor product that we use in this paper and we therefore simply write $\cA_1 \otimes \cA_2$ for $\cA_1 \otimes_{\min} \cA_2$.

For two von Neumann algebras $\cM_1\subseteq B(\cH_1)$ and $\cM_2 \subseteq B(\cH_2)$, the von Neumann algebra tensor product is $\cM_1\overline{\otimes} \cM_2$ is the closure of $\cM_1 \odot \cM_2$ in the strong operator topology, i.e., $\cM_1 \overline \otimes \cM_2 =\overline{(\cM_1\odot \cM_2)}^{\mathrm{sot}} \subseteq B(\cH_1\otimes \cH_2)$. $\cM_1 \overline \otimes \cM_2$ is also independent of the choice of embeddings $\cM_1 \subseteq B(\cH_1)$ and $\cM_2 \subseteq B(\cH_2)$. Given two normal states $\phi_1$ and $\phi_2$  on $\cM_1$ and $\cM_2$ respectively, the tensor product state $\phi_1\otimes\phi_2$ is also a normal state on $\cM_1\overline{\otimes} \cM_2$. If there is no risk of confusion, we write $\cM_1\otimes \cM_2$ for $\cM_1\overline{\otimes} \cM_2$.

\subsection{An infinite-dimensional rank-nullity theorem}
Let $\cH$ be a Hilbert space and $S,T\subseteq \cH$ be two subspaces of $\cH$. When $\cH$ is finite dimensional, we have the following well-known identity
\begin{align}
\dim(S)+\dim(T)=\dim(S+T)+\dim(S\cap T). \label{eq:rank}
\end{align}
To see the above as a statement about ranks of operators, let $P$ be the orthogonal projector on $S$, and $Q$ the orthogonal projector on $T$. Then the above identity becomes
\[
\rk(P) + \rk(Q) = \rk(P\vee Q) + \rk(P\wedge Q).
\]
Moreover, if $S$ and $T$ are the row spaces of $A$ and $B$ in $B(\cH)$, respectively, we can compute the rank of $AB^T$ by
\[
\rk(AB^T)=\dim(S)-\dim(S\cap T^\perp)=\rk(P)-\rk(P\wedge(I-Q)).
\]
In the infinite-dimensional setting, the dimension of closed subspaces can be interpreted as the trace of their corresponding projections, a similar identity can be obtained as a consequence from Kaplansky's formula of projections (c.f.~\cite[Theorem 6.1.7]{kadison1986fundamentalsII}). This identity (and related results) will be a key tool to our proofs.  

\begin{restatable}{lemma}{infrank}\label{thm:rank} Let $\cM\subseteq B(\cH)$ be a von Neumann algebra and let $P,Q\in
\cM$ be two projections.
Suppose $\tau:\cM\to \mathbb{C}$ is a normal tracial state. Then
\begin{enumerate}[label=\upshape(\roman*)]
\item $\tau(P)+\tau(Q)=\tau(P\vee Q)+\tau(P\wedge Q)$.
\item Let $Y\in\cM$ be the projection onto $\cl(\ran(QP))$. Then we have
\[
\tau(Q)\geq\tau(Y)=\tau(P)-\tau(P\wedge (I_\cM-Q)).\]
\item If additionally $P\wedge (I_\cM-Q)=0$, then $\tau(Y)=\tau(P)$. Moreover, for any $\epsilon>0$, there exists projection $P_\eps \in\cM$ such that $P_\eps\le P$, $\tau(P-P_\eps)\le \epsilon$, and $\norm{P_\eps(I_\cM-Q)}<1$.
\end{enumerate}
\end{restatable}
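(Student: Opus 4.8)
The plan is to handle the three parts in order, using three standard pillars of von Neumann algebra theory that are available because $\cM$ is a von Neumann algebra and $\tau$ is a normal tracial state: Kaplansky's parallelogram law, the polar decomposition, and the spectral calculus. For part (i) I would invoke Kaplansky's formula (cf.~\cite[Theorem 6.1.7]{kadison1986fundamentalsII}): the subprojections $P-P\wedge Q$ and $P\vee Q-Q$ are Murray--von Neumann equivalent, i.e.\ there is a partial isometry $V\in\cM$ with $V^*V=P-P\wedge Q$ and $VV^*=P\vee Q-Q$. Since $\tau$ is tracial, $\tau(V^*V)=\tau(VV^*)$, and rearranging gives $\tau(P)+\tau(Q)=\tau(P\vee Q)+\tau(P\wedge Q)$.

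For part (ii), first note $Y\le Q$: indeed $\ran(QP)\subseteq\ran(Q)$, so $\cl(\ran(QP))\subseteq\ran(Q)$ and hence $\tau(Y)\le\tau(Q)$ by positivity of $\tau$. Next, $Y$ is the left support $Q_{QP}$ of $QP$; its right support $P_{QP}$ arises from the polar decomposition $QP=W|QP|$ with $W\in\cM$ a partial isometry satisfying $W^*W=P_{QP}$ and $WW^*=Q_{QP}=Y$, so traciality yields $\tau(Y)=\tau(P_{QP})$. It then remains to identify $P_{QP}$. A direct kernel computation, using $QP\eta=0\iff P\eta\in\ran(I-Q)$, gives the orthogonal decomposition $\ker(QP)=\ker(P)\oplus(\ran(P)\cap\ran(I-Q))$, whence $\ker(QP)^\perp=\ran(P)\ominus\ran(P\wedge(I-Q))$ and therefore $P_{QP}=P-P\wedge(I-Q)$. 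Combining the three facts gives $\tau(Q)\ge\tau(Y)=\tau(P)-\tau(P\wedge(I-Q))$.

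For part (iii), the equality $\tau(Y)=\tau(P)$ is immediate from (ii) once $P\wedge(I-Q)=0$. The substantive statement is the norm bound. I would set $T=P(I-Q)P\in\cM$, a positive contraction with $TP=PT=T$, and let $E$ denote its spectral measure. The $C^*$-identity gives, for any projection $P_\eps\le P$ commuting with $T$, that $\|P_\eps(I-Q)\|^2=\|P_\eps(I-Q)P_\eps\|=\|TP_\eps\|$, so the goal reduces to cutting the spectrum of $T$ strictly below $1$ at the cost of little trace. The eigenspace of $T$ at $1$ is exactly $\ran(P)\cap\ran(I-Q)=\ran(P\wedge(I-Q))=\{0\}$, so $E(\{1\})=0$. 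Putting $F_\delta=E([1-\delta,1])$, which satisfies $F_\delta\le P$ for $\delta<1$, normality (order-continuity) of $\tau$ forces $\tau(F_\delta)\to\tau(E(\{1\}))=0$ as $\delta\downarrow 0$; I fix $\delta$ with $\tau(F_\delta)\le\eps$ and set $P_\eps:=P-F_\delta$. Then $P_\eps\le P$, $\tau(P-P_\eps)=\tau(F_\delta)\le\eps$, $P_\eps$ commutes with $T$, and by functional calculus $TP_\eps=T-TF_\delta=\int_{(0,1-\delta)}\lambda\,dE_\lambda$ has norm at most $1-\delta$, whence $\|P_\eps(I-Q)\|\le\sqrt{1-\delta}<1$.

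The main obstacle is precisely this last norm bound: the hypothesis $P\wedge(I-Q)=0$ only removes $1$ from the \emph{point} spectrum of $T$, while $1$ may still lie in the approximate spectrum, so $\|P(I-Q)\|$ can equal $1$ outright; the resolution is to trade an arbitrarily small amount of trace — controlled exactly by the order-continuity of the normal state $\tau$ together with $E(\{1\})=0$ — for a genuine spectral gap below $1$. I expect an alternative, more computational route to all three parts through the explicit two-projection representation of the $C^*$-algebra generated by $P$ and $Q$, where (away from four comparable corners) $P$ and $Q$ become the $2\times2$ blocks $\left(\begin{smallmatrix}1&0\\0&0\end{smallmatrix}\right)$ and $\left(\begin{smallmatrix}c^2&cs\\cs&s^2\end{smallmatrix}\right)$ for an angle operator; there $T$, its spectrum, and the cutoff become completely transparent, which is presumably why the paper flags that representation as its main tool.
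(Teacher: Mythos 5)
Your proof is correct and follows essentially the same route as the paper's: Kaplansky's formula for (i); for (ii), the Murray--von Neumann equivalence of the left and right supports of $QP$ (which is exactly what the polar decomposition gives) together with the kernel decomposition $\ker(QP)=\ker(P)\oplus\bigl(\ran(P)\cap\ker(Q)\bigr)$; and for (iii), a spectral cutoff of $T=P(I-Q)P$ just below $1$. The only (harmless) divergence is how the trace of the discarded spectral projection is bounded: you apply order-continuity of the normal trace to the decreasing net $E([1-\delta,1])\downarrow E(\{1\})=0$, while the paper derives the same smallness via a Chebyshev-type estimate $(1-\delta)^{n_0}\,\tau\bigl(E((1-\delta,1])\bigr)\le\tau(T^{n_0})$ for a large power $n_0$ --- both resting on the same underlying fact that the $1$-eigenspace of $T$ is $\ran(P\wedge(I-Q))=\{0\}$.
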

\begin{proof}
Recall the two projections $P,Q\in \cM$ are Murray-von Neumann equivalent, denoted by $P\sim Q$, if there exists a partial isometry $V\in\cM$ such that $V^*V=P$ and $VV^*=Q$. It follows from the property of tracial states that $\tau(P)=\tau(Q)$ if $P\sim Q$. Then (i) follows from Kaplansky's formula of projections which states that for any two projections $P,Q\in \cM$
\[P\vee Q-P \sim Q-P\wedge Q\ ,\]
where the ``$-$'' operation is interpreted in the projection lattice of equivalence classes under $\sim$ (see \cite[Chapter 2]{kadison1986fundamentalsI}). 
For (ii), it is clear that $\ran(Y)\subseteq \ran(Q)$, hence $Y\le Q$ and $\tau(Y)\le \tau(Q)$. For the equality, let $X$ denote the projector onto $\cl(\ran(PQ))$. It follows from~\cite[Proposition 6.1.6]{kadison1986fundamentalsII} that $X\sim Y$. Moreover, $\ran(X)^\perp=\cl(\ran(PQ))^\perp=\ker(QP)=\ker{P}\oplus (\ran(P) \cap \ker(Q) )$. Hence
\[1-X\sim (I_\cM-P)+P\wedge(I_\cM-Q)\ , \quad Y\sim X\sim P-P\wedge(I-Q)\ ,\]
which proves (ii) by taking trace. 

We give a detailed proof for (iii). Note that if $P\wedge (I_\cM-Q)=0$, then $\text{sot-}\lim_{n\to \infty}(P(I_\cM-Q)P)^{n}\to 0$ (monotone decreasingly) and $\tau(Y)=\tau(P)-\tau(P\wedge (I_\cM-Q))=\tau(P)$. 
For any $\epsilon>0$, there exists $n_0 \in \N$ such that
$ \tau((P(I_\cM-Q)P)^{n_0})\le \epsilon/2 $. Choose $\delta>0$ such that $(1-\delta)^{n_0}\ge \frac{1}{2}$.
Let $P_\delta$ be the spectral projection of $P(I_\cM-Q)P$ corresponding to the interval $(1-\delta, 1]$. Thus $P_\delta\leq P$ and $(1-\delta)P_\delta\leq P(I_\cM-Q)P$. This implies
\[ \frac{1}{2}\epsilon\ge \tau((P(I_\cM-Q)P)^{n_0})\ge \tau((1-\delta)^{n_0}P_\delta)= (1-\delta)^{n_0} \tau(P_\delta) .\]
Because $(1-\delta)^{n_0}\ge 1/2$, we have
\[ \tau(P_\delta)\le \frac{\tau((P(I_\cM-Q)P)^{n_0})}{(1-\delta)^{n_0}}\le \epsilon\ .\]
Note that $\ran(P_\delta)\subseteq \ran(P(I_\cM-Q)P)\subseteq \ran(P)$. Then $P_\delta\le P$ and we define the projection $P_\eps=P-P_\delta$. We have $\tau(P-P_\eps)=\tau(P_\delta)\le \epsilon$ and by the definition of $P_\delta$,
\[
\norm{P_\eps(I_\cM-Q)}^2  = \norm{P_\eps(I_\cM-Q)P_\eps}=\norm{(I_\cM-P_\delta)P(I_\cM-Q)P(I_\cM-P_\delta)}\le 1-\delta,\]
where the equality uses the fact that $P_\delta P=P_\delta$ and the inequality uses the fact that $I_\cM-P_\delta$ is the spectral projection of $P(I_\cM-Q)P$ corresponding to the interval $[0,1-\delta]$.
\end{proof}

We remark that the above lemma can be alternatively proved via the representation theorem \cite[Theorem~1.3]{twoprojections} of the universal $C^*$-algebra generated by two projections. 
This representation theorem is also used in the study of the tracial rank \cite{synchronous}.

\section{ \texorpdfstring{$C^*$}{C*}-algebraic and commuting quantum Shannon capacity and their asymptotic spectral characterizations}\label{sec: asymptotic spectra}

\subsection{ \texorpdfstring{$C^*$}{C*}-algebraic and commuting quantum homomorphisms}
We first recall the definitions of graph homomorphism via $C^*$-algebras that were considered in~\cite{ORTIZ2016}. They provide infinite-dimensional analogues of the previously mentioned notions of (quantum) graph homomorphism.

\begin{definition}\label{def: C*homomorphism}
For graphs $G$ and $H$, we say $G\Cto H$ if there exists a $C^*$-algebra $\cA(G,H)$ containing projections $E_{g,h}\in \cA(G,H)$ for every $g\in V(G)$ and $h\in V(H)$, satisfying the conditions:
\begin{enumerate}[label=\upshape(\roman*)]
\item $\sum_{h\in V(H)} E_{g,h}=I$ for any $g\in V(G)$,
\item $E_{g,h}E_{g',h'}=0$ if $\{g,g'\}\in E(G)$ and ($\{h,h'\}\in E(\overline{H})$ or $h=h'\in V(H)$).
\end{enumerate}
\end{definition}
\begin{definition}\label{def: qc homomorphism}
For graphs $G$ and $H$, we say $G\QCto H$ if there exists a $C^*$-algebra $\cA(G,H)$ equipped with a tracial state $\tau$ and  containing projections $E_{g,h}\in\cA(G,H)$ for all $g\in V(G)$ and $h\in V(H)$, satisfying the conditions:
\begin{enumerate}[label=\upshape(\roman*)]
\item $\sum_{h\in V(H)} E_{g,h}=I$ for any $g\in V(G)$,
\item $E_{g,h}E_{g',h'}=0$ if $\{g,g'\}\in E(G)$ and ($\{h,h'\}\in E(\overline{H})$ or $h=h'\in V(H)$).
\end{enumerate}
\end{definition}

Note that \cref{def: C*homomorphism} reduces to \cref{def: qc homomorphism} if $\cA(G,H)$ admits a tracial state. \cref{def: C*homomorphism,def: qc homomorphism} reduce to the usual notion of graph homomorphism if $\cA(G,H)$ has a $1$-dimensional representation, and they reduce to the quantum homomorphism if $\cA(G,H)$ has a finite-dimensional representation.
Hence we have the following chain of implications:
\begin{equation}\label{eq: preorder relation}
G\to H~\Longrightarrow~G\Qto H~\Longrightarrow~G\QCto H~\Longrightarrow~G\Cto H  ~\Longrightarrow~G\Bto H \Longleftrightarrow \vartheta (G)\le \vartheta(H),
\end{equation}
where the right most implication is shown in~\cite{ORTIZ2016} and the last equivalence follows from \cref{prop: preorder definition for lovasz theta}.

These graph homomorphisms permit to define the corresponding $C^*$-algebraic and commuting quantum analogues of the independence number, clique cover number and asymptotic spectrum.
\begin{definition}
Let $t\in\{qc,C^*\}$ and $G$ and $H$ be graphs. Define
\begin{itemize}
\item $G\leq_t H$ if $\overline{G}\tto\overline{H}$;
\item $\alpha_t(G):=\max\{d:~K_d\tto\overline{G}\}=\max\{d:~\overline{K_d}\leq_t G\}$, and $\displaystyle \Theta_t(G):=\lim_{k\to\infty}\sqrt[k]{\alpha_t(G^{\boxtimes k})}$;
\item $\overline{\chi}_t(G):=\min\{d:~\overline{G}\tto K_d\}=\min\{d:~G\leq_t\overline{K_d}\}$, and $\displaystyle\overline{\chi}_{t,\infty}(G):=\lim_{k\to\infty}\sqrt[k]{\overline{\chi}_t(G^{\boxtimes k})}$;
\item $\bX(\cG,\leq_t):=\{\phi\in\Hom(\cG,\R_{\geq 0}):~\forall\ G,H\in \cG,~G\leq_t H\Rightarrow \phi(G)\leq \phi(H)\}$.
\end{itemize}
\end{definition}
By Fekete's lemma~\cite{Fekete1923}, both $\Theta_t(G)$ and $\overline{\chi}_{t,\infty}(G)$ are well-defined and equal $\sup_{k\in\N}\sqrt[k]{\alpha_t(G^{\boxtimes k})}$ and $\inf_{k\in\N}\sqrt[k]{\overline{\chi}_t(G^{\boxtimes k})}$, respectively.
By \cref{eq: preorder relation}, we obtain the chain of inclusions:
\begin{equation}\label{eq: Asymptotic spectrum relation}
\bX(\cG,\leq_{C^*})\subseteq\bX(\cG,\leq_{qc})\subseteq\bX(\cG,\leq_q)\subseteq\bX(\cG,\leq_{loc}).
\end{equation}

Our first result shows that $\leq_{qc}$ and $\leq_{C^*}$ define two new preorders on $\cG$ and satisfy similar properties as those  in \cref{prop: strassen preorder for loc and q} for $t\in \{loc,q\}$.
\begin{proposition}\label{prop: strassen preorder}
For graphs $G,H,K,L\in\cG$ and $t\in\{qc,C^*\}$, we have
\begin{enumerate}[label=\upshape(\roman*)]
\item $G\leq_t G$,
\item If $G\leq_t H$ and $H\leq_t K$, then $G\leq_t K$,
\item $\overline{K_d} \leq_t \overline{K_{d'}}$ if and only if $d \leq d'$ in $\N$,
\item If $G \leq_t H$ and $K \leq_t L$, then $G\sqcup K \leq_t H\sqcup L$ and $G \boxtimes K \leq_t H\boxtimes L$,
\item If $H\neq K_0$, then there exists $d \in \N$ with $G \leq_t \overline{K_d} \boxtimes H$.
\end{enumerate}
\end{proposition}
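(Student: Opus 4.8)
The plan is to reduce properties (i), (v) and the forward direction of (iii) to the classical level, to establish (ii) and (iv) by a single tensor-product construction, and to finish the reverse direction of (iii) by a short operator-inequality argument. Throughout I use the defining equivalence $G\leq_t H\iff\overline G\tto\overline H$ together with the implication chain \cref{eq: preorder relation}, which gives $\overline G\to\overline H\Rightarrow\overline G\Qto\overline H\Rightarrow\overline G\QCto\overline H\Rightarrow\overline G\Cto\overline H$. The key analytic input is that if $\cA$ and $\mathcal B$ carry tracial states $\tau_\cA,\tau_{\mathcal B}$, then $\tau_\cA\otimes\tau_{\mathcal B}$ is a tracial state on $\cA\otimes\mathcal B$ (immediate from $(\tau_\cA\otimes\tau_{\mathcal B})((a\otimes b)(a'\otimes b'))=\tau_\cA(aa')\tau_{\mathcal B}(bb')$ and linearity/continuity), so every construction below stays inside the correct class $t\in\{qc,C^*\}$.

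For (i), the identity map is a graph homomorphism $\overline G\to\overline G$, hence $\overline G\tto\overline G$ by the chain, i.e.\ $G\leq_t G$. For (v), I note that $\overline{K_d}\boxtimes H$ is the disjoint union of $d$ copies of $H$ (since $\overline{K_d}$ is edgeless), so its complement contains the clique $K_d$ on the vertices $\{(i,h_0)\}_{i\in[d]}$ for any fixed $h_0\in V(H)$, these lying in distinct copies and hence being mutually adjacent in the complement. Choosing $d=|V(G)|\ge\chi(\overline G)$ yields a proper coloring, i.e.\ a graph homomorphism $\overline G\to K_d\subseteq\overline{\overline{K_d}\boxtimes H}$, which the chain upgrades to $\overline G\tto\overline{\overline{K_d}\boxtimes H}$; equivalently this is the $t=loc$ statement of \cref{prop: strassen preorder for loc and q}(iii), already implying the $qc$ and $C^*$ cases. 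The forward direction of (iii) is identical: for $d\le d'$ the inclusion is a homomorphism $K_d\to K_{d'}$, hence $K_d\tto K_{d'}$, i.e.\ $\overline{K_d}\leq_t\overline{K_{d'}}$.

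The substance of the proposition lies in (ii) and (iv), both handled by the same device. For transitivity (ii), given $\overline G\tto\overline H$ via projections $E_{g,h}\in\cA$ and $\overline H\tto\overline K$ via $F_{h,k}\in\mathcal B$, I set $D_{g,k}:=\sum_{h\in V(H)}E_{g,h}\otimes F_{h,k}\in\cA\otimes\mathcal B$. Since $\sum_h E_{g,h}=I$ forces $\{E_{g,h}\}_h$ to be mutually orthogonal (each summand of $\sum_{h'\ne h}E_{g,h}E_{g,h'}E_{g,h}=0$ is positive, hence vanishes), each $D_{g,k}$ is a projection and $\sum_k D_{g,k}=\sum_h E_{g,h}\otimes(\sum_k F_{h,k})=I$. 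For (iv) with $E_{g,h}\in\cA$ and $F_{k,l}\in\mathcal B$, the strong product uses $D_{(g,k),(h,l)}:=E_{g,h}\otimes F_{k,l}$, while the disjoint union uses $D_{g,h}:=E_{g,h}\otimes I$, $D_{k,l}:=I\otimes F_{k,l}$ with all cross-terms $0$; in each case the sum condition is a one-line check. The main obstacle, and where the combinatorics must be tracked carefully, is the orthogonality condition: one expands $D_{x,y}D_{x',y'}$, translates the edge hypotheses through the complement identities (a pair is an edge of $\overline{G\boxtimes K}$ exactly when $\{g,g'\}\in E(\overline G)$ or $\{k,k'\}\in E(\overline K)$, and $\overline{G\sqcup K}$ is the join of $\overline G,\overline K$), and then verifies that every surviving term is killed by one of the input homomorphisms: for the strong product, by whichever of the two alternatives holds; for transitivity, the terms not annihilated by the first homomorphism have $\{h,h'\}\in E(\overline H)$ and are then annihilated by the second; for the disjoint union, the two factors of any cross pair are never simultaneously nonzero under the required hypothesis. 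The tracial case carries through verbatim with $\tau_\cA\otimes\tau_{\mathcal B}$.

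Finally, for the reverse direction of (iii), since $\QCto\Rightarrow\Cto$ it suffices to treat $t=C^*$: from $K_d\Cto K_{d'}$ with projections $E_{g,h}$ ($g\in[d]$, $h\in[d']$) we have $\sum_h E_{g,h}=I$ and $E_{g,h}E_{g',h}=0$ for $g\ne g'$ (the orthogonality condition specializes to the case $h=h'$ since $E(\overline{K_{d'}})=\emptyset$). Hence each $P_h:=\sum_g E_{g,h}$ is a projection, so $P_h\le I$, giving $dI=\sum_{g,h}E_{g,h}=\sum_h P_h\le d'I$; as $I\ne 0$ this forces $d\le d'$. I expect the orthogonality bookkeeping in (ii) and (iv) to be the only delicate point, every other step being routine.
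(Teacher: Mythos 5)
Your proposal is correct, and for parts (i), (ii), (iv) and (v) it follows essentially the same route as the paper: (i) and (v) are pushed down to the classical preorder $\leq_{loc}$ and then lifted through the implication chain in \cref{eq: preorder relation}, while (ii) and (iv) use exactly the tensor-product projections $\sum_h E_{g,h}\otimes F_{h,k}$, $E_{g,h}\otimes F_{k,\ell}$, and $E_{g,h}\otimes I$ / $I\otimes F_{k,\ell}$ with the product tracial state $\tau_1\otimes\tau_2$ handling the $qc$ case; your orthogonality bookkeeping (including the observation that $\sum_h E_{g,h}=I$ forces mutual orthogonality of the $E_{g,h}$, and the case analysis via the complement of the strong product and of the disjoint union) matches what the paper does, with the paper outsourcing the transitivity check to a cited result of Ortiz--Paulsen. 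The one genuinely different step is the ``only if'' half of (iii): the paper deduces $d\le d'$ from $\overline{K_d}\le_t\overline{K_{d'}}$ by passing through \cref{eq: preorder relation} to $\le_B$ and invoking monotonicity of the Lov\'asz theta function, whereas you argue directly inside the $C^*$-algebra: the specialization of the orthogonality condition to $E(\overline{K_{d'}})=\emptyset$ makes each $P_h=\sum_g E_{g,h}$ a projection, and summing $\sum_{g,h}E_{g,h}$ the two ways yields $dI=\sum_h P_h\le d'I$, hence $d\le d'$. Your argument is more elementary and self-contained (it needs no external input about $\vartheta$, only that the unital algebra is nonzero, an assumption the paper's route also implicitly makes), while the paper's route is shorter given that the $\le_{C^*}\Rightarrow\le_B$ implication is already on the table for other purposes. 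Both are valid.
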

\begin{proof}
(i) is clear.

(ii) Let $\cA(G,H)$ be a $C^*$-algebra containing a set of projections $\{E_{g,h}:~g\in V(G),h\in V(H)\}\subseteq \cA(G,H)$, and let $\cA(H,K)$ be a $C^*$-algebra containing a set of projections $\{F_{h,k}:~h\in V(H),k\in V(K)\}\subseteq \cA(H,K)$, which are feasible solutions of $G\leq_{C^*} H$ and $H\leq_{C^*} K$, respectively.
Define  $A_{g,k}=\sum_{h\in V(H)}E_{g,h}\otimes F_{h,k}$ for $g\in V(G)$ and $k\in V(K)$ and let $\cA(G,K)\subseteq\cA(G,H)\otimes\cA(H,K)$ be the $C^*$-algebra generated by $\{A_{g,k}:~g\in V(G),~k\in V(K)\}$. In~\cite[Prop.~4.6]{ORTIZ2016} it is shown that $\cA(G,K)$ is a feasible solution of $G\leq_{C^*} K$. For $t=qc$, let in addition  $\tau_{G,H}$ be a tracial state on $\cA(G,H)$ and $\tau_{H,K}$ be a tracial state on $\cA(H,K)$, respectively. Then $\tau_{G,K}=\tau_{G,H}\otimes\tau_{H,K}$ is a tracial state on $\cA(G,K)$, since $\tau_{G,K}$ is a tracial state on $\cA(G,H)\otimes\cA(H,K)$~\cite[II.9.3.5]{blackadar2006operator}.

(iii)
By \cref{eq: preorder relation}, $\overline {K_d} \le_t \overline {K_{d'}}$ implies $\vartheta(\overline {K_d} ) \le \vartheta(  \overline{K_{d'}})$ and thus $d\le d'$.
Conversely, $d\le d'$ implies $\overline {K_d}\le \overline {K_{d'}}$ by \cref{prop: strassen preorder for loc and q}, which, using again \cref{eq: preorder relation}, implies $\overline{K_d}\le_t \overline{K_{d'}}$.

(iv) Let $\cA(G,H)$ be a $C^*$-algebra containing projections $\{E_{g,h}:~g\in V(G),h\in V(H)\}$ and $\cA(K,L)$ be a $C^*$-algebra containing  projections $\{F_{k,\ell}:~k\in V(K),\ell\in V(L)\}$, which are feasible solutions of $G\leq_{C^*} H$ and $K\leq_{C^*} L$, respectively.

First we show $G\sqcup K \leq_{C^*} H\sqcup L$. For this,  let $\cA(G\sqcup K, H\sqcup L)\subseteq \cA(G,H)\otimes\cA(H,K)$ be the $C^*$-algebra generated by $\{A_{u,v}:~u\in V(G\sqcup K),~v\in V(H\sqcup L)\}$, where
\[
A_{u,v} = \begin{cases} E_{u,v}\otimes I_{\cA(K,L)} & \textnormal{if $u \in V(G), v\in V(H)$},\\
I_{\cA(G,H)}\otimes F_{u,v} & \textnormal{if $u \in V(K), v\in V(L)$},\\
0 & \textnormal{otherwise.}
\end{cases}
\]
It is straightforward that $A_{u,v}$ is a projection for $u\in V(G\sqcup K)$ and $v\in V(H\sqcup L)$. For a fixed $u\in V(G\sqcup K)$, if $u\in V(G)$,
\[
\sum_{v\in V(H \sqcup L)}A_{u,v}=\sum_{v\in V(H)}E_{u,v}\otimes I_{\cA(K,L)}=I_{\cA(G,H)}\otimes I_{\cA(K,L)}.
\]
Similarly, if $u\in V(K)$,
\[
\sum_{v\in V(H\sqcup L)}A_{u,v}=\sum_{v\in V(L)}I_{\cA(G,H)}\otimes F_{u,v}=I_{\cA(G,H)}\otimes I_{\cA(K,L)}.
\]
Lastly, we show that if $\{u,u'\}\in E(\overline{G\sqcup K})$ and ($\{v,v'\}\in E(H\sqcup L)$ or $v=v'$), then $A_{u,v}A_{u',v'}=0$ holds. Note that $\{v,v'\}\in E(H\sqcup L)$ or $v=v'$ implies that $v,v'\in V(H)$ or $v,v'\in V(L)$ (since there is no edge between $V(H)$ and $V(L)$). We distinguish the following cases:
\begin{itemize}
\item Either,  $u,u'\in V(G)$ and $v,v'\in V(H)$, then $\{u,u'\}\in E(\overline G)$ and thus $E_{u,v}E_{u',v'}=0$, implying $A_{u,v}A_{u',v'}=E_{u,v}E_{u',v'}\otimes I_{\cA(K,L)}=0$.
\item Or, $u,u'\in V(K)$ and $v,v'\in V(L)$, then $\{u,u'\}\in E(\overline K)$ and thus $F_{u,v}F_{u',v'}=0$, implying $A_{u,v}A_{u',v'}=I_{\cA(G,H)}\otimes F_{u,v}F_{u',v'}=0$.
\item In all other cases, we have $A_{u,v}=0$ or $A_{u',v'}=0$, thus $A_{u,v}a_{u',v'}=0$.
\end{itemize}
Thus we have shown $G\sqcup K \le_{C^*} H\sqcup L$.

To show $G\sqcup K\leq_{qc} H\sqcup L$ one can use the above construction to build a $C^*$-algebra $\cA(G\sqcup K,H\sqcup L)$ and the tracial state can be $\tau(G,H)\otimes\tau(K,L)$ as in (ii).

We now turn to show $G \boxtimes K \leq_{C^*} H\boxtimes L$. For this,   let $\cA(G\boxtimes K, H\boxtimes L)\subseteq \cA(G,H)\otimes\cA(K,L)$ be the $C^*$-algebra generated by $\{A_{(g,k),(h,\ell)} := E_{g,h} \otimes F_{k,\ell}:~g\in V(G),\ h\in V(H),\ k\in V(K),\ l\in V(L)\}$. Then it is easy to verify that $\cA(G\boxtimes K, H\boxtimes L)$ satisfies \cref{def: C*homomorphism}. For $G \boxtimes K \leq_{qc} H\boxtimes L$, we use again the product tracial state $\tau(G,H)\otimes\tau(K,L)$ on $\cA(G,H)\otimes\cA(K,L)$.

(v) If $H\neq \overline{K_0}$, then there is a $d \in \N$ with $G\leq_{loc} \overline{K_d} \boxtimes H$ (see \cite{zuiddam2018asymptotic}). Then the result follows using \cref{eq: preorder relation}.
\end{proof}
Using the above proposition, we can again invoke the abstract theorems for preordered semirings (cf.~\cite[Theorem 2.13, Corollary 2.14-2.15]{phd}), that gave us \cref{thm: strassen dual} and \cref{thm: quantum strassen dual}. We obtain the following analogue for the cases $t \in \{qc,C^*\}$.
\begin{theorem}\label{thm: dual characterization}
Let $t\in\{qc,C^*\}$ and $G\in\cG$. We have
\begin{equation}\label{eq: dual of shannon}
\Theta_t(G)=\min\{\phi(G):~\phi\in\bX(\cG,\leq_t)\},~\overline{\chi}_{t,\infty}(G)=\max\{\phi(G):~\phi\in\bX(\cG,\leq_t)\}.
\end{equation}
\end{theorem}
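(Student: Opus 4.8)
The plan is to derive this as a direct instance of Strassen's duality theorem for preordered commutative semirings \cite[Theorem 2.13, Corollary 2.14--2.15]{phd}, exactly as \cref{thm: strassen dual,thm: quantum strassen dual} were obtained in the cases $t\in\{loc,q\}$. First I would fix the dictionary between the combinatorial quantities and the abstract algebraic ones. The ambient object is the commutative semiring $\cG=(\{\text{Graphs}\},\sqcup,\boxtimes,K_0,K_1)$, into which the natural numbers embed via $d\mapsto \overline{K_d}$; since $\boxtimes$ distributes over $\sqcup$ and $\overline{K_1}=K_1$ is the multiplicative unit, one computes $\overline{K_d}\boxtimes H=\underbrace{H\sqcup\cdots\sqcup H}_{d}$, so $\overline{K_d}$ is genuinely the image of $d$ in $\cG$. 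Under this identification $\alpha_t(G)=\max\{d:\overline{K_d}\leq_t G\}$ is the \emph{subrank} and $\overline{\chi}_t(G)=\min\{d:G\leq_t\overline{K_d}\}$ is the \emph{rank} of $G$ with respect to $\leq_t$, and their regularizations $\Theta_t(G)$ and $\overline{\chi}_{t,\infty}(G)$ (well defined and equal to $\sup_k$, $\inf_k$ respectively by Fekete's lemma, as already noted) are the asymptotic subrank and asymptotic rank.

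The substance of the argument is then the verification that $\leq_t$ is a \emph{Strassen preorder} on $\cG$, which is precisely the content of \cref{prop: strassen preorder}. Concretely, parts (i)--(ii) make $\leq_t$ a preorder; part (iv) gives monotonicity under both $\sqcup$ and $\boxtimes$; part (iii) states $\overline{K_d}\leq_t\overline{K_{d'}}\Longleftrightarrow d\leq d'$, so the copy of $\N$ sits inside $\cG$ order-faithfully (in particular $K_0$ and $K_1$ are not identified); and part (v), namely that for every $G,H$ with $H\neq K_0$ there is a $d\in\N$ with $G\leq_t\overline{K_d}\boxtimes H$, is exactly the Archimedean condition $G\leq_t d\cdot H$ demanded by the abstract framework. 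By definition the set of $\leq_t$-monotone semiring homomorphisms $\cG\to\R_{\geq 0}$ is $\bX(\cG,\leq_t)$, so the spectral points appearing in the abstract theorem coincide with the elements of $\bX(\cG,\leq_t)$.

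With the hypotheses checked, I would invoke \cite[Theorem 2.13, Corollary 2.14--2.15]{phd}, which asserts that the asymptotic subrank equals the minimum of $\phi(G)$ over all spectral points and the asymptotic rank equals the maximum, giving
\[
\Theta_t(G)=\min\{\phi(G):\phi\in\bX(\cG,\leq_t)\},\qquad \overline{\chi}_{t,\infty}(G)=\max\{\phi(G):\phi\in\bX(\cG,\leq_t)\},
\]
as claimed. The only genuinely new input is \cref{prop: strassen preorder}, already established above; everything else is a formal translation. Accordingly, the step that requires the most care is not in this proof but upstream, in the Archimedean property (v) of \cref{prop: strassen preorder}: it is the least obvious axiom to guarantee for the commuting and $C^*$-homomorphism preorders, and it is exactly what forces the extremal values to be attained rather than merely approached. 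I would also confirm the boundary convention $\phi(K_0)=0$ (\cref{remark: zero to zero}), so that the degenerate graph $K_0$ does not interfere with the homomorphism property.
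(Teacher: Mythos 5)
Your proposal is correct and follows essentially the same route as the paper: the paper likewise obtains \cref{thm: dual characterization} by invoking the abstract duality for preordered semirings \cite[Theorem 2.13, Corollary 2.14--2.15]{phd}, with \cref{prop: strassen preorder} supplying exactly the Strassen-preorder axioms (including the Archimedean property (v)) that you identify as the only substantive input. Your explicit dictionary between $\overline{K_d}$ and the image of $d\in\N$, and the check $\overline{K_d}\boxtimes H=H\sqcup\cdots\sqcup H$, are correct and merely make explicit what the paper leaves implicit.
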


\subsection{Elements in the new quantum asymptotic spectra of graphs}\label{sec: spectral points}

It is proved in~\cite{ORTIZ2016} that $G\leq_{C^*} H$ implies $G\leq_B H$, where the latter is equivalent to $\vartheta(G)\leq\vartheta(H)$. Thus, $\vartheta$ is a monotone for both preorders $\leq_{C^*}$ and $\leq_{qc}$. Therefore,  \[\vartheta\in\bX(\cG,\leq_{C^*})\subseteq \bX(\cG,\leq_{qc}).\]
We prove that the tracial rank introduced in~\cite{PAULSEN2016} is an element of $\bX(\cG,\leq_{qc})$. In view of the inclusions in \cref{eq: Asymptotic spectrum relation}, we also obtain a new element in $\bX(\cG,\leq_q)$ and $\bX(\cG,\leq_{loc})$. In~\cite{PAULSEN2016}, the multiplicativity (w.r.t.~strong product) and monotonicity (w.r.t.~commuting quantum homomorphism) have been verified. We shall prove its additivity under disjoint union and show it is normalized. We first recall the definition of the tracial rank.
\begin{definition}\label{def: tracial rank}
We say that a graph $G$ has a $\lambda$-tracial representation if there exists a $C^*$-algebra $\cA$ equipped with a tracial state $\tau$ and projections $E_{g}\in \cA$ for $g\in V(G)$ which satisfy the conditions:
\begin{enumerate}[label=\upshape(\roman*)]
\item $E_gE_{g'}=0$ for all $\{g,g'\}\in E(\overline{G})$, and
\item $\tau(E_g)=\frac{1}{\lambda}$ for all $g\in V(G)$.
\end{enumerate}
The \emph{tracial rank} of $G$ is defined as
\begin{equation}
\xitr(G)=\inf\{\lambda: G~{\rm has~a}~\lambda\text{-tracial representation}\}.
\end{equation}
\end{definition}

Note that \cref{def: tracial rank} coincides with the one in~\cite{PAULSEN2016} except we replace the graph by its complement. It is shown in~\cite[Theorem 6.11]{PAULSEN2016} that for every feasible solution $(\cA,\{E_g\},\tau)$, its GNS construction is also a feasible solution of $\overline{\xi}_{tr}$, and the second condition can be relaxed to
\begin{itemize}
\item[(ii)] $\tau(E_g)\geq\frac{1}{\lambda}$ for all $g\in V(G)$.
\end{itemize}
We use the above two formulations to show that the tracial rank is additive with respect to taking disjoint union.

\begin{proposition}\label{prop: additivity}
For graphs $G,H\in\cG$, we have $\xitr(G\sqcup H)=\xitr(G)+\xitr(H)$.
\end{proposition}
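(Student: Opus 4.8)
The plan is to prove the two inequalities $\xitr(G\sqcup H)\le \xitr(G)+\xitr(H)$ and $\xitr(G\sqcup H)\ge \xitr(G)+\xitr(H)$ separately, building feasible tracial representations in each direction. Throughout I will use the relaxed formulation from \cite{PAULSEN2016} in which condition (ii) is $\tau(E_g)\ge 1/\lambda$, together with the fact that the infimum defining $\xitr$ is over achievable values of $\lambda$.

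For the upper bound, suppose $(\cA_G,\{E_g\}_{g\in V(G)},\tau_G)$ is a $\lambda_G$-tracial representation of $G$ and $(\cA_H,\{F_h\}_{h\in V(H)},\tau_H)$ is a $\lambda_H$-tracial representation of $H$. I would form the direct sum $C^*$-algebra $\cA=\cA_G\oplus\cA_H$ and equip it with the convex combination of traces $\tau=p\,\tau_G\oplus q\,\tau_H$, where $p=\lambda_G/(\lambda_G+\lambda_H)$ and $q=\lambda_H/(\lambda_G+\lambda_H)$, so that $\tau$ is again a tracial state (each summand is tracial and $p+q=1$). For the projections indexed by $V(G\sqcup H)=V(G)\sqcup V(H)$, I set $\widetilde E_g=E_g\oplus 0$ for $g\in V(G)$ and $\widetilde E_h=0\oplus F_h$ for $h\in V(H)$. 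The orthogonality condition (i) holds: within $V(G)$ or within $V(H)$ it follows from the original representations, and across the two parts the projections live in complementary summands of the direct sum, so their product is $0$ (and indeed every edge of $\overline{G\sqcup H}$ joining $V(G)$ to $V(H)$ is accounted for this way). Finally $\tau(\widetilde E_g)=p\,\tau_G(E_g)=p/\lambda_G=1/(\lambda_G+\lambda_H)$ and similarly $\tau(\widetilde E_h)=q/\lambda_H=1/(\lambda_G+\lambda_H)$, so this is a $(\lambda_G+\lambda_H)$-tracial representation of $G\sqcup H$. Taking infima gives $\xitr(G\sqcup H)\le\xitr(G)+\xitr(H)$.

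For the lower bound, start from any $\lambda$-tracial representation $(\cA,\{E_v\}_{v\in V(G\sqcup H)},\tau)$ of $G\sqcup H$. The key structural observation is that in $\overline{G\sqcup H}$ every vertex of $G$ is adjacent to every vertex of $H$, so condition (i) forces $E_gE_h=0$ for all $g\in V(G)$, $h\in V(H)$; since these are projections, their ranges are orthogonal. Let $P=\bigvee_{g\in V(G)}E_g$ and $R=\bigvee_{h\in V(H)}E_h$ be the supremum projections in $\cA$ (the GNS representation yields a von Neumann algebra, where these exist). Orthogonality of all the $E_g$ to all the $E_h$ gives $P\perp R$, hence $\tau(P)+\tau(R)\le 1$. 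The restrictions $\{E_g\}$ form a tracial representation of $G$ inside the corner $P\cA P$ with renormalized trace $\tau(\cdot)/\tau(P)$, witnessing $\xitr(G)\le \lambda\,\tau(P)$; similarly $\xitr(H)\le\lambda\,\tau(R)$. Adding and using $\tau(P)+\tau(R)\le 1$ yields $\xitr(G)+\xitr(H)\le\lambda(\tau(P)+\tau(R))\le\lambda$, and taking the infimum over feasible $\lambda$ gives $\xitr(G)+\xitr(H)\le\xitr(G\sqcup H)$.

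I expect the main obstacle to be the lower-bound direction, specifically justifying the corner construction and the renormalization carefully: one must check that $\tau(P)>0$ (so the normalized trace is well defined) and that $E_g\le P$ makes $\tau(E_g)/\tau(P)\ge 1/(\lambda\tau(P))$, which is exactly the relaxed condition (ii) for parameter $\lambda\tau(P)$. A clean way to handle the degenerate case $\tau(P)=0$ is to note it forces $\tau(E_g)=0$ for all $g$, which is incompatible with condition (ii) unless $V(G)=\emptyset$ (i.e.\ $G=K_0$), and the boundary cases involving $K_0$ can be dispatched directly since $\xitr(K_0)=0$ and $G\sqcup K_0=G$. Passing to the GNS/von Neumann completion to guarantee the existence of the join projections $P,R$ is the technical point that makes the supremum-projection argument rigorous, and this is precisely where the relaxation of condition (ii) and the GNS result of \cite[Theorem 6.11]{PAULSEN2016} are used.
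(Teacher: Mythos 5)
Your proposal is correct and follows essentially the same route as the paper: the direct sum with the $\lambda_G/(\lambda_G+\lambda_H)$-weighted convex combination of traces for the upper bound, and for the lower bound the passage to the von Neumann completion, the join projections $P=\bigvee_g E_g$ and $R=\bigvee_h E_h$ with $PR=0$, the renormalized corner traces, and the estimate $\tau(P)+\tau(R)\le 1$. Your extra care about the degenerate case $\tau(P)=0$ is a minor refinement the paper leaves implicit.
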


\begin{proof}
We first prove $\xitr(G\sqcup H)\leq \xitr(G)+\xitr(H)$. Let $(\cA_G,\{E_g\}_{g\in V(G)},\tau_G)$ be a $\lambda_G$-tracial representation of $\xitr(G)$ and let $(\cA_H,\{F_h\}_{h\in V(H)},\tau_H)$ be a $\lambda_H$-tracial representation of $H$, as given in~\cref{def: tracial rank}.
Let $\cA=\cA_G\oplus\cA_H$ be the direct sum of $\cA_G$ and $\cA_H$, which is a $C^*$-algebra. $\cA$ contains projections $(E_g,0)$ and $(0,F_h)$ for any $g\in V(G)$ and $h\in V(H)$. Define the linear map \[\tau: \cA\mapsto \mathbb{C}\ , \ \tau\big( (X,Y) \big)=\frac{\lambda_G}{\lambda_G+\lambda_H}\tau_G(X)+\frac{\lambda_H}{\lambda_G+\lambda_H}\tau_H(Y)\ .\]
$\tau$ is clearly a tracial state on $\cA$.
Let $A_v\in\cA$ for $v\in V(G)\cup V(H)$ by
\[A_{v}=\begin{cases} (E_v, 0)\in  \cA_G\oplus\cA_H & \textnormal{if $v \in V(G)$},\\
 (0, F_v) \in  \cA_G\oplus\cA_H & \textnormal{if $v \in V(H)$.}\\
\end{cases}.\]
Then each $A_v$ is a projection and $A_{v}A_{u}=0$ if $\{v,u\}\in E(\overline{G\sqcup H})$. Moreover, we have
\[
\tau(A_v)=\begin{cases} \tau(E_v, 0)= \frac{\lambda_G}{\lambda_G+\lambda_H}\tau_G(E_v) = \frac{1}{\lambda_G+\lambda_H}& \textnormal{if $v \in V(G)$,}\\
 \tau(0, F_v)= \frac{\lambda_H}{\lambda_G+\lambda_H}\tau_H(F_v)= \frac{1}{\lambda_G+\lambda_H} & \textnormal{if $v \in V(H)$.}\\
\end{cases}.
\]
Hence $(\cA',\{a_v\}_{v\in V(G)\cup V(H)},\tau)$ is a $(\lambda_G+\lambda_H)$-tracial representation of $G\sqcup H$, which implies $\xitr(G\sqcup H)\leq \xitr(G)+\xitr(H)$.

We now prove $\xitr(G\sqcup H)\geq \xitr(G)+\xitr(H)$. Let $(\cA,\{A_v\}_{v\in V(G)\cup V(H)},\tau)$ be a $\lambda$-tracial representation of $G\sqcup H$. Note that (i) implies that $A_gA_h=0$ for every $g\in V(G)$ and $h\in V(H)$. Let $\cM$ be the von Neumann algebra generated by $\cA$, $\tau$ extends to $\cM$. Note that the projections $E=\bigvee_{v\in V(G)} A_v$ (the projection onto the closure of the range $\sum_{g\in V(G)} A_g$) and $F=\bigvee_{h\in V(H)} A_h$ (the projection onto the closure of the range $\sum_{u\in V(H)} A_u$) are elements of $\cM$. We have $EF=0$ and $E$ and $F$ commute with $A_g$ for any $g\in V(G)$ and $A_h$ for any $h\in V(H)$, respectively. Let $\cA_G$ be the $C^*$-algebra generated by $\{A_g:g\in V(G)\}$ with $E$ being the identity element, and $\cA_H$ be the $C^*$-algebra generated by $\{A_h:h\in V(H)\}$ with $F$ being the identity element. It is straightforward to see that $A_gA_{g'}=0$ for all $\{g,g'\}\in E(\overline{G})$ and $A_hA_{h'}=0$ for all $\{h,h'\}\in E(\overline{H})$.

We construct tracial states on $\cA_G$ and $\cA_H$. Note that $\cA_G,\cA_H\subseteq\cM$, thus the tracial state $\tau:\cM\to\C$ is also a tracial linear map on $\cA_G$ and $\cA_H$. Define $\tau_G:\cA_G\to\C$ by $\tau_G(X)=\tau(X)/\tau(E)$ for any $X\in \cA_G$ and $\tau_H:\cA_H\to\C$ by $\tau_G(Y)=\tau(Y)/\tau(F)$ for any $Y\in \cA_H$. Then $\tau_G$ and $\tau_H$ are tracial states on $\cA_G$ and $\cA_H$, respectively. Note that
\[
\tau_G(A_g)=\frac{\tau(A_g)}{\tau(E)}=\frac{1}{\lambda\tau(E)}~\text{for any}~g\in V(G)~\text{and}~\tau_H(A_h)=\frac{\tau(A_h)}{\tau(f)}=\frac{1}{\lambda\tau(F)}~\text{for any}~g\in V(G).
\]
This shows that $(\cA_G,\{A_g\}_{g\in V(G)},\tau_G)$ and $(\cA_H,\{A_h\}_{h\in V(H)},\tau_H)$ are $\lambda\tau(E)$-tracial representation of $G$ and $\lambda\tau(F)$-tracial representation of $H$, respectively. Thus,
\[
\xitr(G)+\xitr(H)\leq\lambda\tau(E)+\lambda\tau(F)\leq\lambda,
\]
where the second inequality holds since $E+F\leq I_\cA$. Taking the infimum over $\lambda$ shows that $\xitr(G)+\xitr(H)\leq\xitr(G\sqcup H)$.
\end{proof}

\begin{proposition}
The tracial rank is an element of $\bX(\cG,\leq_{qc})$.
\end{proposition}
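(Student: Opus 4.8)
The plan is to verify directly the four defining properties of a point of $\bX(\cG,\leq_{qc})$ as listed in \cref{defring}: normalization $\phi(\overline{K}_1)=1$, multiplicativity under $\boxtimes$, additivity under $\sqcup$, and $\leq_{qc}$-monotonicity. Three of these are already in hand. Additivity is exactly the content of \cref{prop: additivity}. Multiplicativity with respect to the strong product and monotonicity with respect to the commuting quantum homomorphism have been established in~\cite{PAULSEN2016} (recall that we evaluate $\xitr$ on the complement, so the relevant preorder is precisely $\leq_{qc}$). Hence the only property that still needs checking is normalization, and the proof will essentially amount to assembling these facts.

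For normalization I would argue straight from \cref{def: tracial rank}. Since $\overline{K}_1$ is a single vertex, its complement has no edges, so condition~(i) is vacuous and the only constraint is on the trace of the single projection $E_g$. Taking $\cA=\C$ with its unique tracial state and $E_g = I_\cA$ yields a $1$-tracial representation, so $\xitr(\overline{K}_1)\le 1$. Conversely, in any $\lambda$-tracial representation one has $\tau(E_g)\le \tau(I)=1$ because $\tau$ is a state and $E_g\le I$; together with $\tau(E_g)=1/\lambda$ (equivalently $\tau(E_g)\ge 1/\lambda$ in the relaxed formulation) this forces $\lambda\ge 1$. Therefore $\xitr(\overline{K}_1)=1$.

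Finally I would record the bookkeeping needed to land in $\R_{\ge 0}$: nonnegativity is immediate, and finiteness follows from $\xitr(G)\le\overline{\chi}(G)$, obtained by assigning to each vertex the rank-one projection onto the basis vector indexed by its color in a proper coloring of $\overline{G}$ and using the normalized trace; by the convention of \cref{remark: zero to zero} we also set $\xitr(K_0)=0$. Combining normalization with \cref{prop: additivity} moreover gives $\xitr(\overline{K}_d)=d$ for all $d\in\N$, since $\overline{K}_d=\overline{K}_1\sqcup\cdots\sqcup\overline{K}_1$. With all four properties verified, $\xitr$ is a $\leq_{qc}$-monotone semiring homomorphism $\cG\to\R_{\ge 0}$, i.e.\ an element of $\bX(\cG,\leq_{qc})$. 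I do not anticipate any genuine obstacle here: the substantive work is \cref{prop: additivity} together with the results of~\cite{PAULSEN2016}, and the single new computation, normalization, is elementary.
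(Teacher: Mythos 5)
Your proposal is correct and follows essentially the same route as the paper: normalization is checked directly for the single-vertex graph, additivity is delegated to \cref{prop: additivity}, and multiplicativity and $\leq_{qc}$-monotonicity are imported from~\cite{PAULSEN2016}. The extra bookkeeping you include (the lower bound $\lambda\ge 1$, finiteness via $\xitr(G)\le\overline{\chi}(G)$, and the $K_0$ convention) is harmless elaboration of what the paper leaves implicit.
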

\begin{proof}
It is easy to verify that $\xitr(K_1)=1$. Then, by the additivity of $\xitr$, we have $\xitr(\overline{K}_d)=\xitr(\sqcup_{i=1}^d K_1)=d$ for all $d\in\N$. Since tracial rank is also multiplicative and monotone, it is an element of $\bX(\cG,\leq_{qc})$.
\end{proof}

\section{Tracial Haemers bound} \label{sec: tracial haemers}
It is proved that the fractional Haemers bound and the projective rank are elements of the asymptotic spectrum $\bX(\cG,\leq_q)$~\cite{Li2018quantum}, and we just proved that the tracial rank is an element of $\bX(\cG,\leq_{qc})$. Since the tracial rank can be viewed as an infinite-dimensional generalization of the projective rank~\cite{PAULSEN2016}, a natural question is whether there is an infinite-dimensional generalization of the fractional Haemers bound.

We define the tracial Haemers bound of $G$, denoted as $\cH_{tr}(G)$. This new graph parameter is a von Neumann algebraic generalization of the fractional Haemers bound and we show that it is an element of $\bX(\cG,\leq_{qc})$. In other words, we show that it is normalized, $\leq_{qc}$-monotone, additive, and multiplicative. In particular, these also imply that $\cH_{tr}(G)$ is an upper bound on $\alpha_{qc}(G)$ and $\Theta_{qc}(G)$.

\subsection{Definition and examples}
We start with recalling a coordinate-free definition of the fractional Haemers bound due to Lex Schrijver~\cite[Remark after Prop.~$7$]{bukh2018fractional}. The key part of this definition is the notion of a subspace representation of a graph.
\begin{definition}[Subspace representation] \label{def: fractional Haemers}
We say a graph $G$ has a {$(d,r)$-subspace representation} (over $\C$) if there exist subspaces $S_g\subseteq \C^d$ for all $g\in V(G)$ satisfying
\begin{enumerate}[label=\upshape(\roman*)]
\item $S_g\cap(\sum_{g'\in N_{\overline{G}}(g)}S_{g'})=\{0\}$, where the summation is over all $g'$ satisfying $\{g,g'\}\in E(\overline{G})$.
\item $\dim(S_g)=r$ for all $g\in V(G)$;
\end{enumerate}
\end{definition}
\noindent The fractional Haemers bound (over $\C$) is then defined as
\begin{equation}
\cH_f(G;\C)=\inf\{\frac{d}{r}:~G~\text{has a}~(d,r)\text{-subspace representation over }\C\}.
\end{equation}
In the infinite-dimensional setting, the trace of a projection is an analogue of the dimension of a subspace. This suggests to define the von Neumann-algebraic generalization of a subspace representation using projections. In order to do so, recall that the intersection of two subspaces and (the closure of) the sum of two subspaces corresponds to the wedge and the union of the projections, which can be expressed as strong-operator limits. Indeed, for two projections $E,F$ in a von Neumann algebra $\cM\subseteq B(\cH)$, the projection onto $\ran(E)\cap \ran (F)$ is $E\wedge F=\sotlim_{n\to \infty} (EFE)^n$ and the projection onto $\text{cl}(\ran(E)\cup \ran (F))$ is $E\vee F=\sotlim_{n\to \infty} (E+F)^\frac{1}{n}$.
In light of \cref{def: fractional Haemers}(i), it will be convenient to use the shorthand notation
\begin{equation}\label{eq: F_g}
E_{N_{\overline{G}}(g)}:= \bigvee_{g'\in N_{\overline{G}}(g)} E_{g'}=\sotlim_{n\to\infty}\Big(\sum_{g'\in N_{\overline{G}}(g)} E_{g'}\Big)^{1/n} \text{ for } g \in V(G).
\end{equation}
We then have the following von Neumann-algebraic generalization of a subspace representation.
\begin{definition}[Tracial subspace representation] \label{def: tracial subspace rep}
We say a graph $G$ has a \emph{$\lambda$-tracial subspace representation} if there exist a von Neumann algebra $\cM$ containing projections $E_g$ for all $g\in V(G)$, and a normal tracial state $\tau:\cM \to \C$,
such that \begin{enumerate}[label=\upshape(\roman*)]
\item $E_g \wedge E_{N_{\overline G}(g)} = 0$.
\item $\tau(E_g)= \frac{1}{\lambda}$ for all $g\in V(G)$.
\end{enumerate}
\end{definition}
We define the tracial Haemers bound of $G$ as the infimum over all $\lambda$ for which there exists a $\lambda$-tracial subspace representation of $G$.
\begin{definition}[Tracial Haemers bound] \label{def: tracial Haemers}
 The {tracial Haemers bound} of a graph $G$ is defined as
\begin{align*}
\cH_{tr}(G):=&\inf\{\lambda \colon G \text{ has a } \lambda\text{-tracial subspace representation}\}.
\end{align*}
\end{definition}
It is often convenient to work with a concrete representation $\cM\subset B(\cH)$, and we will also refer to a tuple $(\cH,\cM, \{E_g\},\tau)$ as a (concrete) tracial subspace representation of $G$. Note that for such a concrete von Neumann algebra, item (i) in~\cref{def: tracial subspace rep} can be equivalently formulated as
\begin{enumerate}
\item[(i)] $S_g\cap\cl(\sum_{g'\in N_{\overline{G}}(g)}S_{g'})=\{0\}$, where $S_g=\ran(E_g)$ for all $g\in V(G)$,\footnote{The difference with item (i) of \cref{def: fractional Haemers} is the \emph{closure} of $(\sum_{g'\in N_{\overline{G}}(g)}S_{g'})$. Note however that in finite-dimensional Hilbert spaces all linear subspaces are closed.}
\end{enumerate}
which gives exactly correspondence to fractional Haemers bound. The next proposition shows a concrete tracial subspace representation is always achievable by GNS representation.

\begin{proposition}
Every $\lambda$-tracial subspace representation $(\cM, \{E_g\}, \tau)$ as in \cref{def: tracial subspace rep} yields a $\lambda$-tracial subspace representation $(\cH_\pi,\pi(\cM),\pi(E_g),\phi_\tau)$ where $\pi:\cM\to B(\cH_\pi)$ is the GNS representation with respect to $\tau$ and $\phi_\tau$ is the tracial vector state induced by $\tau$.
\end{proposition}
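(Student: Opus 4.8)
The plan is to check that the GNS data inherits every defining property of a $\lambda$-tracial subspace representation, with the essential input being the \emph{normality} of $\tau$. First I would recall the GNS construction for the normal tracial state $\tau$ on $\cM$: it produces a Hilbert space $\cH_\pi$, a unital $*$-representation $\pi:\cM\to B(\cH_\pi)$, and a cyclic vector $\xi_\tau\in\cH_\pi$ such that $\tau(X)=\langle \xi_\tau, \pi(X)\xi_\tau\rangle$ for all $X\in\cM$. I would then set $\phi_\tau(\cdot)=\langle\xi_\tau,\,\cdot\,\xi_\tau\rangle$, viewed as a state on $B(\cH_\pi)$ restricted to $\pi(\cM)$, so that $\phi_\tau\circ\pi=\tau$ by construction.

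The key structural fact I would invoke is that, since $\tau$ is normal, the GNS representation $\pi$ is normal (i.e.\ $\sigma$-weakly continuous); consequently $\pi(\cM)$ is again a von Neumann algebra and $\pi$ preserves suprema and infima of projections. With this in hand, condition (ii) of \cref{def: tracial subspace rep} is immediate: each $\pi(E_g)$ is a projection because $\pi$ is a unital $*$-homomorphism, and $\phi_\tau(\pi(E_g))=\tau(E_g)=\tfrac{1}{\lambda}$. Traciality of $\phi_\tau$ follows from that of $\tau$: for $X_1,X_2\in\cM$ we have $\phi_\tau(\pi(X_1)\pi(X_2))=\tau(X_1X_2)=\tau(X_2X_1)=\phi_\tau(\pi(X_2)\pi(X_1))$, and every element of $\pi(\cM)$ is of the form $\pi(X)$. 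Finally $\phi_\tau$ is normal because, as recorded in \cref{subsec: operator algebra}, vector states are always normal.

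For condition (i) I would commute $\pi$ past the strong-operator limits defining $\vee$ and $\wedge$. Since $\pi$ preserves joins of projections, $\pi\bigl(E_{N_{\overline G}(g)}\bigr)=\pi\bigl(\bigvee_{g'}E_{g'}\bigr)=\bigvee_{g'}\pi(E_{g'})$, the join now computed in $\pi(\cM)$. Likewise, from $E\wedge F=\sotlim_{n\to\infty}(EFE)^n$ together with multiplicativity and normality of $\pi$ one obtains $\pi(E\wedge F)=\pi(E)\wedge\pi(F)$. Taking $E=E_g$ and $F=E_{N_{\overline G}(g)}$ gives $\pi(E_g)\wedge\pi\bigl(E_{N_{\overline G}(g)}\bigr)=\pi\bigl(E_g\wedge E_{N_{\overline G}(g)}\bigr)=\pi(0)=0$, which is exactly condition (i) for the image projections. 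Hence $(\cH_\pi,\pi(\cM),\{\pi(E_g)\},\phi_\tau)$ is a concrete $\lambda$-tracial subspace representation of $G$, as claimed.

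The main obstacle, and the place I would be most careful, is the normality of $\pi$: the lattice identities above genuinely fail for an arbitrary $C^*$-representation, since $\wedge$ and $\vee$ are weak-topology (SOT-limit) notions rather than algebraic ones. I would therefore cite the standard fact that the GNS representation of a normal state on a von Neumann algebra is normal with von Neumann algebra range (via $\sigma$-weak continuity), emphasizing that this is precisely the reason \cref{def: tracial subspace rep} builds in the hypothesis that $\tau$ be normal.
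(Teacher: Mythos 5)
Your proposal is correct and follows essentially the same route as the paper: both arguments hinge on the normality of the GNS representation $\pi$ (inherited from the normality of $\tau$), which lets $\pi$ commute with the SOT-limits defining $\vee$ and $\wedge$ so that $\pi(E_g)\wedge\pi(E_{N_{\overline G}(g)})=\pi(E_g\wedge E_{N_{\overline G}(g)})=0$, together with $\phi_\tau\circ\pi=\tau$ for condition (ii). Your write-up is somewhat more detailed (explicitly checking traciality and normality of $\phi_\tau$), but the substance is identical.
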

\begin{proof}
Note that the GNS representation $\pi$ is normal because $\tau$ is a normal state. Thus, it preserves SOT limits, and hence $\pi(E_{N_{\overline G}(g)})=\bigvee_{g'\in N_{\overline{G}}(g)} \pi(E_{g'})$. Then it follows from the condition (i) from Definition \ref{def: tracial subspace rep} that $\pi(E_g)\wedge\pi(E_{N_{\overline G}(g)})=\pi(E_g \wedge E_{N_{\overline G}(g)})=0$ for every $g\in V(G)$.
Moreover the GNS representation is such that $\phi_\tau(\pi(E_g)) = \tau(E_g)$ for all $g \in G$.
\end{proof}

\begin{remark}\label{remark}
\hspace{1cm} \\
{\rm
\indent a) In contrast to the definition of tracial rank \cite{PAULSEN2016} which uses $C^*$-algebras and tracial states, we use \emph{von Neumann algebras} and \emph{normal} tracial states in the definition. One reason is that the projection $E_{N_{\overline{G}}(g)}$ for each $g\in V(G)$ is an element in the von Neumann algebra generated by $\{E_g\}$, but not necessarily in the $C^*$-algebra generated by $\{E_g\}$. Moreover, using a normal tracial state $\tau$ ensures the following continuity: $\tau(P) = \lim_{k \to \infty} \tau(P_k)$ whenever $P = \sotlim_{k \to \infty} P_k$.  Note that the definition of the tracial rank uses only multiplication of operators and this is well-defined within in $C^*$-algebra (and therefore does not require normal states). \\
\indent b) In the definition of a tracial subspace representation we may allow the equality in item {\upshape (ii)} to be an inequality: $\tau(E_g) \geq \frac{1}{\lambda}$ for all $g \in V(G)$. The proof is standard, we have included it in \cref{App: inequality} for completeness.
}
\end{remark}

We first show that the tracial Haemers bound is a proper infinite dimensional generalization of the fractional Haemers bound. It is clear that if a graph admits a $(d,r)$-subspace representation, then there is also a $d/r$-tracial subspace representation. The next proposition shows the converse also holds.
\begin{proposition}\label{prop: finite dim C^* algebra}
If a graph $G$ has a $\lambda$-tracial subspace representation and the corresponding von Neumann algebra is finite-dimensional, then $G$ has a $(d,r)$-subspace representation with $d/r\leq\lambda$.
\end{proposition}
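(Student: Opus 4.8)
The plan is to unfold the finite-dimensional structure, reduce conditions (i)--(ii) of \cref{def: tracial subspace rep} to their block components, and then assemble a genuine subspace representation by taking a direct sum of the blocks with carefully chosen integer multiplicities. First I would use that a finite-dimensional von Neumann algebra is $*$-isomorphic to $\cM\cong\bigoplus_{i=1}^{k}M(n_i,\C)$ and that every normal tracial state decomposes as $\tau=\sum_{i}t_i\,\tr_i$, where $\tr_i$ is the normalized trace on $M(n_i,\C)$ and $t_i\geq 0$ with $\sum_i t_i=1$. Discarding the blocks with $t_i=0$ (the wedge conditions hold blockwise, so they persist on the surviving blocks, and at least one block remains) I may assume all $t_i>0$. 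Writing $E_g=\bigoplus_i E_g^{(i)}$, $S_g^{(i)}=\ran(E_g^{(i)})\subseteq\C^{n_i}$, $r_g^{(i)}=\rank(E_g^{(i)})$ and $s_i=t_i/n_i>0$, I would note that in finite dimensions all SOT limits are honest limits, so $\wedge$ is intersection and $E_{N_{\overline G}(g)}$ projects onto $\sum_{g'\in N_{\overline G}(g)}S_{g'}^{(i)}$ blockwise. Hence condition (i) becomes $S_g^{(i)}\cap\sum_{g'\in N_{\overline G}(g)}S_{g'}^{(i)}=\{0\}$ for every block $i$, while condition (ii) reads $\sum_i s_i\, r_g^{(i)}=1/\lambda$ for all $g\in V(G)$.

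Next I would describe the assembly. For any positive integers $m_1,\dots,m_k$, set $d=\sum_i m_i n_i$ and put $S_g=\bigoplus_i (S_g^{(i)})^{\oplus m_i}\subseteq\C^{d}$. Because both intersections and sums distribute over direct sums, the blockwise triviality established above yields $S_g\cap\sum_{g'\in N_{\overline G}(g)}S_{g'}=\{0\}$, so condition (i) of \cref{def: fractional Haemers} holds automatically for \emph{every} choice of multiplicities. The only remaining requirement is condition (ii), i.e.\ that $\dim(S_g)=\sum_i m_i\, r_g^{(i)}$ be independent of $g$; when this holds, $(S_g)_g$ is a genuine $(d,r)$-subspace representation with $r=\sum_i m_i r_g^{(i)}$ and ratio $d/r=\bigl(\sum_i m_i n_i\bigr)/\bigl(\sum_i m_i r_g^{(i)}\bigr)$.

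The crux is thus to pick integer multiplicities that equalize the dimensions while keeping $d/r\leq\lambda$. I would introduce the linear subspace $W=\{m\in\R^{k}:\ \sum_i m_i r_g^{(i)}\ \text{is independent of }g\}$, which is cut out by the integer equations $\sum_i m_i\,(r_g^{(i)}-r_{g_0}^{(i)})=0$ and is therefore a \emph{rational} subspace. The real vector $s=(s_i)$ has strictly positive coordinates and lies in $W$ (its common value is $1/\lambda$). Fixing a base vertex $g_0$ and the linear functional $\ell(m)=\sum_i n_i m_i-\lambda\sum_i r_{g_0}^{(i)} m_i$ on $W$, one has $\ell(m)=d(m)-\lambda\, r(m)$, so that $d/r\leq\lambda$ is equivalent to $\ell(m)\leq 0$; moreover $\ell(s)=\sum_i t_i-\lambda\cdot(1/\lambda)=0$. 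The decisive observation is then that rational points are dense in the rational subspace $W$ and that positivity of coordinates is an open condition: if $\ell$ vanishes on $W$, any nearby rational point works; otherwise I perturb $s$ within $W$ in a direction along which $\ell$ strictly decreases and rationalize, obtaining a rational $m^{*}\in W$ with all coordinates positive and $\ell(m^{*})\leq 0$. Clearing denominators gives positive integers $(m_i)$, and the resulting direct-sum construction is a $(d,r)$-subspace representation with $d/r\leq\lambda$. The main obstacle I anticipate is exactly this last point: a naive rounding only yields $d/r\leq\lambda+\eps$, and it is the fact that $s$ itself sits inside the rational subspace $W$ with $\ell(s)=0$ that upgrades the estimate to the sharp inequality $d/r\leq\lambda$.
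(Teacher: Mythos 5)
Your proposal is correct and follows essentially the same route as the paper: decompose the finite-dimensional von Neumann algebra into matrix blocks, observe that the wedge and trace conditions reduce to blockwise subspace conditions and a rational-coefficient linear system on the weights, and then realize rational multiplicities by a direct sum of blocks. The only difference is cosmetic: where you perturb the weight vector $s$ inside the rational subspace $W$ and use density of rational points together with openness of positivity and of $\ell<0$, the paper packages the same rationality fact as a linear program with integer coefficients whose optimum $t^*\geq 1/\lambda$ is attained at a rational point.
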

\begin{proof}
Let $(\cH,\cM, \{E_g\},\tau)$ be a $\lambda$-tracial subspace representation of $G$, where $\cM$ is finite-dimensional. Thus $\cM$ is ($*$-isomorphic to) a direct-sum of finite-dimensional matrix algebras: $\cA=\oplus_{i=1}^\ell M(d_i,\C)$. For each projection $E_g\in \cA$, it can be uniquely represented as $E_g=\oplus_{i=1}^\ell E_g^i$, where $E_g^i\in M(d_i,\C)$ is a projection for each $i\in[\ell]$. For the tracial state $\tau:\cA\to\C$, there exists a probability distribution $p=(p_1,\dots,p_\ell)\in [0,1]^\ell$ such that $\tau(\oplus_{i=1}^\ell X_i)=\sum_{i=1}^\ell \frac{p_i}{d_i}\Tr(X_i)$.
Note that since $\tau(E_g)=1/\lambda$ for every $g\in V(G)$, we have that
\[
\frac{1}{\lambda}=\sum_{i=1}^\ell \frac{p_i}{d_i}\Tr(E_g^i)=\sum_{i=1}^\ell q_i \, \rk(E_g^i) \text{ for all } g \in V(G),
\]
where we set $q_i = p_i/d_i$ for each $i \in [\ell]$.
Let
\begin{equation}\label{eq: lp in finite-dim C*-algebra}
\begin{split}
t^*=\max\ t~\text{ s.t. }~&t \in [0,1],\ q \in \R^\ell_{\geq 0},\ \sum_{i=1}^\ell d_i q_i = 1, \\
&\sum_{i=1}^\ell q_i \, \rk(E_g^i)=t \quad \forall\ g\in V(G).
\end{split} 
\end{equation}
It is clear that $t^*\geq 1/\lambda$. Since the coefficients of the constraints in the linear program described in \cref{eq: lp in finite-dim C*-algebra} are all integers, the optimal value $t^*$ is attained by a vector $(t^*,q_1,\dots,q_\ell)$ of rational numbers. Write $q_i=m_i/d$ for some integers $m_i$ ($i \in [\ell]$) and $d$. Let $E_g'=\oplus_{i=1}^\ell E_g^i\otimes I_{m_i}$ for each $g\in V(G)$ and observe that $E_g' \in M_d$ since $\sum_i d_i m_i = \sum_i d_i q_i d = d$. Note that
\[
\rk(E_g')=\sum_{i=1}^\ell m_i\rk(E_g^i)=dt^*.
\]
Thus if we can show $\ran(E_g')\cap(\sum_{g'\in N_{\overline{G}}(g)}\ran(E'_{g'}))=\{0\}$ for every $g\in V(G)$, then we have obtained a $(d,dt^*)$-subspace representation of $G$, where $d/(dt^*)=1/t^*\leq\lambda$.

We are only left to prove that $\ran(E_g')\cap(\sum_{g'\in N_{\overline{G}}(g)}\ran(E'_{g'}))=\{0\}$ (the closure can be omitted since these projections are finite dimensional). Note that for each $g\in V(G)$,
\[\ran(E_g')=\oplus_{i=1}^\ell \ran(E_g^i)\otimes \C^{m_i}\subseteq (\oplus_{i=1}^\ell \ran(E_g^i))\otimes \C^{m}=\ran(E_g)\otimes \C^m,\]
where $m\geq m_i$ for all $i\in[\ell]$
Since $\ran(E_g)\cap(\sum_{g'\in N_{\overline{G}}(g)}\ran(E_{g'}))=\{0\}$, we have
\[\ran(E_g)\otimes\C^m\cap(\sum_{g'\in N_{\overline{G}}(g)}\ran(E'_{g'})\otimes\C^m)=\{0\}.\]
This concludes the proof.
\end{proof}

As a corollary, we show that if we further restrict the von Neumann algebra in a tracial subspace representation to be commutative, we can get a fractional clique cover (or equivalently, a fractional coloring of the complement).
\begin{corollary}\label{cor: abelian C*-algebra)}
If a graph $G$ has a $\lambda$-tracial subspace representation and the corresponding von Neumann algebra is commutative, then $\overline{G}$ has a $d:r$-coloring with $d/r\leq\lambda$.
\end{corollary}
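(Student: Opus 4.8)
The plan is to reduce to the finite-dimensional case and then read a $d:r$-coloring of $\overline G$ directly off the subspace representation produced by \cref{prop: finite dim C^* algebra}. Let $(\cM,\{E_g\},\tau)$ be a $\lambda$-tracial subspace representation of $G$ with $\cM$ commutative. Although $\cM$ itself need not be finite-dimensional, the finitely many commuting projections $\{E_g:g\in V(G)\}$ generate a finite-dimensional commutative von Neumann subalgebra $\cM_0\subseteq\cM$: the $2^{|V(G)|}$ products $\prod_{g}E_g^{\epsilon_g}$ (with $E_g^{1}=E_g$ and $E_g^{0}=I-E_g$) are pairwise orthogonal projections summing to $I$, so $\cM_0\cong\C^N$ for some $N\leq 2^{|V(G)|}$. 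Since the $E_g$ commute, the join $E_{N_{\overline G}(g)}=I-\prod_{g'\in N_{\overline G}(g)}(I-E_{g'})$ and the wedge $E_g\wedge E_{N_{\overline G}(g)}=E_g E_{N_{\overline G}(g)}$ are polynomials in the $E_g$; hence they lie in $\cM_0$ and are computed identically in $\cM_0$ and in $\cM$. Thus condition~(i) of \cref{def: tracial subspace rep} is unaffected, and $\tau|_{\cM_0}$ is again a normal tracial state with $\tau(E_g)=1/\lambda$. So $(\cM_0,\{E_g\},\tau|_{\cM_0})$ is a $\lambda$-tracial subspace representation whose algebra is finite-dimensional.

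Next I would apply \cref{prop: finite dim C^* algebra} to $\cM_0$. Its proof returns a $(d,dt^{*})$-subspace representation $\{E_g'\in M(d,\C)\}$ with $d/(dt^{*})\leq\lambda$, where $E_g'=\oplus_{i=1}^{\ell}E_g^i\otimes I_{m_i}$ relative to a decomposition $\cM_0\cong\oplus_{i=1}^{\ell}M(d_i,\C)$. Because $\cM_0$ is commutative, every $d_i=1$; consequently each $E_g^i\in\{0,1\}$ and every $E_g'$ is a diagonal $0/1$ projection, so $\ran(E_g')$ is a \emph{coordinate} subspace of $\C^d$ of dimension $r:=dt^{*}$. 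Writing $\mathrm{supp}(g):=\{i\in[d]:(E_g')_{ii}=1\}$, the trivial-intersection condition verified in \cref{prop: finite dim C^* algebra}, namely $\ran(E_g')\cap\sum_{g'\in N_{\overline G}(g)}\ran(E_{g'}')=\{0\}$, is for coordinate subspaces equivalent to $\mathrm{supp}(g)\cap\mathrm{supp}(g')=\emptyset$ for every $\{g,g'\}\in E(\overline G)$.

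Finally, setting $c(g):=\mathrm{supp}(g)$ defines a map $V(G)\to\binom{[d]}{r}$ (each $|c(g)|=\rk(E_g')=r$) under which adjacent vertices of $\overline G$ receive disjoint color sets. This is exactly a $d:r$-coloring of $\overline G$ with $d/r=1/t^{*}\leq\lambda$, as claimed.

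The corollary is genuinely short given \cref{prop: finite dim C^* algebra}; the only point requiring care is the first paragraph, where one must check that restricting to the subalgebra $\cM_0$ generated by the projections preserves the tracial subspace representation --- in particular that the wedge and join, a priori defined as sot-limits, collapse to the algebraic expressions above for commuting projections and are therefore computed identically inside $\cM_0$. The remaining content is the observation that commutativity forces the representation of \cref{prop: finite dim C^* algebra} to be by coordinate subspaces, for which disjointness of supports along $\overline G$-edges is literally the defining property of a proper $d:r$-coloring.
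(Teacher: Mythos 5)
Your proof is correct, and its skeleton --- restrict to the finite-dimensional commutative subalgebra $\cM_0$ generated by the commuting projections, check that the wedge/join collapse to algebraic expressions so the tracial subspace representation survives the restriction, then feed the result into \cref{prop: finite dim C^* algebra} --- is the same as the paper's. Where you diverge is the final extraction of the coloring. The paper, after invoking the proof of \cref{prop: finite dim C^* algebra} to get rank-$r$ projections in $M(d,\C)$ with $d/r\leq\lambda$, uses commutativity only to deduce $E_gE_{g'}=0$ for $\{g,g'\}\in E(\overline G)$, i.e.\ a commutative feasible solution of the projective rank, and then outsources the conversion to a $d:r$-coloring to the proof of Theorem~6.8 in \cite{PAULSEN2016}. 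You instead open up the construction inside \cref{prop: finite dim C^* algebra}: since $\cM_0\cong\C^N$ forces every block size $d_i=1$, the output projections $E_g'=\oplus_i E_g^i\otimes I_{m_i}$ are diagonal $0/1$ matrices, their ranges are coordinate subspaces of common dimension $r=dt^*$, and the trivial-intersection condition along $\overline G$-edges becomes disjointness of supports, which is literally the definition of a $d:r$-coloring. This buys you a self-contained argument with no external citation, at the cost of leaning on the internal details of \cref{prop: finite dim C^* algebra}'s proof (the block form of $E_g'$) rather than just its statement; the paper's route is shorter on the page but imports a nontrivial lemma. Both are valid, and your support-disjointness observation is the cleaner way to see why the corollary is true.
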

\begin{proof}
Assume that $(\cH,\cM, \{E_g\},\tau)$ is a $\lambda$-tracial subspace representation and $\cM$ is commutative. Note that the $*$-algebra $\cM_0$ generated by the finite family of mutually commuting projection $\{E_g\}$ is always finite dimensional, since it is spanned by the monomials of the form $\prod_{g\in V(G)} E_g^{\epsilon_g}$, where  $\epsilon_g\in\{0,1\}$ for each $g\in V(G)$. Thus we may assume $\cM$ is finite-dimensional otherwise replace $\cM$ by the subalgebra $\cM_0$.

By the proof of~\cref{prop: finite dim C^* algebra}, we can assume that $\{\ran(E_g)\}$ forms a $d/r\leq\lambda$ subspace representation. For commuting operators $E,F \in \cM$ we have $E \wedge F = EF$. Hence
\[E_g E_{N_{\overline{G}}(g)}=E_g\wedge E_{N_{\overline{G}}(g)}=0.\]
Using that $E_{g'}\le E_{N_{\overline{G}}(g)}$ for every $g'\in N_{\overline{G}}(g)$, this implies $E_gE_{g'}=0$ for any $\{g,g'\}\in E(\overline{G})$. This gives a commutative feasible solution of the projective rank, which can be transformed into a $d:r$-coloring of $\overline{G}$ using the proof of~\cite[Theorem 6.8]{PAULSEN2016}.
\end{proof}

We show that the tracial Haemers bound is upper bounded by the tracial rank:
\begin{proposition}\label{prop: upper bound tracial rank}
Let $G$ be a graph. Then $\cH_{tr}(G)\leq\xitr(G)$.
\end{proposition}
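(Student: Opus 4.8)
The goal is to show $\cH_{tr}(G)\leq\xitr(G)$, so the plan is to take an arbitrary $\lambda$-tracial representation of $G$ (in the sense of \cref{def: tracial rank}) and manufacture from it a $\lambda$-tracial \emph{subspace} representation (in the sense of \cref{def: tracial subspace rep}), after which taking the infimum over $\lambda$ gives the claim. The two definitions share the normalization condition (ii), $\tau(E_g)=1/\lambda$, verbatim, so the entire content is to upgrade the orthogonality condition of the tracial rank, namely $E_gE_{g'}=0$ for all $\{g,g'\}\in E(\overline G)$, to the wedge condition of the tracial subspace representation, namely $E_g\wedge E_{N_{\overline G}(g)}=0$.

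First I would pass to the von Neumann algebra $\cM$ generated by $\cA$ (extending the tracial state, which remains normal on $\cM$), so that the projections $E_{N_{\overline G}(g)}=\bigvee_{g'\in N_{\overline G}(g)}E_{g'}$ are available as elements of $\cM$; this is exactly why \cref{def: tracial subspace rep} is phrased in the von Neumann-algebraic setting (cf.~\cref{remark}~a)). The key step is then to argue that $E_gE_{g'}=0$ for every neighbor $g'\in N_{\overline G}(g)$ forces $E_g\wedge E_{N_{\overline G}(g)}=0$. The clean way to see this: since $E_gE_{g'}=0$ for each $g'\in N_{\overline G}(g)$, the range of $E_g$ is orthogonal to the range of each $E_{g'}$, hence orthogonal to the (closed) span of all the ranges $\ran(E_{g'})$, which is precisely $\ran(E_{N_{\overline G}(g)})$. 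Thus $\ran(E_g)\cap\ran(E_{N_{\overline G}(g)})\subseteq \ran(E_g)\cap\ran(E_g)^\perp=\{0\}$, and the intersection projection $E_g\wedge E_{N_{\overline G}(g)}$, being the projection onto this trivial intersection, is $0$.

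The step deserving the most care is the claim that $\ran(E_g)\perp\ran(E_{N_{\overline G}(g)})$: I want orthogonality of $E_g$ to the entire \emph{closed} span $\ran(E_{N_{\overline G}(g)})$, not merely to each summand. This is where I would be explicit. Since $E_gE_{g'}=0$ and each $E_{g'}$ is a projection, we also have $E_{g'}E_g=(E_gE_{g'})^*=0$, so $E_g$ commutes-to-zero with each $E_{g'}$; consequently $E_g(\sum_{g'}E_{g'})=0$, and therefore $E_g$ annihilates $(\sum_{g'}E_{g'})^{1/n}$ for every $n$ (by functional calculus, since $(\sum_{g'}E_{g'})^{1/n}$ is a norm-limit of polynomials in $\sum_{g'}E_{g'}$ without constant term). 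Taking the strong-operator limit and using $E_{N_{\overline G}(g)}=\sotlim_n(\sum_{g'\in N_{\overline G}(g)}E_{g'})^{1/n}$ from \cref{eq: F_g} gives $E_gE_{N_{\overline G}(g)}=0$, whence $E_g\wedge E_{N_{\overline G}(g)}\leq E_g$ has range inside both $\ran(E_g)$ and $\ker(E_g)$, forcing it to vanish. With this, $(\cM,\{E_g\},\tau)$ is a $\lambda$-tracial subspace representation, and since every $\lambda$-tracial representation of $G$ yields such a subspace representation, the infima satisfy $\cH_{tr}(G)\leq\xitr(G)$. I do not expect any genuine obstacle here; the only subtlety is being careful that orthogonality to each neighbor upgrades to orthogonality to the wedge, which the functional-calculus argument above handles cleanly.
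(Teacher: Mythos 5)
Your proposal is correct and follows essentially the same route as the paper: upgrade the pairwise orthogonality $E_gE_{g'}=0$ to $E_gE_{N_{\overline G}(g)}=0$ (your functional-calculus justification of this step is in fact more explicit than the paper's), and read off a $\lambda$-tracial subspace representation. The one point you gloss over is the passage ``extending the tracial state, which remains normal on $\cM$'': a tracial state on a concretely represented $C^*$-algebra need not extend to a normal state on the von Neumann algebra it generates in that representation, which is precisely why the paper first passes to the GNS representation of $\tau$ and only then takes the SOT closure.
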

\begin{proof}
Let $(\cA,\{E_g\},\tau)$ be a feasible solution of $\xitr(G)$ with value $\lambda$, as given in~\cref{def: tracial rank}, where $\tau$ is a faithful tracial vector state. Taking the GNS construction $\pi: \cA\to B(\cH_\pi)$ of $\tau$, $\tau$ induces a tracial vector state $\phi_\tau$ on the $C^*$-algebra $\pi(\cA)$, which natural extends to a normal tracial state on the von Neumann algebra $\cM=\overline{\pi(\cA)^{\text{sot}}}$. Moreover, since $E_gE_{g'}=0$ for all $\{g,g'\}\in E(\overline{G})$,
 we have $\pi(E_g)\pi(E_{g'})=0,$ which implies $\pi(E_g)\pi(E_{N_{\overline{G}}(g)})=0$ (and thus in particular $\pi(E_g) \wedge \pi(E_{N_{\overline G}(g)}) = 0$). This shows that  $(\cH_\pi,\cM, \pi(E_g),\phi_\tau)$ is a tracial subspace representation of $G$ with value $\lambda$.
\end{proof}

As a first example, we compute the tracial Haemers bound of odd cycles. To do so, we follow the same strategy that has been used to show the analogous result for the fractional Haemers bound in~\cite{bukh2018fractional}.
\begin{example}\label{ex: tracial Haemers odd cycle}
For all $d\in\N$, $\cH_{tr}(C_{2d+1})=d+\frac{1}{2}$.
\end{example}
\begin{proof}
Note that $\cH_{tr}(C_{2d+1})\leq \cH_f(C_{2d+1})=d+\frac{1}{2}$. Hence, we only need to show the lower bound.
Let $(\cH,\cM,\{E_i\}_{i\in\Z_{2d+1}},\tau)$ be a $\lambda$-tracial subspace representation of $C_{2d+1}$. First observe that we have $(E_0\wedge E_1) \vee (E_0\wedge E_{2d})\leq E_0$ and using the graph structure of $C_{2d+1}$ we have $E_1\wedge E_{2d}=0$ since $\{1,2d\}\in E(\overline{C_{2d+1}})$. \cref{thm:rank} (i) then implies that
\begin{equation} \label{eq: first decomp}
\frac{1}{\lambda} = \tau(E_0) \geq \tau(E_0\wedge E_1)+\tau(E_0\wedge E_{2d}).
\end{equation}
We proceed to lower bound $\tau(E_0\wedge E_1)$ and $\tau(E_0 \wedge E_{2d})$. Note that both are of the form $\tau(E_i \wedge E_{i+1 \bmod 2d+1})$. Using the symmetry of the odd cycle, it thus suffices to lower bound $\tau(E_0 \wedge E_1)$.

For any $k \in \N$ for which  $2 \leq  k \leq d -1$, let $I_k$ be the independent set $I_k=\{3,5,7,\dots,2k-1\}$. Let $E_{\{0,1\}}=E_0\vee E_1$ be the projection onto $\cl(\ran(E_0)+\ran(E_1))$ and $E_{I_k}=\bigvee_{i\in I_k}E_i$ be the projection onto $\cl(\sum_{i\in I_k}\ran(E_i))$. Since the edge $\{0,1\}$ is not adjacent to any vertex in $I_{d-1}$, we have  $\tau(E_{\{0,1\}}\vee E_{I_{d-1}})\leq 1$. We can expand the left hand side using \cref{thm:rank}~(i) (iteratively) to show that
\begin{equation} \label{eq: chain indep set}
1 \geq \tau(E_{\{0,1\}}\vee E_{I_{d-1}})=\tau(E_{\{0,1\}})+\sum_{i\in I_{d-1}} \tau(E_i)=\tau(E_0)+\tau(E_1)-\tau(E_0\wedge E_1)+\sum_{i\in I_{d-1}}\tau(E_i).
\end{equation}
Here the first equality uses the fact that for every $2\leq k\leq d-1$ we have $\tau(E_{\{0,1\}}\vee E_{I_k})=\tau(E_{\{0,1\}}\vee E_{I_{k-1}})+\tau(E_{2k-1})$, since vertex $2k-1$ is not adjacent with any vertices in $I_{k-1}\cup\{0,1\}$. Combining \cref{eq: chain indep set} with the identity $\tau(E_i)=1/\lambda$ for all $i\in\Z_{2d+1}$, shows
\[\tau(E_0\wedge E_1)\geq \frac{d+1}{\lambda}-1.
\]
We can use this lower bound in \cref{eq: first decomp} to obtain
\[
\frac{1}{\lambda}=\tau(E_0)\geq \tau(E_0\wedge E_1+E_0\wedge E_{2d})=\tau(E_0\wedge E_1)+\tau(E_0\wedge E_{2d})\geq \frac{2d+2}{\lambda}-2.
\]
Multiplying both sides by $\lambda$ and rearranging gives $\lambda\geq d+\frac{1}{2}$.
\end{proof}
\begin{corollary}
The tracial Haemers bound and the Lov\'asz theta function are incomparable.
\end{corollary}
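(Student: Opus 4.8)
The plan is to establish incomparability by exhibiting two families of graphs on which the two parameters are separated in opposite directions. Recall that incomparability means neither $\vartheta(G) \leq \cH_{tr}(G)$ for all $G$ nor $\cH_{tr}(G) \leq \vartheta(G)$ for all $G$ holds. The strategy is to produce (a) a graph where $\cH_{tr}$ is strictly smaller than $\vartheta$, and (b) a graph where $\vartheta$ is strictly smaller than $\cH_{tr}$.

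For direction (a), I would use odd cycles, which is exactly why \cref{ex: tracial Haemers odd cycle} was proved immediately before this corollary. We have just computed $\cH_{tr}(C_{2d+1}) = d + \tfrac{1}{2}$, which equals the fractional clique cover number $\overline{\chi}_f(C_{2d+1})$. On the other hand, the Lov\'asz theta function of the odd cycle is the classical value $\vartheta(C_{2d+1}) = \tfrac{(2d+1)\cos(\pi/(2d+1))}{1+\cos(\pi/(2d+1))}$, which is strictly less than $d+\tfrac{1}{2}$ for every $d \geq 1$. Thus $\vartheta(C_{2d+1}) < \cH_{tr}(C_{2d+1})$, giving one of the two required strict inequalities. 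This direction is essentially immediate from the example just proved together with the known value of $\vartheta$ on odd cycles.

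For direction (b), I need a graph $G$ with $\cH_{tr}(G) < \vartheta(G)$. The natural candidates are the graphs used classically to separate the (fractional) Haemers bound from the Lov\'asz theta function: there exist graphs $G$ for which the fractional Haemers bound $\cH_f(G;\C)$ is strictly smaller than $\vartheta(G)$ (such separations, over suitable fields and then over $\C$ via \cite{Li2018quantum}, are recorded in the introduction). Since $\cH_{tr}(G) \leq \cH_f(G;\C)$ holds for every graph---because any $(d,r)$-subspace representation yields a $d/r$-tracial subspace representation, as noted just before \cref{prop: finite dim C^* algebra}---any such separating graph satisfies $\cH_{tr}(G) \leq \cH_f(G;\C) < \vartheta(G)$. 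This produces the second strict inequality.

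The main obstacle is really just correctly citing and invoking the existence of a graph separating $\cH_f(\cdot;\C)$ from $\vartheta$; the inequality $\cH_{tr} \leq \cH_f(\cdot;\C)$ and the odd-cycle computation do the rest. I would phrase the proof as follows.

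\begin{proof}
We exhibit graphs separating the two parameters in both directions. On one hand, by \cref{ex: tracial Haemers odd cycle} we have $\cH_{tr}(C_{2d+1}) = d + \tfrac{1}{2}$ for all $d \in \N$, while the Lov\'asz theta function satisfies $\vartheta(C_{2d+1}) = \tfrac{(2d+1)\cos(\pi/(2d+1))}{1+\cos(\pi/(2d+1))} < d + \tfrac{1}{2}$ for every $d \geq 1$~\cite{lovasz1979shannon}. Hence $\vartheta(C_{2d+1}) < \cH_{tr}(C_{2d+1})$.

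On the other hand, as observed before \cref{prop: finite dim C^* algebra}, every $(d,r)$-subspace representation of a graph $G$ is in particular a $d/r$-tracial subspace representation, so $\cH_{tr}(G) \leq \cH_f(G;\C)$ for every graph $G$. By the results discussed in the introduction there exist graphs $G$ with $\cH_f(G;\C) < \vartheta(G)$~\cite{Li2018quantum,haemers1979some}. For any such graph we obtain $\cH_{tr}(G) \leq \cH_f(G;\C) < \vartheta(G)$.

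Since neither $\cH_{tr} \leq \vartheta$ nor $\vartheta \leq \cH_{tr}$ holds for all graphs, the two parameters are incomparable.
\end{proof}
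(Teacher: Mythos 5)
Your proof is correct and follows essentially the same route as the paper: the paper uses $C_5$ (where $\cH_{tr}(C_5)=2.5>\sqrt{5}=\vartheta(C_5)$) for one direction and the complement of the Schl\"afli graph (where $\cH_{tr}(G)\leq\cH_f(G;\C)\leq 7<9=\vartheta(G)$, citing~\cite{haemers1979some}) for the other, which is exactly the concrete instance of the separation $\cH_f(G;\C)<\vartheta(G)$ you invoke abstractly. The only cosmetic differences are that you state the odd-cycle inequality in general rather than for $C_5$, and that you cite the existence of a $\cH_f$-vs-$\vartheta$ separating graph rather than naming it.
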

\begin{proof}
The previous example shows that $\cH_{tr}(C_5)=2.5>\sqrt{5}=\vartheta(C_5)$. On the other hand, let $G$ be the complement of the Schl\"afli graph, $\cH_{tr}(G)\leq\cH_f(G;\C)\leq \cH(G;\C)\leq 7<9=\vartheta(G)$, where the second to last inequality was proved in~\cite{haemers1979some}.
\end{proof}

In the following subsections, we proceed to establish the monotonicity, additivity, normalization and multiplicativity of the tracial Haemers bound, along with some equivalent definitions. These enable us to conclude that $\cH_{tr}\in \bX(\cG,\leq_{qc})$ and for any graph $G$,
\[
\alpha_{qc}(G)\leq\Theta_{qc}(G)\leq\cH_{tr}(G).
\]

\subsection{Monotonicity of the tracial Haemers bound}
We first introduce an equivalent formulation of a tracial subspace representation of a graph using two subspaces per vertex of $G$.
\begin{proposition}\label{prop: first relaxation of tracial Haemers bound}
Let $G$ be a graph. Then $G$ has a $\lambda$-tracial subspace representation if and only if there exist a Hilbert space $\cH$, a von Neumann algebra $\cM\subseteq B(\cH)$ with a normal tracial state $\tau:\cM \to \C$, containing projections $E_g$ and $F_g$ for all $g\in V(G)$, such that
\begin{enumerate}[label=\upshape(\roman*)]
\item $\ran(E_g)\cap \ran(F_g)^\perp=\{0\}$ for any $g\in V(G)$,
\item $F_gE_{g'}=0$ for any $\{g,g'\}\in E(\overline{G})$,
\item $\tau(F_g)\geq \tau(E_g) \geq \frac{1}{\lambda}$ for all $g\in V(G)$.
\end{enumerate}
\end{proposition}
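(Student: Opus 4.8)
The plan is to prove the two implications separately, using in both the \emph{complementary} projection $F_g := I - E_{N_{\overline G}(g)}$ as the bridge between the single-projection wedge condition of \cref{def: tracial subspace rep} and the two-projection orthogonality conditions stated here. Note first that $E_{N_{\overline G}(g)} = \bigvee_{g'\in N_{\overline G}(g)} E_{g'} \in \cM$ by SOT-closedness of $\cM$, so $F_g \in \cM$ as well.

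For the forward implication, let $(\cM,\{E_g\},\tau)$ be a $\lambda$-tracial subspace representation and set $F_g := I - E_{N_{\overline G}(g)}$. Then $\ran(F_g)^\perp = \ran(E_{N_{\overline G}(g)})$, so condition (i) of the proposition reads $\ran(E_g)\cap\ran(E_{N_{\overline G}(g)}) = \{0\}$, i.e.\ $E_g \wedge E_{N_{\overline G}(g)} = 0$, which is exactly \cref{def: tracial subspace rep}(i). For (ii): if $\{g,g'\}\in E(\overline G)$ then $g'\in N_{\overline G}(g)$, hence $E_{g'}\le E_{N_{\overline G}(g)}$, so $\ran(E_{g'})\subseteq \ran(F_g)^\perp = \ker F_g$ and therefore $F_g E_{g'} = 0$. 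The one step requiring genuine input is the trace inequality (iii), and here I would appeal to the rank--nullity identity \cref{thm:rank}(i): since $E_g \wedge (I - F_g) = E_g \wedge E_{N_{\overline G}(g)} = 0$, that identity gives $\tau(E_g) + \tau(I - F_g) = \tau\big(E_g \vee E_{N_{\overline G}(g)}\big) \le 1$, whence $\tau(F_g) = 1 - \tau(E_{N_{\overline G}(g)}) \ge \tau(E_g) = \tfrac{1}{\lambda}$, which is precisely $\tau(F_g)\ge\tau(E_g)\ge\tfrac1\lambda$. This choice of $F_g$ is natural because it is the \emph{largest} projection orthogonal to every $E_{g'}$ with $g'\in N_{\overline G}(g)$, which makes (ii) automatic and maximizes $\tau(F_g)$.

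For the backward implication, suppose $E_g,F_g$ satisfy (i)--(iii); I claim the family $\{E_g\}$ (with the same $\cM,\tau$) already is a $\lambda$-tracial subspace representation in the relaxed sense of \cref{remark}(b). Indeed, from $F_gE_{g'}=0$ for every $g'\in N_{\overline G}(g)$ we get $\ran(E_{g'})\subseteq\ker F_g=\ran(F_g)^\perp$; since $\ran(F_g)^\perp$ is closed it contains the closed span $\ran(E_{N_{\overline G}(g)})$, i.e.\ $E_{N_{\overline G}(g)}\le I-F_g$. Combined with (i), this yields $\ran(E_g)\cap\ran(E_{N_{\overline G}(g)})\subseteq\ran(E_g)\cap\ran(F_g)^\perp=\{0\}$, so $E_g\wedge E_{N_{\overline G}(g)}=0$; and (iii) gives $\tau(E_g)\ge\tfrac1\lambda$. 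Invoking \cref{remark}(b) to pass from the inequality $\tau(E_g)\ge\tfrac1\lambda$ to a genuine $\lambda$-tracial subspace representation completes this direction.

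Finally, the only delicate step is the trace inequality in the forward direction: the arguments for (i) and (ii) in both directions are purely lattice-theoretic manipulations of ranges and kernels, whereas (iii) is where the normality and traciality of $\tau$ enter, through \cref{thm:rank}(i). I expect no other obstacle, provided one is careful that $E_{N_{\overline G}(g)}$ (an SOT-limit) lies in $\cM$ and that the relevant subspace inclusions survive passing to closures.
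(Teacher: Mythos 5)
Your proposal is correct and follows essentially the same route as the paper: in the ``only if'' direction the paper also takes $F_g$ to be the complement of $E_{N_{\overline G}(g)}$ and obtains $\tau(F_g)\ge\tau(E_g)$ from \cref{thm:rank} (it uses item (ii) where you use the equivalent item (i)), and in the ``if'' direction it likewise shows $\cl(\sum_{g'\in N_{\overline G}(g)}\ran(E_{g'}))\subseteq\ran(F_g)^\perp$ and intersects with $\ran(E_g)$, relying on the relaxation of \cref{remark}(b) to handle the inequality $\tau(E_g)\ge\tfrac1\lambda$. No gaps.
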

\begin{proof}
Let $(\cH,\cM,\{E_g\},\{F_g\},\tau)$ be given as in \cref{prop: first relaxation of tracial Haemers bound}. Let $S_g=\ran(E_g)$ and $T_g=\ran(F_g)$ for any $g\in V(G)$. We prove that $S_g\cap\cl(\sum_{g'\in N_{\overline{G}}(g)} S_{g'})=\{0\}$, which implies that $(\cH, \hat{\cM},\{E_g\},\tau)$ is a $\lambda$-tracial subspace representation, where $\hat{\cM}$ is the von Neumann algebra generated by the set of projections $\{E_g\}$. Assume $x\in S_g\cap\cl(\sum_{g'\in N_{\overline{G}}(g)} S_{g'})$, we show $x=0$. To see this, note that $E_{g'} F_{g}=0$ for any $\{g,g'\}\in E(\overline{G})$. Thus $T_g \perp S_{g'}$ for any $\{g,g'\}\in E(\overline{G})$ and therefore $T_g\perp\cl(\sum_{g'\in N_{\overline{G}}(g)} S_{g'})$. This shows that $\cl(\sum_{g'\in N_{\overline{G}}(g)} S_{g'}) \subseteq T_g^\perp$ and hence $x \in T_g^\perp$. Also, by assumption $x \in S_g$. However $S_g \cap T_g^\perp = \{0\}$, so $x = 0$.

Let $(\cH,\cM,\{E_g\},\tau)$ be a $\lambda$-tracial subspace representation as given in \cref{def: tracial subspace rep}. For each $g\in V(G)$, let $F_g=I_\cM-(\bigvee_{g\in V(G)}E_g)$.  Then $F_g$ belongs to $\cM$ and we have $\ran(E_g)\cap \ran(F_g)^\perp=\{0\}$ by construction. Moreover we have $F_g E_{g'}=0$ for any $\{g,g'\}\in E(\overline{G})$ since the range of $E_{g'}$ is orthogonal to the range of $F_g$. It remains to show that $\tau(F_g)\geq \tau(E_g)$ for every $g\in V(G)$.
Since $E_g \wedge (I-F_g) = 0$, \cref{thm:rank} (ii) then shows that $\tau(F_g) \geq \tau(E_g)$ for any $g\in V(G)$.
\end{proof}

We are ready to prove that $\cH_{tr}$ is $\leq_{qc}$-monotone.
\begin{proposition}\label{prop: monotone tracial Haemers}
For graphs $G,H\in\cG$, we have that $G\leq_{qc} H$ implies $\cH_{tr}(G)\leq\cH_{tr}(H)$.
\end{proposition}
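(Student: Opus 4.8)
The plan is to glue the commuting homomorphism certifying $G\leq_{qc}H$ to a near-optimal tracial subspace representation of $H$ via a tensor product, and to read off a tracial subspace representation of $G$. Unfolding the definitions, $G\leq_{qc}H$ means $\overline G\QCto\overline H$, so there is a $C^*$-algebra $\cA$ with a tracial state $\sigma$ and projections $E_{g,h}\in\cA$ (for $g\in V(G)$, $h\in V(H)$) with $\sum_{h}E_{g,h}=I$ for every $g$, and $E_{g,h}E_{g',h'}=0$ whenever $\{g,g'\}\in E(\overline G)$ and ($h=h'$ or $\{h,h'\}\in E(H)$). Passing to the GNS representation with respect to $\sigma$ exactly as in the proof of \cref{prop: upper bound tracial rank}, I may assume $\cA$ is a finite von Neumann algebra and $\sigma$ a normal tracial state. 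I would first record the elementary fact that, for fixed $g$, the projections $\{E_{g,h}\}_h$ are mutually orthogonal: since they are projections summing to $I$, expanding $(\sum_h E_{g,h})^2=\sum_h E_{g,h}$ forces $E_{g,h}E_{g,h'}=0$ for $h\neq h'$. For $H$ I would not use \cref{def: tracial subspace rep} directly but its two-subspace reformulation: fix $\lambda>\cH_{tr}(H)$ and apply \cref{prop: first relaxation of tracial Haemers bound} to a $\lambda$-tracial subspace representation of $H$, obtaining a von Neumann algebra $\cM$ with a normal tracial state $\mu$ and projections $P_h,Q_h$ satisfying $\ran(P_h)\cap\ran(Q_h)^\perp=\{0\}$, $Q_hP_{h'}=0$ for all $\{h,h'\}\in E(\overline H)$, and $\mu(Q_h)\geq\mu(P_h)\geq\frac{1}{\lambda}$.

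The construction lives in the von Neumann algebra tensor product $\cA\overline{\otimes}\cM$, equipped with the normal tracial state $\sigma\otimes\mu$ (the product of normal tracial states is again normal and tracial). There I define, for each $g\in V(G)$,
\[
D_g=\sum_{h\in V(H)}E_{g,h}\otimes P_h,\qquad \widetilde D_g=\sum_{h\in V(H)}E_{g,h}\otimes Q_h.
\]
By the orthogonality of $\{E_{g,h}\}_h$, both $D_g$ and $\widetilde D_g$ are projections. My goal is to verify that $\{D_g\},\{\widetilde D_g\}$ are two-subspace data for $G$ with parameter $\lambda$, so that \cref{prop: first relaxation of tracial Haemers bound} yields a $\lambda$-tracial subspace representation of $G$ and hence $\cH_{tr}(G)\le\lambda$; letting $\lambda\downarrow\cH_{tr}(H)$ then finishes the proof. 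The trace condition is immediate: $(\sigma\otimes\mu)(D_g)=\sum_h\sigma(E_{g,h})\mu(P_h)\geq\frac{1}{\lambda}\sum_h\sigma(E_{g,h})=\frac{1}{\lambda}$, and $(\sigma\otimes\mu)(\widetilde D_g)\geq(\sigma\otimes\mu)(D_g)$ since $\mu(Q_h)\geq\mu(P_h)$. For the intersection condition I would use that $\{\ran(E_{g,h})\}_h$ is an orthogonal decomposition of the representing Hilbert space of $\cA$, on whose $h$-th summand (tensored with the representing space of $\cM$) the operators $D_g,\widetilde D_g$ act as $I\otimes P_h$ and $I\otimes Q_h$; consequently $\ran(D_g)\cap\ran(\widetilde D_g)^\perp=\bigoplus_h\big(\ran(E_{g,h})\otimes(\ran(P_h)\cap\ran(Q_h)^\perp)\big)=\{0\}$ by the property of $H$.

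The crux is the orthogonality condition $\widetilde D_gD_{g'}=0$ for $\{g,g'\}\in E(\overline G)$. Expanding,
\[
\widetilde D_gD_{g'}=\sum_{h,h'}E_{g,h}E_{g',h'}\otimes Q_hP_{h'},
\]
and I claim every summand vanishes by a trichotomy in $(h,h')$. If $h=h'$ or $\{h,h'\}\in E(H)$, then the commuting homomorphism relation gives $E_{g,h}E_{g',h'}=0$; in the only remaining case, $h\neq h'$ with $\{h,h'\}\notin E(H)$, i.e.\ $\{h,h'\}\in E(\overline H)$, the representation of $H$ gives $Q_hP_{h'}=0$. Thus the product is zero term by term.

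The main thing to get right is precisely this case analysis, which matches the ``forbidden'' pairs of the commuting homomorphism to the orthogonality $Q_hP_{h'}=0$ supplied by $H$; everything else is bookkeeping. The only technical point deserving care is ensuring that $\cA\overline{\otimes}\cM$ is a genuine von Neumann algebra carrying a \emph{normal} tracial state (hence that the GNS passage for $\cA$ and the product-state construction are legitimate), so that \cref{prop: first relaxation of tracial Haemers bound} may be invoked in reverse to conclude that $G$ has a $\lambda$-tracial subspace representation.
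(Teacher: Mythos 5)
Your proposal is correct and follows essentially the same route as the paper: tensor the commuting-homomorphism projections with the two-projection reformulation of a near-optimal representation of $H$ (\cref{prop: first relaxation of tracial Haemers bound}), and check the three conditions, with the orthogonality condition handled by exactly the same case split on $(h,h')$. The only cosmetic difference is in verifying $\ran(D_g)\cap\ran(\widetilde D_g)^\perp=\{0\}$, where you read it off from the block decomposition $\bigoplus_h \ran(E_{g,h})\otimes\cH_\cM$ while the paper computes $A_gB_gA_g$ and uses a Schmidt decomposition; both arguments are valid and rest on the same orthogonality of $\{E_{g,h}\}_h$.
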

\begin{proof}
Let $(\cH_H,\cM_H,\{E_h\},\{F_h\},\tau_H)$ be a $\lambda$-tracial subspace representation of $H$ as given in~\cref{prop: first relaxation of tracial Haemers bound}. Let $\cH_{G,H}$ be a Hilbert space,
and let $\cA(G,H)\subseteq B(\cH_{G,H})$ be a $C^*$-algebra containing projections $\{P_{g,h}:~g\in V(G),h\in V(H)\}$, with tracial state $\tau_{G,H}:\cA(G,H) \to \C$, which is a feasible solution of $G \leq_{qc} H$. We first show that we can assume that $\cA(G,H)$ is also a von Neumann algebra. Indeed, we can take the GNS representation $\pi$ of $\cA(G,H)$ with respect to $\tau_{G,H}$, and the von Neumann algebra $\cM(G,H)$ generated by $\pi(\cA(G,H))$, the image of the GNS construction. Then the tracial state $\tau'$ naturally extends to $\cM(G,H)$. For ease of notation, we also use $\tau_{G,H}$ to denote the tracial state on $\cM(G,H)$ and $\{P_{g,h}:~g\in V(G),h\in V(H)\}$ to denote their images in $\cM(G,H)$.

We construct a $\lambda$-tracial subspace representation of $G$. Set $A_g=\sum_{h\in V(H)}P_{g,h}\otimes E_h$ and $B_g=\sum_{h\in V(H)}P_{g,h}\otimes F_h$ for all $g\in V(G)$ and let $\cM_G$ be the von Neumann algebra generated by $\{A_g\}$ and $\{B_g\}$, which is a subalgebra of $\cM(G,H)\overline{\otimes}\cM_H$. Let $\tau_G=\tau_{G,H}\otimes\tau_H$ be the product tracial state on $\cM(G,H)\overline{\otimes}\cM_H$, thus also a tracial state on $\cM_G$.
We first verify that $A_g$ and $B_g$ are projections. Indeed,
\[
A_g^2=\sum_{h,h'\in V(H)} P_{g,h}P_{g,h'}\otimes E_hE_{h'}=\sum_{h\in V(H)}P_{g,h}\otimes E_h=A_g,
\]
where the second equality uses the fact that $P_{g,h}P_{g,h'}=0$ for all $h\neq h'$. The same argument works for $B_g$ as well. We then lower bound the trace of $A_g$ and $B_g$. Note that
\[
\tau_G(A_g)=\sum_{h\in V(H)}\tau_{G,H}(P_{g,h})\otimes \tau_H(E_h)\geq \frac{1}{\lambda}\sum_{h\in V(H)}\tau_{G,H}(P_{g,h})=\frac{1}{\lambda},
\]
where the last equality holds since $\sum_{h\in V(H)}P_{g,h}=I_{\cM(G,H)}$. On the other hand,
\[
\tau_G(B_g)=\sum_{h\in V(H)}\tau_{G,H}(P_{g,h})\otimes \tau_H(F_h)\geq\sum_{h\in V(H)}\tau_{G,H}(P_{g,h})\otimes \tau_H(E_h)=\tau_G(A_g).
\]

We now verify that $B_gA_{g'}=0$ if $\{g,g'\}\in E(\overline{G})$. Note that
\[
B_gA_{g'}=\sum_{h,h'\in V(H)} P_{g,h}P_{g',h'}\otimes F_hE_{h'}=\sum_{\{h,h'\}\in E(\overline{H})}P_{g,h}P_{g',h'}\otimes F_hE_{h'}=0,
\]
where the second equality holds since $P_{g,h}P_{g',h'}=0$ if $\{g,g'\}\in E(\overline{G})$ and $\{h,h'\}\in E(H)$ or $h=h'$ and the third equality holds since $F_hE_{h'}=0$ if $\{h,h'\}\in E(\overline{H})$.

We are left to show that $\ran(A_g)\cap\ran(B_g)^\perp=\{0\}$ for every $g\in V(G)$. We first observe that
\[
\ran(A_g)=\oplus_{h\in V(H)}\ran(P_{g,h}\otimes E_h)~\text{and}~\ran(B_g)=\oplus_{h\in V(H)}\ran(P_{g,h}\otimes F_h),
\]
since $(P_{g,h}\otimes E_h)(P_{g,h'}\otimes E_{h'})=P_{g,h}P_{g,h'}\otimes E_hE_{h'}=0$ and $(P_{g,h}\otimes F_h)(P_{g,h'}\otimes F_{h'})=P_{g,h}P_{g,h'}\otimes F_hF_{h'}=0$. Assume $x\in\ran(A_g)\cap\ran(B_g)^\perp$, that is, assume $A_gx=x$ and $B_gx=0$. Note that
\[
A_gB_gA_g=\sum_{h_1,h_2,h_3\in V(H)}P_{g,h_1}P_{g,h_2}P_{g,h_3}\otimes E_{h_1}F_{h_2}E_{h_3}=\sum_{h\in V(H)}P_{g,h}\otimes E_{h}F_{h}E_{h},
\]
where the second equality use the fact that $P_{g,h}$ is a projection and $P_{g,h}P_{g,h'}=0$ for any $h\neq h'$.
 Thus,
\[
0=\langle x, B_gx\rangle=\langle x,A_gB_gA_gx\rangle=\sum_{h\in V(H)} \langle x,(P_{g,h}\otimes E_{h}F_{h}E_{h})x\rangle.
\]
This implies that $\langle x,(P_{g,h}\otimes E_{h}F_{h}E_{h})x\rangle=0$ for any $h\in V(H)$ as $P_{g,h}\otimes E_{h}F_{h}E_{h}$ is a positive operator. Let $x_h=(P_{g,h}\otimes E_{h})x\in\ran(P_{g,h})\otimes \ran(E_h)$. We shall prove that $x_h=0$ for all $h\in V(H)$, then $x=0$ and we are done. For every fixed $h$, write the Schmidt decomposition of $x_h$ as $\sum_{i}\mu_iy_i\otimes z_i$, where $\mu_i>0$, $y_i\in\ran(P_{g,h})$ and $z_i\in \ran(E_h)$ for every $i$. Then the identity $\langle x,(P_{g,h}\otimes E_{h}F_{h}E_{h})x\rangle=0$ further implies
\[
0=\langle x_h, (P_{g,h}\otimes F_{h})x_h\rangle=\sum_{i,j}\mu_i\langle y_i, P_{g,h}y_j\rangle\langle z_i, F_hz_j\rangle=\sum_i\mu_i\langle z_i,F_hz_i\rangle,
\]
where the last equality holds since $P_{g,h}y_i=y_i$ for every $i$ and $\langle y_i,y_j\rangle=0$ for every $i\neq j$. Thus, $F_hz_i=0$ for every $i$. Note that $\ran(E_h)\cap\ran(F_h)^\perp=\{0\}$, which implies $z_i=0$ for all $i$. This concludes our proof.
\end{proof}

\subsection{Additivity of the tracial Haemers bound}
In this subsection, we show additivity of $\cH_{tr}(G)$. The subadditivity follows naturally from the original definition of tracial Haemers bound.

\begin{proposition} \label{prop: subadditive}
For graphs $G$ and $H$ we have $\cH_{tr}(G \sqcup H) \leq \cH_{tr}(G) +\cH_{tr}(H)$.
\end{proposition}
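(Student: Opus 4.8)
The plan is to mirror the additivity proof for the tracial rank (\cref{prop: additivity}): build a direct-sum von Neumann algebra from tracial subspace representations of $G$ and $H$ and weight their traces. Concretely, let $(\cH_G,\cM_G,\{E_g\},\tau_G)$ be a $\lambda_G$-tracial subspace representation of $G$ and $(\cH_H,\cM_H,\{F_h\},\tau_H)$ be a $\lambda_H$-tracial subspace representation of $H$, where I may take $\lambda_G,\lambda_H$ arbitrarily close to $\cH_{tr}(G)$ and $\cH_{tr}(H)$. I would set $\cM=\cM_G\oplus\cM_H$, which is again a (finite) von Neumann algebra, and equip it with the convex combination $\tau((X,Y))=\tfrac{\lambda_G}{\lambda_G+\lambda_H}\tau_G(X)+\tfrac{\lambda_H}{\lambda_G+\lambda_H}\tau_H(Y)$. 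This is a normal tracial state, since normality, traciality, and unitality all pass through convex combinations of the normal tracial states $\tau_G,\tau_H$ (note $\tau(I_\cM)=1$). Each vertex then gets its projection placed in its own block: $A_v=(E_v,0)$ for $v\in V(G)$ and $A_v=(0,F_v)$ for $v\in V(H)$.

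Next I would verify the trace normalization. Directly, $\tau(A_v)=\tfrac{\lambda_G}{\lambda_G+\lambda_H}\tau_G(E_v)=\tfrac{1}{\lambda_G+\lambda_H}$ for $v\in V(G)$, and symmetrically $\tau(A_v)=\tfrac{1}{\lambda_G+\lambda_H}$ for $v\in V(H)$, so condition (ii) of \cref{def: tracial subspace rep} holds with $\lambda=\lambda_G+\lambda_H$.

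The heart of the argument is the wedge condition (i). The key structural observation concerns the complement of a disjoint union: since $G\sqcup H$ has no edges between $V(G)$ and $V(H)$, in $\overline{G\sqcup H}$ every vertex of $V(G)$ is adjacent to every vertex of $V(H)$, so for $g\in V(G)$ we have $N_{\overline{G\sqcup H}}(g)=N_{\overline G}(g)\cup V(H)$. Because $\cM$ is a direct sum, SOT limits, and hence the lattice operations $\vee,\wedge$, are computed blockwise: the join $A_{N_{\overline{G\sqcup H}}(g)}$ equals $\big(E_{N_{\overline G}(g)},\,\bigvee_{h\in V(H)}F_h\big)$, and wedging with $A_g=(E_g,0)$ yields $\big(E_g\wedge E_{N_{\overline G}(g)},\,0\big)$. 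By condition (i) for the representation of $G$, the first block vanishes, so $A_g\wedge A_{N_{\overline{G\sqcup H}}(g)}=0$; the case $v\in V(H)$ is symmetric. This exhibits a $(\lambda_G+\lambda_H)$-tracial subspace representation of $G\sqcup H$, and taking the infimum over $\lambda_G,\lambda_H$ gives the claim.

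The only point where I would slow down, and the main (mild) obstacle, is justifying that the join and wedge in the direct sum decompose blockwise, so that the $V(H)$-part of the neighborhood contributes nothing to the $G$-block after wedging with $A_g$. This follows because a projection of $\cM=\cM_G\oplus\cM_H$ acts as $P_G\oplus P_H$ on $\cH_G\oplus\cH_H$, and both the closure of a sum of ranges and the intersection of ranges respect this orthogonal decomposition. Everything else is a routine transcription of the tracial rank computation.
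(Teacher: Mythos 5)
Your proof is correct and takes essentially the same approach as the paper: a direct sum $\cM_G\oplus\cM_H$ with block-diagonal projections and the convex-combination trace weighted by $\lambda_G/(\lambda_G+\lambda_H)$ and $\lambda_H/(\lambda_G+\lambda_H)$. The only (immaterial) difference is that you verify the wedge condition of \cref{def: tracial subspace rep} directly via the blockwise decomposition of $\vee$ and $\wedge$ on $\cH_G\oplus\cH_H$, whereas the paper runs the identical construction through the two-family formulation of \cref{prop: first relaxation of tracial Haemers bound}.
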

\begin{proof}
Let $(\cH_G,\cM_G,\{E_g\},\{F_g\},\tau_G)$ and $(\cH_H,\cM_H,\{\tilde E_h\},\{\tilde F_h\},\tau_H)$ be $\lambda_G$-tracial subspace representation of $G$ and $\lambda_H$-tracial subspace representation of $H$, respectively, given in terms of~\cref{prop: first relaxation of tracial Haemers bound}. We construct a $(\lambda_G+\lambda_H)$-tracial subspace representation of $G \sqcup H$. For $g \in V(G)$ define $A_g = (E_g,0)$ and $B_g=(F_g,0)$ and for $h \in V(H)$ let $A_h =(0,\tilde E_h)$ and $B_h=(0,\tilde F_h)$. Let $\cH=\cH_G\oplus\cH_H$ and $\cM$ be the von Neumann algebra generated by $\{A_v:v\in V(G)\cup V(H)\}$ and $\{B_v:v\in V(G)\cup V(H)\}$. Note that $\cM\subseteq\cM_G\oplus\cM_H$. Finally, for an element $(X,Y) \in \cM$, define $\tau((X,Y)) = \frac{\lambda_G}{\lambda_G+\lambda_H}\tau_G(X) + \frac{\lambda_H}{\lambda_G + \lambda_H} \tau_H(Y)$. We verify that $\tau$ is a tracial state on $\cM$. Note that $\tau((X^*X,Y^*Y)=\frac{\lambda_G}{\lambda_G+\lambda_H}\tau_G(X^*X) + \frac{\lambda_H}{\lambda_G + \lambda_H} \tau_H(Y^*Y)\geq 0$, $\tau((I_{\cM_G},I_{\cM_H}))=1$ and
\[
\begin{split}
\tau(X_1X_2,Y_1Y_2)&=\frac{\lambda_G}{\lambda_G+\lambda_H}\tau_G(X_1X_2) + \frac{\lambda_H}{\lambda_G + \lambda_H} \tau_H(Y_1Y_2)\\
&=\frac{\lambda_G}{\lambda_G+\lambda_H}\tau_G(X_2X_1) + \frac{\lambda_H}{\lambda_G + \lambda_H} \tau_H(Y_2Y_1)=\tau(X_2X_1,Y_2Y_1).
\end{split}
\]
We show that $(\cH,\cM,\{A_v\}_{v\in V(G)\cup V(H)},\{B_u\}_{u\in V(G)\cup V(H)},\tau)$ is a $(\lambda_G+\lambda_H)$-tracial subspace representation of $G\sqcup H$. First note that for $u \in V(G)$ we have $\tau(A_u) = \frac{\lambda_G}{\lambda_G+\lambda_H} \tau_G(E_u) + 0 \geq \frac{1}{\lambda_G+\lambda_H}$ and similarly for $v \in V(H)$, we have $\tau(A_v) \geq 0 + \frac{\lambda_H}{\lambda_G+\lambda_H} \tau_H(\tilde E_v) \geq \frac{1}{\lambda_G+\lambda_H}$. It is also straightforward that $\tau(B_v)\geq \tau(A_v)$ for every $v\in V(G)\cup V(H)$. We then verify that $B_vA_u=0$ for $\{v,u\}\in E(\overline{G\sqcup H})$. If $v\in V(G)$ and $u\in V(H)$, $B_vA_u=(F_v,0)(0,\tilde E_u)=0$. A similar argument holds for $v\in V(H)$ and $u\in V(G)$. If $v,u\in V(G)$ and $\{v,u\}\in E(\overline{G})$, $B_vA_u=(F_v,0)(E_u,0)=(F_vE_u,0)=0$. A similar argument holds for $v,u\in V(H)$ and $\{v,u\}\in E(\overline{H})$. Finally we verify that $\ran(A_u)\cap\ran(B_u)^\perp=\{0\}$. We do so for $u\in V(G)$ and the argument for $u\in V(H)$ is similar. Note that $\ran(A_u)=(\ran(E_u), 0)$ and $\ran(B_u)=(\ran(F_u), 0)$. Then $\ran(B_u)^\perp=(\ran(F_u)^\perp,B(\cH_H))$. Then $\ran(A_u)\cap\ran(B_u)^\perp=\{0\}$ and $\ran(E_u)\cap\ran(F_u)^\perp=\{0\}$. This concludes the proof.
\end{proof}

To show superadditivity, we first strengthen the notion of a subspace representation to that of a projection representation by replacing the subspace intersection condition by a norm-inequality condition.
\begin{definition}[Tracial projection representation] \label{def: tracial projection rep}
We say a graph $G$ has a \emph{$\lambda$-tracial projection representation} if a von Neumann algebra $\cM$ containing projections $E_g$ and $F_g$ for all $g\in V(G)$, with a normal tracial state $\tau:\cA \to \C$, such that
\begin{enumerate}[label=\upshape(\roman*)]
\item $\norm{E_g(I_\cA-F_{g})}<1$ for any $g\in V(G)$,
\item $F_gE_{g'}=0$ for any $\{g,g'\}\in E(\overline{G})$,
\item $\tau(F_g)\geq \tau(E_g) \geq \frac{1}{\lambda}$ for all $g\in V(G)$.
\end{enumerate}
\end{definition}
As for \cref{def: tracial subspace rep}, the Hilbert $\cH$ in the above definition is unnecessary but we impose it for the convenience of discussion.
Note that condition (i) above is a strengthening of the projection-intersection condition. Indeed, for two projections $P,Q\in B(\cH)$, if $\norm{PQ}<1$, then $P\wedge Q=0$. If the underlying Hilbert space is finite dimensional, then the reverse direction also holds. In the infinite-dimensional setting, the reverse direction does not necessarily hold.  Nevertheless, we shall prove that, for any graph $G$, the infimum over tracial subspace representations and the infimum over tracial projection representation are the same.
\begin{proposition}\label{prop: first relaxation of tracial Haemers bound 2}
Let $G$ be a graph. Then
\[
\cH_{tr}(G)=\inf\{\lambda:~G~\text{has a}~\lambda\text{-tracial projection representation}\}.
\]
\end{proposition}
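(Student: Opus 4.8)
The plan is to prove the two inequalities between $\cH_{tr}(G)$ and the projection-representation infimum separately, using the two-subspace reformulation of \cref{prop: first relaxation of tracial Haemers bound} as the bridge. One direction is essentially immediate: every $\lambda$-tracial projection representation is automatically a $\lambda$-tracial subspace representation in the sense of \cref{prop: first relaxation of tracial Haemers bound}. Indeed, for projections $P,Q$ one has $\norm{PQP}=\norm{PQ}^2$, so condition (i) of \cref{def: tracial projection rep}, namely $\norm{E_g(I_\cM-F_g)}<1$, forces $\big(E_g(I_\cM-F_g)E_g\big)^n\to 0$ in norm and hence $E_g\wedge(I_\cM-F_g)=\sotlim_{n\to\infty}\big(E_g(I_\cM-F_g)E_g\big)^n=0$, which is exactly condition (i) of the subspace representation. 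Conditions (ii) and (iii) are literally the same in the two definitions. This already yields $\cH_{tr}(G)\le\inf\{\lambda:~G~\text{has a}~\lambda\text{-tracial projection representation}\}$.

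For the reverse inequality I would start from an arbitrary $\lambda$-tracial subspace representation $(\cH,\cM,\{E_g\},\{F_g\},\tau)$ as in \cref{prop: first relaxation of tracial Haemers bound} and perturb the projections $E_g$ to upgrade the intersection condition to the strict norm bound, at the cost of an arbitrarily small loss in the trace. The key tool is \cref{thm:rank}(iii): since condition (i) gives $E_g\wedge(I_\cM-F_g)=0$, for any $\eps>0$ it produces a projection $E_g^\eps\in\cM$ with $E_g^\eps\le E_g$, $\tau(E_g-E_g^\eps)\le\eps$, and $\norm{E_g^\eps(I_\cM-F_g)}<1$. Applying this to every vertex $g$ (there are finitely many) with a common $\eps$, I set $\tilde E_g:=E_g^\eps$ and keep $\tilde F_g:=F_g$.

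It then remains to verify that $(\cH,\cM,\{\tilde E_g\},\{\tilde F_g\},\tau)$ is a tracial projection representation with parameter close to $\lambda$. Condition (i) of \cref{def: tracial projection rep} holds by construction. For condition (ii), note that $F_g E_{g'}=0$ means $\ran(E_{g'})\subseteq\ker(F_g)$; since $\tilde E_{g'}=E_{g'}^\eps\le E_{g'}$ only shrinks the range, we still have $\tilde F_g\tilde E_{g'}=F_g E_{g'}^\eps=0$ for every $\{g,g'\}\in E(\overline{G})$. For condition (iii), $\tau(\tilde F_g)=\tau(F_g)\ge\tau(E_g)\ge\tau(E_g^\eps)=\tau(\tilde E_g)$, while $\tau(\tilde E_g)\ge\tau(E_g)-\eps\ge\frac{1}{\lambda}-\eps$. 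Thus for $\eps<1/\lambda$ this is a $\lambda'$-tracial projection representation with $\frac{1}{\lambda'}=\frac{1}{\lambda}-\eps$, i.e.\ $\lambda'=\lambda/(1-\lambda\eps)\to\lambda$ as $\eps\to 0$. Hence the projection-representation infimum is at most $\lambda$, and taking the infimum over all $\lambda$ admitting a subspace representation gives $\inf\{\lambda:~\lambda\text{-tracial projection representation}\}\le\cH_{tr}(G)$, which combined with the first direction completes the proof.

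The only nontrivial ingredient is the approximation step, and all of its difficulty is packaged into \cref{thm:rank}(iii); everything else is bookkeeping to confirm that passing to the subprojections $E_g^\eps$ respects the orthogonality constraints (because shrinking the range of $E_{g'}$ cannot destroy $F_gE_{g'}=0$) and perturbs each trace by at most $\eps$. I therefore expect no genuine obstacle beyond correctly invoking that lemma and tracking the $\eps\to 0$ limit.
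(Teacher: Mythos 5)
Your proposal is correct and follows essentially the same route as the paper: the easy direction via the implication $\norm{E_g(I_\cM-F_g)}<1 \Rightarrow E_g\wedge(I_\cM-F_g)=0$, and the converse by invoking \cref{thm:rank}(iii) to replace each $E_g$ by a subprojection $E_g^\eps$ satisfying the strict norm bound with trace loss at most $\eps$, yielding a $\lambda/(1-\eps\lambda)$-tracial projection representation and letting $\eps\to 0$. The bookkeeping you add (that $F_gE_{g'}^\eps=0$ survives shrinking the range, and the exact value of $\lambda'$) matches the paper's argument.
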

\begin{proof}
We show equality between $\cH_{tr}(G)$ and the parameter defined in this proposition. Note that the condition $\|E_g(I_\cM - F_g)\|<1$ implies $E_g\wedge (I_\cM - F_g)= 0$. This shows that $\cH_{tr}(G)$ is at most the parameter defined in this proposition. To prove the converse inequality, consider a tuple $(\cH,\cM,\{E_g\},\{F_g\},\tau)$ of $G$ satisfying \cref{prop: first relaxation of tracial Haemers bound}. Let $0< \eps \leq 1$. \cref{thm:rank} (iii) implies that for every $g\in V(G)$, there is a projection $E_g^\eps\in\cM$ such that $E_g^\eps \leq E_g$, $\norm{E_g^\eps(I_\cM-F_g)}<1$ and $\tau(E_g^\eps)\ge \tau(E_g)-\epsilon\ge \frac{1}{\lambda}- \epsilon$. Let $\hat{\cM}$ be the von Neumann algebra generated by $\{E_g^\eps\}$ and $\{F_g\}$. It follows that $(\cH,\hat{\cM},\{E^\eps_g\},\{F_g\},\tau)$ is a $\frac{\la}{1-\eps\la}$-tracial projection representation of $G$. Letting $\eps$ tend to $0$ and taking the infimum over $\lambda$ concludes the proof.
\end{proof}

With \cref{prop: first relaxation of tracial Haemers bound 2}, we can further restrict the Hilbert space $\cH$ in \cref{prop: first relaxation of tracial Haemers bound}, which will play a crucial rule in the proof of superadditivity (and the proof of supermultiplicativity).
\begin{proposition}\label{prop: hilbert space restriction}
Let $G$ be a graph. If there exists a $\lambda$-tracial projection representation of $G$, then there exists a $\lambda$-tracial subspace representation $(\cH,\cM,\{E_g\},\{F_g\},\tau)$ in the sense of \cref{prop: first relaxation of tracial Haemers bound}, where
\[
\cH =\ran(\bigvee_{g \in V(G)} F_g).
\]

\end{proposition}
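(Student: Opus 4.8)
The plan is to pass to the ``corner'' of the representation cut out by $P := \bigvee_{g\in V(G)} F_g$. Concretely, I would work on the Hilbert space $\cH = \ran(P)$, take the compressed von Neumann algebra $\cM' = P\cM P \subseteq B(\cH)$ (which is again a von Neumann algebra since $P \in \cM$), and equip it with the normalised trace $\tau'(\cdot) = \tau(\cdot)/\tau(P)$; this is a normal tracial state on $\cM'$ with $\tau'(P) = 1$, and $\tau(P) \geq \tau(F_{g}) \geq 1/\lambda > 0$ so the rescaling is legitimate. Each $F_g$ already satisfies $F_g \leq P$, so I would keep the $F_g$ unchanged, while replacing each $E_g$ by its image pushed into $\cH$, namely $E_g' := Q_{PE_g}$, the projection onto $\cl(\ran(PE_g))$. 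Since $\ran(E_g') \subseteq \ran(P)$ we have $E_g' \leq P$, so both families lie in $\cM'$.

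Conditions (ii) and (iii) of \cref{prop: first relaxation of tracial Haemers bound} are then routine. For (ii), using $F_g \leq P$ (so $F_g P = F_g$) together with $F_g E_{g'} = 0$ for $\{g,g'\}\in E(\overline G)$, one gets $F_g (PE_{g'}) = F_g E_{g'} = 0$, so $F_g$ annihilates $\ran(PE_{g'})$ and hence its closure, giving $F_g E_{g'}' = 0$. For (iii), I would apply \cref{thm:rank}(ii) with the roles ``$Q = P$, $P = E_g$'', yielding $\tau(E_g') = \tau(E_g) - \tau(E_g \wedge (I-P))$; since $I - P \leq I - F_g$ and the hypothesis $\norm{E_g(I-F_g)} < 1$ forces $E_g \wedge (I-F_g) = 0$, the intersection $E_g \wedge (I-P)$ vanishes and $\tau(E_g') = \tau(E_g)$. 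After rescaling, $\tau'(E_g') = \tau(E_g)/\tau(P) \geq \tau(E_g) \geq 1/\lambda$ and $\tau'(F_g) = \tau(F_g)/\tau(P) \geq \tau'(E_g')$.

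The main obstacle is condition (i), $\ran(E_g') \cap \ran(P - F_g) = \{0\}$ (the orthocomplement of $\ran(F_g)$ within $\cH$ being $\ran(P-F_g)$), i.e.\ that $F_g$ remains injective on $\ran(E_g')$ after compression. Here it is crucial that we begin from a \emph{projection} representation rather than a mere subspace representation (this is exactly what \cref{prop: first relaxation of tracial Haemers bound 2} provides): the norm bound $\norm{E_g(I-F_g)} < 1$ upgrades to a \emph{uniform} injectivity. Indeed $E_g F_g E_g = E_g - E_g(I-F_g)E_g \geq \delta E_g$ for some $\delta > 0$, so $\norm{F_g s} \geq \sqrt{\delta}\,\norm{s}$ for every $s \in \ran(E_g)$. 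Now take $x \in \ran(E_g') \cap \ran(P-F_g)$; writing $x = \lim_n P s_n$ with $s_n \in \ran(E_g)$ (possible since $\ran(PE_g) = P(\ran(E_g))$) and using $F_g x = 0$ together with $F_g P s_n = F_g s_n$, one obtains $F_g s_n \to 0$, whence $\norm{s_n} \leq \delta^{-1/2}\norm{F_g s_n} \to 0$ and so $x = \lim_n P s_n = 0$. This is the step where taking closures genuinely requires the uniform (rather than merely strict) injectivity, and I expect it to be the delicate point of the argument.
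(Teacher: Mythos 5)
Your proposal is correct and follows essentially the same route as the paper: cut down to $\ran(\bigvee_g F_g)$, keep the $F_g$, replace each $E_g$ by the support projection of $PE_g$, renormalise the trace by $\tau(P)$, use \cref{thm:rank}(ii) to see the trace of $E_g$ is preserved, and use the norm bound $\norm{E_g(I-F_g)}<1$ to get uniform injectivity of $F_g$ on $\ran(E_g)$ so that the intersection condition survives the closure. The only cosmetic differences are that you phrase the injectivity via $E_gF_gE_g\ge\delta E_g$ where the paper uses the triangle inequality, and you take the corner algebra $P\cM P$ where the paper takes the algebra generated by the new projections with $P$ as unit.
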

\begin{proof}
Let $(\cH,\cM,\{E_g\},\{F_g\},\tau)$ be a $\lambda$-tracial projection representation of $G$ and
let $F$ be the projection onto $\cl(\sum_{g\in V(G)}\ran(F_g))$. Let $E'_g$ be the left support projection of $F E_g$ (i.e.~the projection onto the closure of $\ran(F E_g)$) and set $F'_g=F_g$.
Using the inclusions $\ran(E'_g),\ran(F'_g)\subseteq \ran(F)$, we observe that $F$ commutes with both $F_g'$ and $E_g'$ and that it acts as the identity on the algebra generated by the $E_g'$'s and $F_g'$'s.
Let $\cH'=\ran(F)$ and let $\hat{\cM}$ be the von Neumann algebra generated by $\{E'_g\}$ and $\{F'_g\}$, where $F$ is the identity element. Define $\tau':\hat{\cM}\to\C$ as $\tau'(X)=\tau(X)/\tau(F)$ for $X\in\hat{\cM}$. This is a tracial state on $\hat{\cM}$. We show that $(\cH',\hat{\cM},\{E'_g\},\{F'_g\},\tau')$ satisfies \cref{prop: first relaxation of tracial Haemers bound} with the same value.

We first verify that $F'_g E'_{g'}
=0$ for every $\{g,g'\}\in E(\overline{G})$. We have
\[
F_g F E_{g'}=F_gE_{g'}=0 \quad \text{for any $\{g,g'\}\in E(\overline{G})$}.
\]
Using that $E_g'$ is the left support projection on $FE_g$ (in particular the property in \cref{eq: left and right support projection}), this indeed implies that $F_g' E_{g'}' = F_g E_{g'}' = 0$.
Next we show $\ran(E'_g)\cap\ran(F_g)^\perp=\{0\}$. Let $v\in \ran(E'_g)\cap\ran(F_g)^\perp$, we show $v=0$. By the definition of $E'_g$, there exists a sequence $\{u_i\in\cH\}_{i\in\N}$ such that $FE_g u_i\to v$ as $i\to \infty$. On the other hand, $v\in\ran(F_g)^\perp$ implies that $F_gv=0$. Thus,
\[
0=F_gv=F_g\lim_{i\to \infty }FE_gu_i=\lim_{i\to \infty }F_gFE_gu_i=\lim_{i\to \infty }F_gE_gu_i.
\]
In other words, $\|F_g E_g u_i\| \to 0$ as $i \to \infty$. We show this implies $\|E_g u_i\| \to 0$.
Let $\norm{E_g(I_\cA-F_g)}=\delta_g<1$ for any $g\in V(G)$. Then by the triangle inequality, for any $i\in\N$,
\[
\norm{F_g E_g u_i}\ge \norm{E_g u_i }-\norm{(I_\cA-F_g)E_g u_i }\ge \norm{E_g u_i }-\norm{E_g(I_\cA-F_g)} \norm{E_gu_i}=(1-\delta_g)\norm{E_gu_i}.
\]
Thus,
\begin{align*}
\lim_{i\to \infty}\norm{E_gu_i}\le (1-\delta_g)^{-1}\lim_{i\to \infty}\norm{F_gE_gu_i} =0,
\end{align*}
which implies $v=\lim_{i\to\infty}FE_gu_i=0$. (The second inequality uses the submultiplicativity of the operator norm.)

Finally we show $\tau'(E'_g)=\tau(E'_g)/\tau(F)\geq\frac{1}{\lambda}$. Since $\tau(F) \leq 1$, it suffices to show $\tau(E'_g)=\tau(E_g)$.
Since $\ran(F)^\perp\subseteq\ran(F_g)^\perp$, we have $\ran(E_g)\cap\ran(F)^\perp\subseteq\ran(E_g)\cap\ran(F_g)^\perp=\{0\}$.  Applying \cref{thm:rank} (ii) with $Q=F$ and $P=E_g$ therefore shows that for any tracial state $\tau:\cM\to\C$ we have $\tau(E_g')=\tau(E_g)$.
\end{proof}

We are now ready to prove superadditivity of tracial Haemers bound.
\begin{proposition} \label{prop: superadditive}
For graphs $G$ and $H$ we have $\cH_{tr}(G \sqcup H) \geq \cH_{tr}(G) +\cH_{tr}(H)$.
\end{proposition}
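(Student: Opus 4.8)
The plan is to start from the projection-representation form of the bound and exploit that in $\overline{G \sqcup H}$ every vertex of $G$ is adjacent to every vertex of $H$. Concretely, by \cref{prop: first relaxation of tracial Haemers bound 2} I fix a $\lambda$-tracial projection representation $(\cH, \cM, \{E_v\}, \{F_v\}, \tau)$ of $G \sqcup H$. For $g \in V(G)$ and $h \in V(H)$ we have $\{g,h\} \in E(\overline{G \sqcup H})$, so condition (ii) gives $F_g E_h = 0$ and $F_h E_g = 0$. I set $E_H := \bigvee_{h \in V(H)} E_h$ and $\pi := I_\cM - E_H$. The idea is to house $H$ inside the corner $E_H \cM E_H$ and $G$ inside the corner $\pi \cM \pi$; since $\tau(E_H) + \tau(\pi) = 1$, the two normalized representations will have values summing to $\lambda$. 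I will show $\cH_{tr}(H) \le \lambda \tau(E_H)$ and $\cH_{tr}(G) \le \lambda \tau(\pi)$, so that $\cH_{tr}(G) + \cH_{tr}(H) \le \lambda$; taking the infimum over $\lambda$ then yields the claim. (Both corners are nontrivial: $\tau(E_H)\ge\tau(E_h)\ge 1/\lambda$ and, as shown below, $\tau(\pi)\ge\tau(E_g)\ge 1/\lambda$.)

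For $H$, note that $E_h \le E_H$, so the $E_h$ already live in the corner $E_H \cM E_H$, which is a von Neumann algebra with identity $E_H$ and normal tracial state $\tau_H := \tau(\cdot)/\tau(E_H)$. From $F_h E_{h'} = 0$ for $\{h,h'\} \in E(\overline H)$ we get $E_{N_{\overline H}(h)} \le I_\cM - F_h$, and since $\norm{E_h(I_\cM - F_h)} < 1$ forces $E_h \wedge (I_\cM - F_h) = 0$, we obtain $E_h \wedge E_{N_{\overline H}(h)} = 0$ (the wedge and join of subprojections of $E_H$ being the same inside the corner). As $\tau_H(E_h) = \tau(E_h)/\tau(E_H) \ge 1/(\lambda \tau(E_H))$, this is a $\lambda\tau(E_H)$-tracial subspace representation of $H$ in the sense of \cref{def: tracial subspace rep}, using its inequality form (cf.~\cref{remark}(b)).

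For $G$, the point is that $F_g E_h = 0$ for every $h \in V(H)$ gives $E_H \le I_\cM - F_g$, and hence $F_g \le \pi$, so the certificate $F_g$ already lies in the corner $\pi \cM \pi$. I define the compressed projections $\tilde E_g$ to be the left support projection of $\pi E_g$ (so $\tilde E_g \le \pi$) and keep the $F_g$ as their certificates. Two facts are routine. First, the trace is preserved: applying \cref{thm:rank}(ii) with $P = E_g$ and $Q = \pi$ gives $\tau(\tilde E_g) = \tau(E_g) - \tau(E_g \wedge E_H)$, and $E_g \wedge E_H \le E_g \wedge (I_\cM - F_g) = 0$, so $\tau(\tilde E_g) = \tau(E_g) \ge 1/\lambda$ and thus $\tau(\tilde E_g)/\tau(\pi)\ge 1/(\lambda\tau(\pi))$. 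Second, the orthogonality $F_g \tilde E_{g'} = 0$ for $\{g,g'\} \in E(\overline G)$ follows from $F_g \pi E_{g'} = F_g E_{g'} - F_g E_H E_{g'} = 0$ (using $F_g E_{g'} = 0$ and $F_g E_H = 0$).

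The main obstacle is the norm condition (i) of \cref{def: tracial projection rep} for the compressed data, namely $\norm{\tilde E_g(\pi - F_g)} < 1$. This is exactly the place where a naive subspace-intersection argument breaks down, because after compressing by $\pi$ the span of the neighbour subspaces need no longer be closed; it is precisely to retain quantitative control that I carry the $F_g$'s and work in the norm (projection) formulation. To verify it, write $\delta := \norm{E_g(I_\cM - F_g)} < 1$ and take $x \in \ran(E_g)$. Since $\pi - F_g$ is a projection with $(\pi - F_g)\pi = \pi - F_g$, one computes $\norm{(\pi - F_g)\pi x}^2 = \norm{\pi x}^2 - \norm{F_g x}^2$, while $\norm{F_g x} \ge \norm{x} - \norm{(I_\cM-F_g)E_g x} \ge (1 - \delta)\norm{x} \ge (1 - \delta)\norm{\pi x}$ yields $\norm{(\pi - F_g)\pi x}^2 \le \bigl(1 - (1-\delta)^2\bigr)\norm{\pi x}^2$. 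As vectors of the form $\pi x$ with $x\in\ran(E_g)$ are dense in $\ran(\tilde E_g)$, continuity gives $\norm{\tilde E_g(\pi - F_g)} \le \sqrt{1 - (1-\delta)^2} < 1$. Together with the trace bound, this shows that $(\pi\cM\pi, \{\tilde E_g\}, \{F_g\}, \tau(\cdot)/\tau(\pi))$ is a $\lambda\tau(\pi)$-tracial projection representation of $G$, so $\cH_{tr}(G) \le \lambda\tau(\pi)$ by \cref{prop: first relaxation of tracial Haemers bound 2}. Adding the two bounds and letting $\lambda \to \cH_{tr}(G\sqcup H)$ completes the argument.
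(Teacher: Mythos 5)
Your proof is correct, but it takes a genuinely different route from the paper's. The paper first invokes \cref{prop: hilbert space restriction} to normalize the representation so that $\cH=\ran(F)$ with $F=\bigvee_v F_v$, then splits along the two joins $E_G=\bigvee_{g\in V(G)}E_g$ and $E_H=\bigvee_{h\in V(H)}E_h$; the whole weight of the argument is in showing $\ran(E_G)\cap\ran(E_H)=\{0\}$ (using $F_gE_h=F_hE_g=0$ together with $\cH=\ran(F)$), after which \cref{thm:rank}(i) gives $\tau(E_G)+\tau(E_H)\le 1$. You instead split along $E_H$ and its orthocomplement $\pi=I_\cM-E_H$, so the two traces sum to $1$ for free and no trivial-intersection argument is needed; the price is that the $G$-data must be compressed into the corner $\pi\cM\pi$, forcing you to re-verify the norm condition of \cref{def: tracial projection rep} for the compressed projections, which you do correctly via the quantitative estimate $\norm{\tilde E_g(\pi-F_g)}\le\sqrt{1-(1-\delta)^2}$ (this is the same flavor of estimate the paper uses inside \cref{prop: hilbert space restriction}, so you have essentially relocated rather than avoided that work). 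Your remaining steps all check out: $F_g\le\pi$ and $E_H\le I_\cM-F_g$ follow from the cross-orthogonality in $\overline{G\sqcup H}$, trace preservation $\tau(\tilde E_g)=\tau(E_g)$ follows from \cref{thm:rank}(ii) since $E_g\wedge E_H\le E_g\wedge(I_\cM-F_g)=0$, and the condition $\tau(F_g)\ge\tau(\tilde E_g)$ required by \cref{def: tracial projection rep}(iii) — which you do not state explicitly — is immediate from $\tau(F_g)\ge\tau(E_g)=\tau(\tilde E_g)$. The upshot of your version is that it bypasses \cref{prop: hilbert space restriction} entirely and makes the "traces sum to one" step tautological; the paper's version keeps the two halves symmetric and confines all analytic work to one reusable lemma.
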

\begin{proof}
Let $(\hat{\cH},\hat{\cM},\{\hat{E}_v\},\{\hat{F}_v\},\hat{\tau})$ be a $\lambda$-tracial projection representation of $G \sqcup H$ as formulated in \cref{def: tracial projection rep}. \cref{prop: hilbert space restriction} shows that we can obtain a $\lambda$-tracial subspace representation $(\cH,\cM,\{E_v\},\{F_v\},\tau)$ with $\cH = \ran(F)$, where $F$ is the projection onto the closure of $\sum_{v\in V(G)\cup V(H)}\ran(F_v)$. Using the graph structure of $G\sqcup H$ we see that the projections $\{E_g\}_{g\in V(G)}$ and $\{F_g\}_{g\in V(G)}$ satisfy
\begin{itemize}
\item[(1)] $\ran(E_g)\cap \ran(F_g)^\perp=\{0\}$ for any $g\in V(G)$,
\item[(2)] $F_gE_{g'}=0$ for any $\{g,g'\}\in E(\overline{G})$.
\end{itemize}
Similar conditions hold for $\{E_h\}_{h\in V(H)}$ and $\{F_h\}_{h\in V(H)}$. Let $E_G$ be the projection onto the closure of $\sum_{g\in V(G)}\ran(E_g)$ and similarly let $E_H$ be the projection onto the closure of $\sum_{h\in V(H)}\ran(E_h)$. Let $\cH_G=\ran(E_G)$ and $\cH_H=\ran(E_H)$. Let $\cM_G$ (resp.~$\cM_H$) be the von Neumann algebra generated by the projections $\{E_g\}_{g\in V(G)}$ with identity element $E_G$ (resp.~$\{E_h\}_{h\in V(G)}$ with identity $E_H$). Thus $\cM_G\subseteq B(\cH_G)$ and $\cM_H\subseteq B(\cH_H)$. Let $\tau_G:\cM_G\to \C$ be defined as $\tau_G(X)=\tau(X)/\tau(E_G)$ for $X\in\cM_G$ and $\tau_H:\cM_H\to \C$ be defined as $\tau_H(Y)=\tau(Y)/\tau(E_H)$ for $Y\in\cM_H$. Since $\cM_G$ and $\cM_H$ are subsets of $\cM$, $\tau_G$ and $\tau_H$ are tracial states on $\cM_G$ and $\cM_H$ respectively. Analogously to the first part of the proof of \cref{prop: first relaxation of tracial Haemers bound} one can show that $(\cH_G,\cM_G,\{E_g\}_{g\in V(G)},\tau_G)$ and $(\cH_H,\cM_H,\{E_h\}_{h\in V(H)},\tau_H)$ are $(\tau(E_G)\lambda)$- and $(\tau(E_H)\lambda)$-tracial subspace representations of $G$ and $H$ respectively. It follows that $\cH_{tr}(G) + \cH_{tr}(H) \leq \lambda (\tau(E_G) + \tau(E_H))$. If we can show that $\tau(E_G)+\tau(E_H)\leq 1$, then we indeed obtain $\cH_{tr}(G \sqcup H) \geq \cH_{tr}(G) +\cH_{tr}(H)$.

To show that $\tau(E_G)+\tau(E_H)\leq 1$ we will apply \cref{thm:rank} to $E_G$ and $E_H$. To do so we are left to prove that $\ran(E_G)\cap \ran(E_H)=\{0\}$. Assume there is a nonzero vector $0\neq u\in\ran(E_G)\cap \ran(E_H)$. Note that for any $g\in V(G)$, $h\in V(H)$, we have $F_g E_h=0$ and $F_h E_g=0$, since there are no edges between vertices in $G$ and vertices in $H$. This shows that $F_h E_G=0$ for any $h\in V(H)$ and $F_g E_H=0$ for any $g\in V(G)$. Since $E_Gu=u$ and $E_Hu=u$, we have $F_vu=F_vE_Hu=0$ for every $v\in V(G)$ and $F_vu=F_vE_Gu=0$ for every $v\in V(H)$. Thus,
\[
u\in\bigcap_{g\in V(G)} \ran(F_g)^\perp\cap\bigcap_{h\in V(H)} \ran(F_h)^\perp=\bigcap_{v\in V(G)\cup V(H)} \ran(F_v)^\perp=\ran(F)^\perp=\{0\},
\]
where the last equality holds due to the assumption that $\cH=\ran(F)$.
\end{proof}

By the additivity, we can easily show that the tracial Haemers bound is normalized:
\begin{example}\label{ex: tracial Haemers empty graph}
Let $\overline{K_d}$ be the empty graph with $d$ vertices. Then $\cH_{tr}(\overline{K}_d)=d$.
\end{example}
\begin{proof}
It is easy to see that $\cH_{tr}(\overline{K}_1)=1$. Then $\cH_{tr}(\overline{K}_d)=\cH_{tr}(\sqcup_{i=1}^d\overline{K}_1)=\sum_{i=1}^d\cH_{tr}(\overline{K}_1)=d$.
\end{proof}
The normalization of the tracial Haemers bound, together with the monotonicity with respect to the commuting quantum homomorphism, implies that the tracial Haemers bound is an upper bound on the commuting quantum independence number.
\begin{corollary}
For every graph $G$, we have $\alpha_{qc}(G)\leq\cH_{tr}(G)$.
\end{corollary}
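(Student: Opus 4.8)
The plan is to combine the two properties of $\cH_{tr}$ established immediately before this corollary: its normalization (\cref{ex: tracial Haemers empty graph}) and its monotonicity under the commuting quantum homomorphism preorder (\cref{prop: monotone tracial Haemers}). The statement then follows essentially without further work, by unwinding the definition of the commuting quantum independence number in terms of $\leq_{qc}$.

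Concretely, I would first set $d := \alpha_{qc}(G)$. Recalling that $\alpha_{qc}(G) = \max\{d' : \overline{K}_{d'} \leq_{qc} G\}$, the maximizer witnesses that $\overline{K}_d \leq_{qc} G$. Next I would apply \cref{prop: monotone tracial Haemers} to this instance of the preorder to deduce $\cH_{tr}(\overline{K}_d) \leq \cH_{tr}(G)$. Finally, invoking the normalization from \cref{ex: tracial Haemers empty graph}, namely $\cH_{tr}(\overline{K}_d) = d$, I would conclude
\[
\alpha_{qc}(G) = d = \cH_{tr}(\overline{K}_d) \leq \cH_{tr}(G),
\]
which is exactly the claim.

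There is no real obstacle here: the content of the corollary is entirely front-loaded into \cref{prop: monotone tracial Haemers} (where the genuine work, the tensoring-and-tracing argument, lives) and into the additivity results yielding \cref{ex: tracial Haemers empty graph}. The only point requiring a moment's care is the translation between the combinatorial quantity $\alpha_{qc}(G)$ and the order-theoretic statement $\overline{K}_d \leq_{qc} G$, which is immediate from the definition of $\alpha_{qc}$ as the largest $d$ with $\overline{K}_d \leq_{qc} G$. I would therefore keep the proof to the three lines above rather than elaborating.
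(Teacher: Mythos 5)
Your proof is correct and is essentially identical to the paper's: both take $d=\alpha_{qc}(G)$, observe $\overline{K}_d\leq_{qc}G$ by definition, and combine the monotonicity of \cref{prop: monotone tracial Haemers} with the normalization $\cH_{tr}(\overline{K}_d)=d$ from \cref{ex: tracial Haemers empty graph}. Nothing to add.
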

\begin{proof}
By definition, if $\alpha_{qc}(G)=d$, then $\overline{K_d}\leq_{qc} G$. The monotonicity and normalization of $\cH_{tr}$ together show that $d=\cH_{tr}(\overline{K_d})\leq\cH_{tr}(G)$.
\end{proof}

\subsection{Multiplicativity of the tracial Haemers bound}

We now discuss the multiplicativity of $\cH_{tr}(G)$. We first use \cref{prop: first relaxation of tracial Haemers bound} to prove that the tracial Haemers bound is submultiplicative.
\begin{proposition}\label{prop: tracial Haemers submultiplicative}
For any graphs $G$ and $H$, $\cH_{tr}(G\boxtimes H)\leq\cH_{tr}(G)\cH_{tr}(H)$.
\end{proposition}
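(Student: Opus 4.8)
The plan is to work with the \emph{projection} formulation of the tracial Haemers bound rather than the subspace one, since the crucial condition tensorizes cleanly only in its norm form. Concretely, by \cref{prop: first relaxation of tracial Haemers bound 2}, for every $\eps>0$ I may fix a $\lambda_G$-tracial projection representation $(\cH_G,\cM_G,\{E_g\},\{F_g\},\tau_G)$ of $G$ and a $\lambda_H$-tracial projection representation $(\cH_H,\cM_H,\{\tilde E_h\},\{\tilde F_h\},\tau_H)$ of $H$ (in the sense of \cref{def: tracial projection rep}) with $\lambda_G\le \cH_{tr}(G)+\eps$ and $\lambda_H\le \cH_{tr}(H)+\eps$. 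Working inside the von Neumann algebra tensor product $\cM_G\overline{\otimes}\cM_H$ with the normal tracial product state $\tau=\tau_G\otimes\tau_H$, I set, for each vertex $(g,h)\in V(G\boxtimes H)=V(G)\times V(H)$,
\[
A_{(g,h)}=E_g\otimes\tilde E_h,\qquad B_{(g,h)}=F_g\otimes\tilde F_h,
\]
which are projections, and I take $\cM$ to be the von Neumann subalgebra they generate. The goal is to show that this data is a $(\lambda_G\lambda_H)$-tracial projection representation of $G\boxtimes H$; letting $\eps\to 0$ and invoking \cref{prop: first relaxation of tracial Haemers bound 2} again then yields $\cH_{tr}(G\boxtimes H)\le\cH_{tr}(G)\cH_{tr}(H)$.

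The trace and orthogonality conditions are routine. For the trace, multiplicativity of the product state gives $\tau(A_{(g,h)})=\tau_G(E_g)\tau_H(\tilde E_h)\ge \frac{1}{\lambda_G\lambda_H}$ and $\tau(B_{(g,h)})=\tau_G(F_g)\tau_H(\tilde F_h)\ge\tau(A_{(g,h)})$. For the orthogonality, I first record the combinatorial fact that $\{(g,h),(g',h')\}\in E(\overline{G\boxtimes H})$ forces $\{g,g'\}\in E(\overline G)$ or $\{h,h'\}\in E(\overline H)$, since a distinct pair is non-adjacent in the strong product exactly when it fails to be equal-or-adjacent in one of the two factors. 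As $B_{(g,h)}A_{(g',h')}=F_gE_{g'}\otimes\tilde F_h\tilde E_{h'}$, the first case kills the left tensor factor via $F_gE_{g'}=0$ and the second kills the right factor via $\tilde F_h\tilde E_{h'}=0$; either way the product vanishes.

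The heart of the argument, and the step I expect to be the main obstacle, is the norm condition $\|A_{(g,h)}(I-B_{(g,h)})\|<1$. This is precisely where the subspace (zero-intersection) formulation would fail to tensorize: in infinite dimensions the condition $\ran(E_g)\cap\ran(F_g)^\perp=\{0\}$ allows the angle between the two subspaces to degenerate, and zero intersection need not survive taking tensor products, whereas the strict norm bound does. Writing $A=A_{(g,h)}$, $B=B_{(g,h)}$, $a=E_gF_gE_g$ and $b=\tilde E_h\tilde F_h\tilde E_h$, I use $\|A(I-B)\|^2=\|A-ABA\|=\|A-a\otimes b\|$. The hypotheses $\|E_g(I-F_g)\|=\delta_G<1$ and $\|\tilde E_h(I-\tilde F_h)\|=\delta_H<1$ give $0\le E_g-a\le\delta_G^2E_g$ and $0\le \tilde E_h-b\le\delta_H^2\tilde E_h$, whence $a\ge(1-\delta_G^2)E_g$ and $b\ge(1-\delta_H^2)\tilde E_h$ while also $a\le E_g$ and $b\le\tilde E_h$. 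Tensoring these operator inequalities yields
\[
(1-\delta_G^2)(1-\delta_H^2)\,A\ \le\ a\otimes b\ \le\ A,
\]
so that $0\le A-a\otimes b\le\big(1-(1-\delta_G^2)(1-\delta_H^2)\big)A$ and therefore $\|A(I-B)\|^2\le 1-(1-\delta_G^2)(1-\delta_H^2)<1$, verifying \cref{def: tracial projection rep}(i). The only point demanding care is the tensorization of the operator inequalities, which I would justify by writing $E_g\otimes\tilde E_h-a\otimes b=(E_g-a)\otimes\tilde E_h+a\otimes(\tilde E_h-b)$ to exhibit positivity; once the norm formulation is adopted this is entirely routine.
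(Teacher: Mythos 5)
Your proof is correct, but it takes a genuinely different route from the paper's at the one non-routine step. The paper proves submultiplicativity entirely within the subspace formulation of \cref{prop: first relaxation of tracial Haemers bound}: it forms the same tensor-product projections and then verifies $\ran(E_g\otimes E'_h)\cap\ran(F_g\otimes F'_h)^\perp=\{0\}$ directly, by decomposing $\ran(F_g\otimes F'_h)^\perp$ as $\left(\cH_G\otimes \ran(F'_h)^\perp\right)\oplus\left(\ran(F_g)^\perp\otimes\ran(F'_h)\right)$ and slicing with linear functionals on each tensor factor to reduce to $\ran(E_g)\cap\ran(F_g)^\perp=\{0\}$ and $\ran(E'_h)\cap\ran(F'_h)^\perp=\{0\}$. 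You instead pass to the norm formulation of \cref{def: tracial projection rep} via \cref{prop: first relaxation of tracial Haemers bound 2} and show that the strict angle condition tensorizes through the clean operator inequality $(1-\delta_G^2)(1-\delta_H^2)A\le a\otimes b\le A$; that computation is correct (including $\|A(I-B)\|^2=\|A-ABA\|$ and the positivity of the cross terms), and the trace and orthogonality checks match the paper's. Your approach buys a more quantitative and arguably more robust argument (an explicit bound on the angle of the tensored pair), at the cost of importing \cref{prop: first relaxation of tracial Haemers bound 2} and hence \cref{thm:rank}(iii), which the paper reserves for the supermultiplicativity direction; the paper's argument is self-contained given \cref{prop: first relaxation of tracial Haemers bound}. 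One remark in your motivation is inaccurate: the zero-intersection condition \emph{does} survive tensoring — that is exactly what the paper's slicing argument establishes — so the detour through the norm formulation is a convenience rather than a necessity; this does not affect the validity of your proof.
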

\begin{proof}
Let $(\cH_G,\cM_G,\{E_g\},\{F_g\},\tau_G)$ and $(\cH_H,\cM_H,\{E'_h\},\{F'_h\},\tau_H)$ be tracial subspace representations of $G$ and $H$, respectively, in the formulation of \cref{prop: first relaxation of tracial Haemers bound}. Set $P_{g,h}=E_g\otimes E'_h$ and $Q_{g,h}=F_g\otimes F'_h$ for any $g\in V(G)$ and $h\in V(H)$. It is clear that $P_{g,h}$ and $Q_{g,h}$ are projections. Let $\cM_{G,H}$ be the von Neumann algebra generated by the projections $\{P_{g,h},Q_{g,h}:~g\in V(G),~h\in V(H)\}$. It follows that $\cM_{G,H}$ is a subalgebra of $\cM_G\otimes\cM_H$, the latter of which is equipped with the product tracial state $\tau_G\otimes\tau_H$. Thus $\tau=\tau_G\otimes\tau_H$ is a tracial state on $\cM_{G,H}$, and it is easy to see that $\tau(Q_{g,h})\geq \tau(P_{g,h})\geq \frac{1}{\lambda\mu}$ for any $g\in V(G)$ and $h\in V(H)$.

For $\{(g,h),(g',h')\}\in E(\overline{G\boxtimes H})$, $P_{g,h}Q_{g',h'}=E_g F_{g'}\otimes E'_hF'_{h'}=0$ because $\{g,g'\}\in E(\overline{G})$ or $\{h,h'\}\in E(\overline{H})$. This verifies the orthogonality conditions between $P_{g,h}$ and $Q_{g',h'}$.

We now show that $\ran(P_{g,h})\cap\ran(Q_{g,h})^\perp=\{0\}$ for every $g\in V(G)$ and $h\in V(H)$. Note that $\ran(P_{g,h})=\ran(E_g)\otimes\ran(E'_h)$ and
\[
\ran(Q_{g,h})^\perp=(\ran(F_g)\otimes\ran(F'_h))^\perp=\left(\cH_G\otimes \ran(F'_h)^\perp \right) \oplus \left(\ran(F_g)^\perp\otimes \ran(F'_h)\right).
\] Thus for $x\in\ran(P_{g,h})\cap \ran(Q_{g,h})^\perp$, we have $x=y+z$, where $x\in \ran(E_g)\otimes\ran(E'_h)$, $y\in \cH_G\otimes \ran(F'_h)^\perp$ and $z\in\ran(F_g)^\perp\otimes \ran(F'_h)$. For any linear functional $\cL:\ran(F'_h)\to\C$, we have $(I_G\otimes\cL)(x)\in\ran(E_g)$, $(I_G\otimes\cL)(y)=0$ and $(I_G\otimes\cL)(z)\in\ran(F_g)^\perp$. Thus we have $(I_G\otimes\cL)(x)=(I_G\otimes\cL)(z)\in\ran(E_g)\cap\ran(F_g)^\perp=\{0\}$. This implies $z=0$ because $\cL$ is arbitrary. Similarly, we have $y=0$ and hence $x=0$.
\end{proof}

To prove the supermultiplicativity of $\cH_{tr}(G)$, we need to relax the conditions in the $\lambda$-tracial subspace (resp.~projection) representation by modifying (i) and (iii) in \cref{prop: first relaxation of tracial Haemers bound} (resp.~\cref{prop: first relaxation of tracial Haemers bound 2}):
\begin{proposition}\label{prop: second relaxation of tracial Haemers bound}
Let $G$ be a graph. Then $G$ has a $\lambda$-tracial subspace representation if and only if there exist a Hilbert space $\cH$, a von Neumann algebra $\cM\subseteq B(\cH)$ containing projections $E_g$ and $F_g$ for all $g\in V(G)$, with a normal tracial state $\tau:\cM \to \C$, such that
\begin{enumerate}[label=\upshape(\roman*)]
\item $F_gE_{g'}=0$ for any $\{g,g'\}\in E(\overline{G})$,
\item $\tau(E_g)-\tau(X_g)\geq \frac{1}{\lambda}$ for all $g\in V(G)$, where $X_g$ is the projection onto the subspace $\ran(E_g)\cap \ran(F_g)^\perp$.
\end{enumerate}
\end{proposition}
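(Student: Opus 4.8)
The plan is to establish the biconditional by routing through \cref{prop: first relaxation of tracial Haemers bound}, which already characterizes the existence of a $\lambda$-tracial subspace representation. I would keep the ambient data $(\cH,\cM,\{E_g\},\{F_g\},\tau)$ fixed and only adjust the projections $E_g$; the guiding idea is that conditions (i)--(ii) here are a genuine relaxation in which the ``bad'' part $X_g=E_g\wedge(I_\cM-F_g)$ of $E_g$ is merely subtracted off inside the trace, rather than being forced to vanish.

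The forward implication is immediate: a tuple satisfying \cref{prop: first relaxation of tracial Haemers bound} has $\ran(E_g)\cap\ran(F_g)^\perp=\{0\}$, so $X_g=0$ and $\tau(E_g)-\tau(X_g)=\tau(E_g)\ge\tfrac1\lambda$, while condition (i) here coincides with condition (ii) there. So I would spend the effort on the reverse direction. Given $(\cH,\cM,\{E_g\},\{F_g\},\tau)$ satisfying (i)--(ii), I would first note that $X_g=E_g\wedge(I_\cM-F_g)$ lies in $\cM$ (von Neumann algebras are closed under $\wedge$) and that $X_g\le E_g$, so $E_g':=E_g-X_g$ is again a projection in $\cM$. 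I would then check that $(\cH,\cM,\{E_g'\},\{F_g\},\tau)$ meets all three requirements of \cref{prop: first relaxation of tracial Haemers bound} with the \emph{same} $\lambda$: the orthogonality $\ran(E_g')\cap\ran(F_g)^\perp=\{0\}$ holds since any vector there lies in $\ran(E_g)\cap\ran(F_g)^\perp=\ran(X_g)$ yet is orthogonal to $\ran(X_g)$; the relation $F_gE_{g'}'=0$ for $\{g,g'\}\in E(\overline G)$ follows from $\ran(E_{g'}')\subseteq\ran(E_{g'})\subseteq\ker(F_g)$; and $\tau(E_g')=\tau(E_g)-\tau(X_g)\ge\tfrac1\lambda$ by hypothesis.

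The one nontrivial point, and the step I expect to carry the real content, is the remaining trace inequality $\tau(F_g)\ge\tau(E_g')$ in condition (iii). Here I would invoke \cref{thm:rank}(ii) with $P=E_g'$ and $Q=F_g$: because $E_g'\wedge(I_\cM-F_g)$ projects onto $\ran(E_g')\cap\ran(F_g)^\perp=\{0\}$, the lemma yields $\tau(F_g)\ge\tau(E_g')-\tau\big(E_g'\wedge(I_\cM-F_g)\big)=\tau(E_g')$. With this the modified tuple is a bona fide $\lambda$-tracial subspace representation, and combining both directions shows the two parameters agree for every $\lambda$, as claimed. The only subtleties are bookkeeping ones---verifying $X_g\in\cM$ and that $E_g'$ is a projection---so I anticipate no serious obstacle beyond the clean application of the infinite-dimensional rank-nullity lemma.
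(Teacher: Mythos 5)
Your proof is correct and follows essentially the same route as the paper's: the key move in both is to replace $E_g$ by the projection $E_g-X_g$ and verify that the resulting tuple satisfies \cref{prop: first relaxation of tracial Haemers bound}, with the trace inequality $\tau(F_g)\ge\tau(E_g-X_g)$ recovered from \cref{thm:rank}(ii) once $\ran(E_g-X_g)\cap\ran(F_g)^\perp=\{0\}$ is established. The only cosmetic difference is that you obtain the ``only if'' direction by citing \cref{prop: first relaxation of tracial Haemers bound} (which forces $X_g=0$), whereas the paper constructs $F_g$ directly from \cref{def: tracial subspace rep}; both are immediate.
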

\begin{proof}
We first prove that a tuple $(\cH,\cM, \{E_g\}, \{F_g\}, \tau)$ which satisfies \cref{prop: second relaxation of tracial Haemers bound} can be used to construct a $\lambda$-tracial subspace representation of $G$ in terms of~\cref{prop: first relaxation of tracial Haemers bound}. Let $\tilde{E}_g=E_g-X_g$, for every $g\in V(G)$, where $X_g$ is the projection onto the subspace $\ran(E_g)\cap \ran(F_g)^\perp$, and let $\tilde{F}_g=F_g$. Note that $\tilde{E}_g$ is a projection since $\ran(X_g) \subseteq \ran(E_g)$. Let $\tilde{\cM}$ be the von Neumann algebra generated by the projections $\tilde{E}_g$ and $\tilde{F}_g$ for all $g\in V(G)$. Then $\tilde{\tau}=\tau$ is a tracial state on $\tilde{\cM} \subseteq \cM$ and we observe that $\tilde{\tau}(\tilde{E}_g)=\tau(E_g)-\tau(X_g)\geq\frac{1}{\lambda}$.

It remains to show that (i) $\ran(\tilde{E}_g)\cap \ran(\tilde{F}_g)^\perp=\{0\}$ and (ii) $\tilde{F}_g\tilde{E}_{g'}=0$ if $\{g,g'\}\in E(\overline{G})$. For~(i), since $\ran(\tilde{E}_g)$ is the orthogonal complement of $\ran(X_g)=\ran(E_g)\cap \ran(F_g)^\perp$ in $\ran(E_g)$, it is straightforward to see that $\ran(\tilde{E}_g)\cap\ran(\tilde{F}_g)^\perp=\{0\}$. For (ii), since $\ran(\tilde{E}_{g'})\subseteq \ran(E_{g'})\perp\ran(F_g)$ for every $\{g,g'\}\in E(\overline{G})$, we have $\tilde{F}_g\tilde{E}_{g'}=0$ as well.
Thus, $(\cH,\tilde{\cM},\{\tilde{E}_g\},\{\tilde{F}_g\},\tilde{\tau})$ is a $\lambda$-tracial subspace representation of $G$ in terms of~\cref{prop: first relaxation of tracial Haemers bound}.

For the reverse direction, let $(\cH,\cM, \{E_g\}, \tau)$ be a feasible tuple for~\cref{def: tracial subspace rep}. Let $F_g$ be the projection onto $(\sum_{g' \in N_{\overline G}(g)} \ran(E_{g'}))^\perp$. Then $F_g E_{g'} = 0$ for all $\{g,g'\} \in E(\overline G)$ and $X_g = E_g \wedge (I_\cM-F_g) = 0$ for all $g \in V(G)$. Since $\tau(E_g)-\tau(X_g)=\tau(E_g)=1/\lambda$, we get a feasible solution of~\cref{prop: second relaxation of tracial Haemers bound} with value $\lambda$.
\end{proof}

\begin{proposition}\label{prop: third relaxation of tracial Haemers bound}
Let $G$ be a graph. Then $\cH_{tr}(G)$ equals the infimum of $\lambda$ for which there exist a Hilbert space $\cH$, a von Neumann algebra $\cM\subseteq B(\cH)$ containing projections $E_g$ and $F_g$ for all $g\in V(G)$, with a normal tracial state $\tau:\cM \to \C$, such that
\begin{enumerate}[label=\upshape(\roman*)]
\item $\norm{E_g(I_\cM-F_{g})}<1$ for any $g\in V(G)$,
\item $F_gE_{g'}=0$ for any $\{g,g'\}\in E(\overline{G})$,
\item $\frac{\tau(E_g)}{\tau(E)-\tau(X)}\geq \frac{1}{\lambda}$ for all $g\in V(G)$, where $E$ and $F$ are the projections onto the closures of the subspaces $\sum_{g\in V(G)}\ran(E_g)$ and $\sum_{g\in V(G)}\ran(F_g)$, respectively and $X$ is the projection onto $\ran(E)\cap\ran(F)^\perp$.
\end{enumerate}
\end{proposition}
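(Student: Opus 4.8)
The plan is to prove the two inequalities separating $\cH_{tr}(G)$ from the quantity defined in the statement; write $h(G)$ for the latter infimum. The inequality $h(G)\le \cH_{tr}(G)$ is the easy one. By \cref{prop: first relaxation of tracial Haemers bound 2}, $\cH_{tr}(G)$ equals the infimum over tracial projection representations, whose defining conditions are exactly items (i) and (ii) here. So given a tracial projection representation of value $\lambda$, items (i) and (ii) hold verbatim, and it only remains to check (iii). For this I would apply \cref{thm:rank}(ii) with $P=E$ and $Q=F$: letting $Y$ be the projection onto $\cl(\ran(FE))$ we get $\tau(E)-\tau(X)=\tau(Y)\le\tau(F)\le 1$, where $X=E\wedge(I_\cM-F)$. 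Hence $\tau(E_g)/(\tau(E)-\tau(X))\ge\tau(E_g)\ge 1/\lambda$, so the tuple is feasible for the present proposition with value $\le\lambda$, giving $h(G)\le\cH_{tr}(G)$ after taking the infimum.

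The substantive direction is $\cH_{tr}(G)\le h(G)$, which I would prove by a renormalization that compresses a feasible tuple to a suitable corner. Start from a tuple $(\cH,\cM,\{E_g\},\{F_g\},\tau)$ satisfying (i)--(iii) with value $\lambda$; passing to the GNS representation I may assume $\tau$ is faithful. Let $Z$ be the projection onto $\cl(\ran(EF))$. Since $\ran(EF)\subseteq\ran(E)$ we have $Z\le E$, and since $\cl(\ran(EF))$ is Murray--von Neumann equivalent to $\cl(\ran(FE))$ (as in the proof of \cref{thm:rank}), $\tau(Z)=\tau(E)-\tau(X)$. A short computation shows $\ran(X)\perp\ran(Z)$ (for $u\in\ran(X)\subseteq\ran(F)^\perp$ one has $\langle u,EFw\rangle=\langle u,Fw\rangle=\langle Fu,w\rangle=0$), so by faithfulness $E=Z+X$. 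I then compress to the corner $Z\cM Z$ with the normalized tracial state $\tau''=\tau(\cdot)/\tau(Z)$ and set $E''_g$ equal to the projection onto $\cl(\ran(ZE_g))$. The claim is that $(\ran(Z),Z\cM Z,\{E''_g\},\tau'')$ is a $\lambda$-tracial subspace representation in the sense of \cref{def: tracial subspace rep}.

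Two verifications remain. First, for the trace: the norm condition gives $\ran(E_g)\cap\ran(F_g)^\perp=\{0\}$, and since $F_g\le F$ and $Z\le E$ this forces $\ran(E_g)\cap\ran(Z)^\perp=\{0\}$, i.e.\ $E_g\wedge(I_\cM-Z)=0$; \cref{thm:rank}(ii) with $P=E_g$, $Q=Z$ then gives $\tau(E''_g)=\tau(E_g)$, whence $\tau''(E''_g)=\tau(E_g)/(\tau(E)-\tau(X))\ge 1/\lambda$ by (iii). Second, for the independence condition $E''_g\wedge E''_{N_{\overline G}(g)}=0$: using $E=Z+X$ and $\ran(X)\subseteq\ker F_g$, for $w\in\ran(E_g)$ one has $F_gZw=F_gw$, and for $w\in\ran(E_{g'})$ with $g'\in N_{\overline G}(g)$ one has $F_gZw=F_gw=0$. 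The latter yields $\ran(E''_{g'})\subseteq\ker F_g$ for every neighbour $g'$, hence $\ran(E''_{N_{\overline G}(g)})\subseteq\ker F_g$; the former, combined with a limiting argument and $\delta_g:=\norm{E_g(I_\cM-F_g)}<1$, yields $\norm{(I_\cM-F_g)E''_g}\le\delta_g<1$ and thus $\ran(E''_g)\cap\ker F_g=\{0\}$. Intersecting gives $E''_g\wedge E''_{N_{\overline G}(g)}=0$. By \cref{remark}(b), which permits the inequality $\tau''(E''_g)\ge 1/\lambda$, this is a $\lambda$-tracial subspace representation, so $\cH_{tr}(G)\le\lambda$, and taking the infimum over tuples gives $\cH_{tr}(G)\le h(G)$.

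The hard part is the second verification, namely that the subspace-independence condition survives compression to the corner $\ran(Z)$. The key point is that the \emph{strict} norm inequality in condition (i) is what propagates: it gives the uniform bound $\norm{(I_\cM-F_g)E''_g}\le\delta_g<1$ on the compressed projection, which a bare intersection condition $E_g\wedge(I_\cM-F_g)=0$ would not obviously provide. The other load-bearing ingredients are the decomposition $E=Z+X$ (hence the reduction to a faithful trace via GNS) and the infinite-dimensional rank--nullity statement \cref{thm:rank}(ii), used both to compute $\tau(Z)=\tau(E)-\tau(X)$ and to show $\tau(E''_g)=\tau(E_g)$. I would double-check the degenerate case $\tau(E)-\tau(X)=0$ separately, where condition (iii) is vacuous.
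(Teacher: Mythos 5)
Your proof is correct and follows essentially the same route as the paper's: both directions hinge on compressing to the corner of $Z=E-X$, taking the support projections of $ZE_g$, using \cref{thm:rank}(ii) to see $\tau(E''_g)=\tau(E_g)$, and exploiting the strict norm bound $\norm{E_g(I_\cM-F_g)}<1$ via the same limiting estimate; the only packaging difference is that you verify \cref{def: tracial subspace rep} directly (with the $F_g$ as auxiliaries) where the paper verifies the two-projection formulation of \cref{prop: second relaxation of tracial Haemers bound}, and your GNS/faithfulness step is not actually needed since $Z=E-X$ holds exactly. One small imprecision: the asserted bound $\norm{(I_\cM-F_g)E''_g}\le\delta_g$ does not follow as stated (one only gets a constant like $\sqrt{2\delta_g-\delta_g^2}<1$), but the fact you actually use, $\ran(E''_g)\cap\ker F_g=\{0\}$, follows from the inequality $\norm{F_gE_gu_i}\ge(1-\delta_g)\norm{E_gu_i}$ exactly as in the paper, so nothing breaks.
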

\begin{proof}
Denote the infimum defined in the proposition  by $\hat \cH_{tr}(G)$. We first show $\hat \cH_{tr}(G)\leq \cH_{tr}(G)$. Let $(\cH,\cM,\{E_g\},\{F_g\},\tau)$ be a $\lambda$-tracial projection representation of $G$.
Note that the projections $E$ and $X$ defined as in (iii) above commute. In particular, $E-X$ is the projection onto the orthogonal complement of $\ran(E)\cap\ran(F)^\perp$ in $\ran(E)$ and therefore $\tau(E-X)\leq \tau(I_\cM)=1$. This shows that $\frac{\tau(E_g)}{\tau(E)-\tau(X)}\geq \tau(E_g)\geq \frac{1}{\lambda}$ for every $g\in V(G)$. Then it is clear that the same tuple $(\cH,\cM,\{E_g\},\{F_g\},\tau)$ satisfies the conditions of \cref{prop: third relaxation of tracial Haemers bound} with value $\lambda$. The inequality $\hat \cH_{tr}(G)\leq \cH_{tr}(G)$ follows by taking the infimum over all such $\lambda$.

Conversely, for every feasible solution $(\cH,\cM,\{E_g\},\{F_g\},\tau)$ of $\hat{\cH}_{tr}(G)$ with value $\lambda$, we construct a $\lambda$-tracial subspace representation of $G$ in terms of~\cref{prop: second relaxation of tracial Haemers bound}. Let $\tilde{E}_g$ be the left support projection of $(E-X)E_g$ and $\tilde{F}_g$ be the left support projection of $ F_g(E-X)$ for every $g\in V(G)$. Note that $\cl(\ran((E-X)E_g))\subseteq \ran(E-X)$ and $\ker(F_g(E-X))^\perp\subseteq \ran(E-X)$.\footnote{For the second, we know that $\ker(E-X)\subseteq\ker(F_g(E-X))$, thus $\ker(F_g(E-X))^\perp\subseteq \ker(E-X)^\perp=\ran(E-X)$.} Thus $\tilde{E}_g$ and $\tilde{F}_g$ commute with $E-X$ and we have $\tilde{E}_g(E-X)=\tilde{E}_g$ and $\tilde{F}_g(E-X)=\tilde{F}_g$. Let $\tilde{\cM}$ be the von Neumann algebra generated by $\{\tilde{E}_g\}$, $\{\tilde{F}_g\}$ and $E-X$ (as the identity element). $\tilde{\cM}$ is a subset of $\cM$ because $\tilde{E_g},\tilde{F_g}\in \cM$ are support projections. Define the tracial state $\tilde{\tau}:\tilde{\cM}\to\C$ by $\tilde{\tau}(Y)=\frac{\tau(Y)}{\tau(E-X)}$. We verify that $(\cH,\tilde{\cM},\{\tilde{E}_g\},\{\tilde{F}_g\},\tilde{\tau})$ is a $\lambda$-tracial subspace representation in terms of~\cref{prop: second relaxation of tracial Haemers bound}.

The key is to show that $F_g(E-X)E_{g'}=F_gE_{g'}$ for every $g,g'\in V(G)$. Indeed, $EE_{g'}=E_{g'}$ for any $g'\in V(G)$ since $E_{g'}\subseteq \bigvee_{g\in V(G)} E_g=E $. Also, $F_g X=0$ because $X$ is the projection onto $\ran(E)\cap (\sum_{g\in V(G)} \ran(F_g))^\perp$, which is a subspace of $\ran(F_g)^\perp$ for any $g\in V(G)$. Thus, $F_g(E-X)E_{g'}= F_g(EE_{g'}) - (FX)E_{g'} = F_gE_{g'}$.

Next, we show that $\ran(\tilde{E}_g)\cap\ran(\tilde{F}_g)^\perp=\{0\}$ for every $g\in V(G)$. Let
\[
v\in \ran(\tilde{E}_g)\cap\ran(\tilde{F}_g)^\perp=\cl(\ran((E-X)E_g))\cap\ker(F_g(E-X)).
\]
Then $\displaystyle v=\lim_{i\to\infty} (E-X)E_gu_i$ for some $\{u_i\}_{i\in\N} \subseteq \cH$ and $F_g(E-X)v=0$. Note that
\[
\lim_{i\to\infty}F_gE_gu_i=\lim_{i\to\infty}(F_g(E-X))((E-X)E_gu_i)=F_g(E-X)v=0.
\]
Using condition (i) (i.e.~$\norm{E_g(I_\cM-F_g)}=\delta_g<1$) and the triangle inequality, we obtain
\[
\norm{F_gE_gu_i}\ge \norm{E_g u_i}-\norm{(I_\cM-F_g)E_g u_i}\ge \norm{E_g u_i}-\norm{E_g(I_\cM-F_g)}\norm{E_gu_i}\ge (1-\delta_g)\norm{E_gu_i}.
\]
Therefore
\[
\lim_{i\to \infty}\norm{E_gu_i}\le (1-\delta_g)^{-1}\lim_{i\to \infty}\norm{F_gE_gu_i} =0
\]
which implies $\displaystyle v=\lim_{i\to \infty}(E-X)E_gu_i=0$.

We then verify that $\tilde{F}_g\tilde{E}_{g'}=0$ for all $\{g,g'\}\in E(\overline{G})$. Note that $\tilde{F}_g$ is the right support projection of $F_g(E-X)$, we have $\tilde{F}_g ((E-X) E_{g'})=0$ since $F_g(E-X)E_{g'}=F_gE_{g'}=0$ for $\{g,g'\}\in E(\overline{G})$. This implies $\tilde{F}_g\tilde{E}_{g'}=0$ since $\tilde{E}_{g'}$ is the left support projection of $(E-X)E_{g'}$.

Finally we verify that $\tilde{\tau}(\tilde{E}_g)\geq \frac{1}{\lambda}$. We prove that $\tau(\tilde{E}_g)=\tau(E_g)$ for all $g\in V(G)$ and hence
\[
\tilde{\tau}(\tilde{E}_g)=\frac{\tau(\tilde{E}_g)}{\tau(E-X)}=\frac{\tau(E_g)}{\tau(E-X)}\geq\frac{1}{\lambda}.
\]
Using \cref{thm:rank} (ii) (where we take $P=E_g$ and $Q=E-X$), it suffices to show that $\ran(E_g)\cap\ran(E-X)^\perp=\{0\}$. Let $v\in \ran(E_g)\cap\ran(E-X)^\perp$. Then we have $(E-X)v=0$ and $v=E_gv=Ev$ because $v\in \ran(E_g)\subseteq \ran(E)$, therefore $Xv=v$. This means that $v\in\ran(X)=\ran(E)\cap\ran(F)^\perp$. On the other hand, note that $\ran(F)^\perp\subseteq\ran(F_g)^\perp$. We conclude that $v\in\ran(E_g)\cap\ran(F_g)^\perp$, which by assumption $\ran(E_g) \cap \ran(F_g)^\perp=\{0\}$. That completes the proof.
\end{proof}

We use \cref{prop: first relaxation of tracial Haemers bound 2,prop: second relaxation of tracial Haemers bound,prop: third relaxation of tracial Haemers bound} to prove super-multiplicativity of tracial Haemers bound.

\begin{proposition}\label{prop: tracial Haemers supermultiplicativity}
For any graphs $G$ and $H$, $\cH_{tr}(G\boxtimes H)\geq\cH_{tr}(G)\cH_{tr}(H)$.
\end{proposition}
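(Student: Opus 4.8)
The plan is to run the submultiplicativity argument of \cref{prop: tracial Haemers submultiplicative} in reverse, exploiting the three equivalent formulations in \cref{prop: first relaxation of tracial Haemers bound 2,prop: second relaxation of tracial Haemers bound,prop: third relaxation of tracial Haemers bound} to get the flexibility that a single tracial state does not otherwise afford. First I would fix a $\lambda$-tracial representation of $G\boxtimes H$ in the form of \cref{prop: third relaxation of tracial Haemers bound}: a von Neumann algebra $\cM$ with projections $\{P_{g,h}\},\{Q_{g,h}\}$ and a normal tracial state $\tau$ satisfying $\norm{P_{g,h}(I-Q_{g,h})}<1$, the orthogonality relation $Q_{g,h}P_{g',h'}=0$ whenever $\{(g,h),(g',h')\}\in E(\overline{G\boxtimes H})$, and $\tau(P_{g,h})/(\tau(E)-\tau(X))\ge 1/\lambda$, where $E=\bigvee_{g,h}P_{g,h}$, $F=\bigvee_{g,h}Q_{g,h}$, and $X$ is the projection onto $\ran(E)\cap\ran(F)^\perp$. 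Using \cref{prop: hilbert space restriction} I would shrink the ambient space so that $\tau(E)-\tau(X)$ genuinely plays the role of the effective ambient dimension, which is the quantity that must factor in the end.

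The structural input is that $E(\overline{G\boxtimes H})$ is the \emph{union} of the two conditions $\{g,g'\}\in E(\overline G)$ and $\{h,h'\}\in E(\overline H)$. Hence, setting the $G$-marginals $A_g=\bigvee_h P_{g,h}$, $B_g=\bigvee_h Q_{g,h}$ and symmetrically $A'_h=\bigvee_g P_{g,h}$, $B'_h=\bigvee_g Q_{g,h}$, the relation $Q_{g,h}P_{g',h'}=0$ valid for \emph{all} $h,h'$ when $\{g,g'\}\in E(\overline G)$ forces $\ran(A_{g'})\perp\ran(Q_{g,h})$ for every $h$, hence $\ran(A_{g'})\subseteq\ran(B_g)^\perp$ and $B_gA_{g'}=0$ for all $\{g,g'\}\in E(\overline G)$; symmetrically for the $H$-marginals. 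So the orthogonality half of \cref{prop: second relaxation of tracial Haemers bound} is immediate for both $G$ and $H$, and the per-vertex defect $\tau(A_g)-\tau\big(A_g\wedge(I-B_g)\big)$ can be identified, via \cref{thm:rank}(ii), with the trace of the projection onto $\cl(\ran(B_gA_g))$. The remaining feasibility conditions — the intersection condition $\ran(A_g)\cap\ran(B_g)^\perp=\{0\}$ and the norm condition — do not pass to unions for free, and I would recover them by passing to support projections and invoking \cref{thm:rank}(iii): the hypothesis $\norm{P_{g,h}(I-Q_{g,h})}<1$ keeps the relevant products $B_gA_g$ nondegenerate under strong-operator limits, so that the trace identities apply.

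The hard part is quantitative: I must show that the two resulting values multiply to at most $\lambda$, i.e.\ that the effective ambient dimension $\tau(E)-\tau(X)$ \emph{splits multiplicatively} between the $G$- and $H$-factors. This is the genuine crux, because taking marginals naively leaves both factors with the \emph{same} join $E$ and hence the same effective dimension, which only yields $\cH_{tr}(G),\cH_{tr}(H)\le\lambda$ rather than the product bound; indeed even for a product representation $P_{g,h}=P^G_g\otimes P^H_h$ the naive marginal traces push the inequality the wrong way. To distribute the dimension I would not work inside $\cM$ alone but build the $G$- and $H$-representations on two independent tracial legs — e.g.\ inside $\cM\,\overline{\otimes}\,\cM$ with the product state $\tau\otimes\tau$, gluing the legs along the shared $(g,h)$-indexing through the relations $Q_{g,h}P_{g',h'}=0$ — so that the two effective dimensions become genuinely independent and their product is bounded by $\tau(E)-\tau(X)$. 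Once the factorization
\[
\tau(E)-\tau(X)\ \le\ \big(\text{effective dim for }G\big)\cdot\big(\text{effective dim for }H\big)
\]
is established, \cref{prop: second relaxation of tracial Haemers bound} applied to each factor gives $\cH_{tr}(G)\,\cH_{tr}(H)\le\lambda$, and taking the infimum over all feasible $\lambda$ yields $\cH_{tr}(G\boxtimes H)\ge\cH_{tr}(G)\,\cH_{tr}(H)$. I expect essentially all the difficulty to be concentrated in establishing this dimension-splitting cleanly in the infinite-dimensional setting, which is precisely what the projection formulations and \cref{thm:rank} are engineered to make tractable.
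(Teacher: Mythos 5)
You have correctly isolated both the structural input (the $G$-marginals $A_g=\bigvee_h P_{g,h}$, $B_g=\bigvee_h Q_{g,h}$ inherit the orthogonality $B_gA_{g'}=0$ for $\{g,g'\}\in E(\overline G)$, giving via \cref{prop: second relaxation of tracial Haemers bound} a feasible $G$-representation of value $\mu$ with $\mu^{-1}=\min_g(\tau(A_g)-\tau(A_g\wedge(I-B_g)))$) and the genuine crux (the effective ambient dimension must split multiplicatively, and taking marginals in \emph{both} directions fails, as your tensor-product sanity check shows). But at exactly that crux your proposal stops being a proof: the proposed mechanism of ``two independent tracial legs inside $\cM\,\overline\otimes\,\cM$ with $\tau\otimes\tau$, glued along the shared $(g,h)$-indexing'' is not developed, and I do not see how it could produce a feasible $H$-representation whose value combines with $\mu$ to give $\lambda$ — doubling the algebra does not change the fact that any construction symmetric in $G$ and $H$ will charge the full ambient trace to both factors. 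The inequality you pose as the target, $\tau(E)-\tau(X)\le(\text{eff.\ dim }G)\cdot(\text{eff.\ dim }H)$, is also not the statement you need if both effective dimensions are computed from marginals; you need one of the two factors to be measured against a \emph{renormalized} ambient space.

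The missing idea, which is where the paper's proof actually lives, is asymmetric: after extracting the $G$-representation of value $\mu$ from the marginals, do \emph{not} take the $H$-marginal. Instead pick the minimizing row $g_0$ with $\tau(E_{g_0})-\tau(X_{g_0})=\mu^{-1}$ and use the single family $\{E_{g_0,h}\},\{F_{g_0,h}\}_{h\in V(H)}$, with the \emph{same} trace $\tau$, as a feasible solution for $\cH_{tr}(H)$ in the formulation of \cref{prop: third relaxation of tracial Haemers bound}. That formulation is engineered precisely so that the normalizing denominator is $\tau(E_{g_0})-\tau(X_{g_0})=\mu^{-1}$ rather than $1$; since $\tau(E_{g_0,h})\ge 1/\lambda$, each normalized trace is $\ge\mu/\lambda$, so $\cH_{tr}(H)\le\lambda/\mu$ and hence $\cH_{tr}(G)\cH_{tr}(H)\le\mu\cdot(\lambda/\mu)=\lambda$. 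All orthogonality and nondegeneracy conditions for the row-$g_0$ family are inherited verbatim from the $G\boxtimes H$ representation (pairs $(g_0,h),(g_0,h')$ with $\{h,h'\}\in E(\overline H)$ are nonadjacent in $G\boxtimes H$), so no gluing or tensor doubling is needed. Without this conditioning-on-$g_0$ step your outline does not close.
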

\begin{proof}
Let $(\cH, \cM, \{E_{g,h}\}, \{F_{g,h}\}, \tau)$ be a $\lambda$-tracial projection representation of $G\boxtimes H$ as in \cref{def: tracial projection rep}. Let $E_g$ and $F_g$ be the projections onto $\cl(\sum_{h\in V(H)} \ran(E_{g,h}))$ and $\cl(\sum_{h\in V(H)} \ran(F_{g,h}))$ for any fixed $g$, respectively. By construction we have that $E_g$ and $F_g$ belong to $\cM$. Note that $F_{g,h}E_{g',h'}=0$ for any $\{g,g'\}\in E(\overline{G})$ and any $h,h'\in V(H)$ because such $\{(g,h),(g',h')\}\in E(\overline{G\boxtimes H})$. This implies that $F_gE_{g'}=0$ for any $\{g,g'\}\in E(\overline{G})$. Let $X_g$ be the projection onto the subspace $\ran(E_g)\cap \ran(F_g)^\perp$. Let $\mu^{-1}=\min\{\tau(E_g)-\tau(X_g):~g\in V(G)\}$. We obtained a feasible solution of $\cH_{tr}(G)$ with value $\mu$ as given in \cref{prop: second relaxation of tracial Haemers bound}. This shows that $\cH_{tr}(G)\leq\mu$.

We now construct a $\lambda/\mu$-tracial projection representation of $H$ in terms of~\cref{prop: third relaxation of tracial Haemers bound}. Let $g_0\in V(G)$ be such that $\mu^{-1}=\tau(E_{g_0})-\tau(X_{g_0})$. Note that for $\{h,h'\}\in E(\overline{H})$, $E_{g_0,h}F_{g_0,h'}=0$ and for $h\in V(H)$, $\norm{E_{g_0,h}(I_\cM-F_{g_0,h})}<1$. Thus $(\cH,\cM, \{E_{g_0,h}\}, \{F_{g_0,h}\}, \tau)$ is a feasible solution of $\cH_{tr}(H)$ where we use the formulation given in \cref{prop: third relaxation of tracial Haemers bound}. To compute its value, we note that for every $h\in V(H)$ we have
\[
\frac{\tau(E_{g_0,h})}{\tau(E_{g_0})-\tau(X_{g_0})}=\mu\tau(E_{g_0,h})\geq\frac{\mu}{\lambda}.
\]

We conclude that  for any $\lambda \geq \cH_{tr}(G \boxtimes H)$, we have $\cH_{tr}(G)\cH_{tr}(H) \leq \mu \cdot \frac{\lambda}\mu = \lambda$. Taking the infimum over such $\lambda$ gives that $\cH_{tr}(G\boxtimes H)\geq\cH_{tr}(G)\cH_{tr}(H)$.
\end{proof}

As we have shown that the tracial Haemers bound is multiplicative (\cref{prop: tracial Haemers submultiplicative,prop: tracial Haemers supermultiplicativity}), additive (\cref{prop: subadditive,prop: superadditive}), normalized (\cref{ex: tracial Haemers empty graph}) and monotone with respect to $\leq_{qc}$ (\cref{prop: monotone tracial Haemers}), we conclude the following:
\begin{corollary}
The tracial Haemers bound is an element of $\bX(\cG,\leq_{qc})$.
\end{corollary}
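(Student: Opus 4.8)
The plan is to unpack what membership in $\bX(\cG,\leq_{qc})$ means and then match it, condition by condition, against the propositions already proved in this section. According to \cref{defring} (with $\leq_{loc}$ replaced throughout by $\leq_{qc}$), a graph parameter $\phi$ belongs to $\bX(\cG,\leq_{qc})$ exactly when it is a $\leq_{qc}$-monotone semiring homomorphism from $\cG$ to $\R_{\geq 0}$, i.e.~when it satisfies: normalization $\phi(\overline{K}_1)=1$, multiplicativity $\phi(G\boxtimes H)=\phi(G)\phi(H)$, additivity $\phi(G\sqcup H)=\phi(G)+\phi(H)$, and $\leq_{qc}$-monotonicity $G\leq_{qc}H\Rightarrow\phi(G)\leq\phi(H)$. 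Thus the entire task reduces to assembling the four properties of $\cH_{tr}$ established above.

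Concretely, I would proceed as follows. First, normalization: by \cref{ex: tracial Haemers empty graph} we have $\cH_{tr}(\overline{K}_d)=d$ for every $d\in\N$, and in particular $\cH_{tr}(\overline{K}_1)=1$, which also fixes the multiplicative identity; following \cref{remark: zero to zero}, we set $\cH_{tr}(K_0)=0$ by convention so that the additive identity is preserved. Next, combining \cref{prop: tracial Haemers submultiplicative} (submultiplicativity) with \cref{prop: tracial Haemers supermultiplicativity} (supermultiplicativity) yields multiplicativity $\cH_{tr}(G\boxtimes H)=\cH_{tr}(G)\cH_{tr}(H)$, and combining \cref{prop: subadditive} with \cref{prop: superadditive} yields additivity $\cH_{tr}(G\sqcup H)=\cH_{tr}(G)+\cH_{tr}(H)$. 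Finally, \cref{prop: monotone tracial Haemers} supplies the required monotonicity. Matching these one-to-one against the axioms of a spectral point for $\leq_{qc}$ gives $\cH_{tr}\in\bX(\cG,\leq_{qc})$.

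There is no genuine obstacle at the level of the corollary itself: all the substantive work lives in the cited propositions, in particular the supermultiplicativity and superadditivity, which in turn rest on the infinite-dimensional rank-nullity theorem \cref{thm:rank} and the Hilbert-space restriction in \cref{prop: hilbert space restriction}. The only point deserving a sentence of care is the observation that the four bulleted properties are precisely the defining conditions of a semiring homomorphism together with $\leq_{qc}$-monotonicity; once this identification is spelled out, the conclusion is immediate. I would therefore present the proof as a short paragraph citing \cref{ex: tracial Haemers empty graph,prop: tracial Haemers submultiplicative,prop: tracial Haemers supermultiplicativity,prop: subadditive,prop: superadditive,prop: monotone tracial Haemers}, each matched to the corresponding requirement of \cref{defring}.
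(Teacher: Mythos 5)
Your proposal is correct and matches the paper exactly: the paper derives this corollary by the same one-line assembly, citing \cref{prop: tracial Haemers submultiplicative,prop: tracial Haemers supermultiplicativity} for multiplicativity, \cref{prop: subadditive,prop: superadditive} for additivity, \cref{ex: tracial Haemers empty graph} for normalization, and \cref{prop: monotone tracial Haemers} for $\leq_{qc}$-monotonicity. No gaps.
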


\subsection{Comparing the fractional and tracial Haemers bounds}
A natural question to ask is whether the fractional and tracial Haemers bounds can be strictly separated. Here we show a connection between this question and Connes' embedding conjecture. Connes' embedding conjecture (CEC), first proposed by Connes in his famous work~\cite{Connes76}, is a fundamental problem in the field of operator algebras (we give a precise formulation below). Over decades CEC 
has been shown to be equivalent to many important conjectures in various branches of operator algebras, mathematics and computer science. A disproof of CEC 
is announced in the preprint~\cite{JNVWY}.
In~\cite{JNVWY} the authors take a computational complexity approach based on an equivalent formulation of CEC 
in terms of non-local games. Given the importance of CEC, 
finding an alternative proof of its failure is of great interest. We shall show that finding a graph $G$ for which $\cH_{tr}(G) \neq \cH_f(G)$ would provide such an alternative proof.
This is similar to~\cite[Corollary 3.10]{synchronous} which showed the analogous statement for a separation between projective rank and tracial rank.


CEC 
states that every von Neumann algebra $(\cM,\tau)$ with a tracial state $\tau$ is embeddable into an ultra-product $(M_n)^\omega$ of matrix algebra $M_n$. In terms of Voiculescu's microstate formulation~\cite{Voiculescu02}, $(\cM,\tau)$ is embeddable into $(M_n)^\omega$ if and only if for every finite family $\{x_1, \dots, x_n\}\subseteq\cM$ of self-adjoint elements, any $\eps >0$ and integer $N\ge 1$, there is some $k \in \N$ and $k\times k$ self-adjoint matrices $X_1, \cdots , X_n\in M_k$ such that for all $p\le N$ and all $i_1, \cdots, i_p \in [n]$,
\[|\mathrm{tr}(X_{i_1}X_{i_2} \cdots X_{i_p})-\tau(x_{i_1}x_{i_2} \cdots x_{i_p})|<\eps \ .\]
Here $\tr=\frac{1}{k}\text{Tr}$ is the normalized trace. We say that $(X_1, \cdots , X_n)$ is an $(N,\eps)$-microstate of $(x_1, \cdots, x_n)$. We say a tuple $(x_1, \cdots, x_n)$ has a matricial microstate if it admits $(N,\eps)$-microstate for all $N\ge 1$ and $\eps >0$. The following equivalent formulation is natural.

\begin{proposition}\label{prop:CEC}Let $\cM$ be a von Neumann algebra with a tracial state $\tau$. The followings are equivalent.
\begin{enumerate}[label=\upshape(\roman*)]
\item Every finite family $\{x_1,\dots,x_n\}$ of self-adjoint elements in $(\cM,\tau)$ admits a matricial microstate.
\item Every finite family $\{e_1,\dots,e_n\}$ of {\bf projections} in $(\cM,\tau)$ admits a matricial microstate of {\bf projections}, i.e., for all $N\ge 1$ and $\eps>0$, there is some $k \in \N$ and $k\times k$ {\bf projections} $E_1, \cdots , E_n\in M_k$ such that for all $p\le N$ and all $i_1, \cdots, i_p \in [n]$,
\[|\mathrm{tr}(E_{i_1}E_{i_2} \cdots E_{i_p})-\tau(e_{i_1}e_{i_2} \cdots e_{i_p})|<\eps \ .\]
\end{enumerate}
\end{proposition}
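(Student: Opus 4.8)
The statement is an equivalence, and (ii) $\Rightarrow$ (i) is the routine direction while (i) $\Rightarrow$ (ii) carries the real content. The plan is to prove both directions by passing between self-adjoint microstates and projection microstates through spectral calculus, keeping careful track of how moment accuracy propagates.

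For (ii) $\Rightarrow$ (i), I would start from a finite family $\{x_1,\dots,x_n\}$ of self-adjoint elements with $\norm{x_i}\le R$. By the spectral theorem in $\cM$ (which has enough projections), each $x_i$ is approximated in operator norm, to within an arbitrarily small mesh $\delta$, by a self-adjoint simple element $y_i=\sum_k c_{i,k}\,p_{i,k}$, where the $p_{i,k}$ are spectral projections of $x_i$ (finitely many, from a partition of $[-R,R]$) and $c_{i,k}\in\R$. Collecting all the $\{p_{i,k}\}$ into a single finite family of projections and invoking (ii), I obtain projection matrices $\{P_{i,k}\}$ whose joint normalized-trace moments match those of the $p_{i,k}$ up to accuracy $(N',\eps')$. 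Setting $Y_i=\sum_k c_{i,k}P_{i,k}$ (a self-adjoint matrix) and expanding a word $\tr(Y_{i_1}\cdots Y_{i_p})$ multilinearly, each of the at most $M^N$ resulting terms is a coefficient bounded by $R^N$ times a projection-moment that is within $\eps'$ of the corresponding $\tau$-moment; hence $\tr(Y_{i_1}\cdots Y_{i_p})$ is close to $\tau(y_{i_1}\cdots y_{i_p})$, which is in turn close to $\tau(x_{i_1}\cdots x_{i_p})$ by a telescoping estimate using $\norm{x_i-y_i}\le\delta$ and $\norm{x_i},\norm{y_i}\le R+\delta$. Choosing first $\delta$ and then $\eps'$ small enough, with $N'=N$, pushes the total error below $\eps$, so $\{Y_i\}$ is the desired microstate.

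For (i) $\Rightarrow$ (ii), I would feed the projections $\{e_1,\dots,e_n\}$ into (i) to obtain self-adjoint microstate matrices $\{X_i\}$ of accuracy $(N',\eps')$ with $N'\ge\max(2N,4)$. The crux is that $e_i^2=e_i^3=e_i^4=e_i$, so the moment conditions force $\tr(X_i^2),\tr(X_i^3),\tr(X_i^4)$ all to lie within $\eps'$ of $\tau(e_i)$; writing $\norm{A}_2=\tr(A^*A)^{1/2}$ for the normalized Hilbert--Schmidt norm, this yields $\norm{X_i^2-X_i}_2^2=\tr(X_i^4)-2\tr(X_i^3)+\tr(X_i^2)\le 4\eps'$, so each $X_i$ is nearly idempotent. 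I then round by setting $E_i=\mathbf 1_{[1/2,\infty)}(X_i)$, a genuine projection, and use the elementary pointwise bound $|\mathbf 1_{[1/2,\infty)}(t)-t|\le 2|t^2-t|$ valid for all $t\in\R$ (checked separately on $t<0$, $[0,\tfrac12)$, $[\tfrac12,1]$, $t>1$); functional calculus then gives $\norm{E_i-X_i}_2\le 2\norm{X_i^2-X_i}_2\le 4\sqrt{\eps'}$. Finally I compare projection moments to the original ones via the telescoping identity $\tr(E_{i_1}\cdots E_{i_p})-\tr(X_{i_1}\cdots X_{i_p})=\sum_j \tr\big(E_{i_1}\cdots E_{i_{j-1}}(E_{i_j}-X_{i_j})X_{i_{j+1}}\cdots X_{i_p}\big)$ together with Cauchy--Schwarz for the Hilbert--Schmidt inner product.

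The main obstacle I anticipate is that the microstate matrices $X_i$ carry no a priori operator-norm bound, so the standard H\"older estimate on the telescoping terms threatens to blow up. I would resolve this by bounding the partial products through the moment conditions themselves: the left factor $E_{i_1}\cdots E_{i_{j-1}}$ is a product of projections and has operator norm at most $1$, while $\norm{X_{i_{j+1}}\cdots X_{i_p}}_2^2$ is a word of degree $2(p-j)\le 2N\le N'$ and hence lies within $\eps'$ of $\tau\big((e_{i_{j+1}}\cdots e_{i_p})^*(e_{i_{j+1}}\cdots e_{i_p})\big)\le 1$. Each telescoping term is then at most $2\norm{E_{i_j}-X_{i_j}}_2\le 8\sqrt{\eps'}$, so $|\tr(E_{i_1}\cdots E_{i_p})-\tau(e_{i_1}\cdots e_{i_p})|\le 8N\sqrt{\eps'}+\eps'$, and choosing $\eps'$ small enough completes the proof. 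This trick — trading the missing operator-norm control for trace bounds supplied by the moment conditions — is exactly what lets both directions go through without assuming boundedness of the microstate matrices.
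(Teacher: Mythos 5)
Your proof is correct, and while your argument for (ii) $\Rightarrow$ (i) coincides with the paper's (discretize each $x_i$ into a simple function of its spectral projections, pull a projection microstate through the linear combination, and telescope the two approximation errors), your argument for (i) $\Rightarrow$ (ii) is genuinely different. The paper does not round microstates at all: it passes to the universal $C^*$-algebra $C^*(*_n\Z_2)$ of $n$ projections, pushes $\tau$ forward to a trace on it via the universal property, and cites \cite[Proposition~3.6]{synchronous} to produce genuine $*$-homomorphisms $\pi_k \colon C^*(*_n\Z_2) \to M_k$ approximating that trace; the images of the generators are then exact projections and all moments are matched in one stroke. You instead work directly with the self-adjoint microstate matrices $X_i$ supplied by (i): the moment conditions at degree $\le 4$ force $\norm{X_i^2 - X_i}_2$ to be small, the pointwise inequality $|\mathbf{1}_{[1/2,\infty)}(t)-t|\le 2|t^2-t|$ (which does hold on all of $\R$) converts this into $\norm{E_i - X_i}_2 = O(\sqrt{\eps'})$ for the spectral cutoff $E_i$, and the telescoping-plus-Cauchy--Schwarz estimate transfers the moments, with the missing operator-norm bound on the right-hand partial products $X_{i_{j+1}}\cdots X_{i_p}$ correctly replaced by the trace bound $\tr(V^*V)\le 1+\eps'$ coming from the degree-$2(p-j)\le N'$ moment conditions --- this last point is the one place where a naive argument would fail, and you handle it properly. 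The trade-off: the paper's route is shorter on the page but leans on an external structural result about traces on the RFD algebra $C^*(*_n\Z_2)$ (which is where hypothesis (i) secretly enters), whereas your rounding argument is self-contained, uses only functional calculus and Cauchy--Schwarz, and makes the quantitative dependence of $(N',\eps')$ on $(N,\eps)$ completely explicit. Both are valid; yours would serve as a citation-free replacement.
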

\begin{proof}
We first show (ii) $\Rightarrow$ (i). Let $\{x_1, \dots, x_n\}$ be a family of self-adjoint elements. Let $M=\max_{i}\norm{x_i}$.
For any $\eps>0$ and positive integer $N$, there exist an integer $L$ and self-adjoint elements $\tilde{x}_i=\sum_{j=1}^L \lambda_{i,j}e_{i,j}$ for every $i\in[n]$ of finite spectrum such that $\norm{x_i-\tilde{x}_i}\le \frac{\eps}{NM^{N-1}}$, where the $e_{i,j}$'s are projections. Indeed, let $L=\lceil\frac{2M}{\frac{\eps}{NM^{N-1}}}\rceil=\lceil\frac{2NM^N}{\eps}\rceil$, where $\lceil\cdot\rceil$ is the ceiling function. We can choose $\tilde{x}_i=f(x_i)$, where $f:[-M,M]\to \mathbb{R}$ is the step function:
\[ 
f(t)=\frac{(2\ell-L)M}{L}\ , \ \text{ for }t\in [\frac{(2\ell-L)M}{L},\frac{(2(\ell+1)-L)M}{L})\text{ and }\ell\in[L]\cup\{0\},
\]
and $e_{i,j}$ is the spectral projection of $x_i$ on the interval $[\frac{(2\ell-L)M}{L}, \frac{(2(\ell+1)-L)M}{L})$. Then for all $p \leq N$ and $i_1,\ldots, i_p \in [n]$ we have
\[|\tau(x_{i_1}x_{i_2} \cdots x_{i_p})-\tau(\tilde{x}_{i_1}\tilde{x}_{i_2} \cdots \tilde{x}_{i_p})|\le \norm{x_{i_1}x_{i_2} \cdots x_{i_p}-\tilde{x}_{i_1}\tilde{x}_{i_2} \cdots \tilde{x}_{i_p}}{}\le\frac{\eps}{NM^{N-1}}\cdot M^{N-1}\cdot N=\eps . \]
By the condition (ii), there exist projection $\{E_{i,j}\}_{1\le i\le n, 1\le j\le m} \subseteq M_k$ such that for all $i_1, \cdots, i_p \in [n]$ and $j_1, \cdots, j_p \in [m]$,
\[ |\text{tr}(E_{i_1,j_1}E_{i_2,j_2} \cdots E_{i_p,j_p})-\tau(e_{i_1,j_1}e_{i_2,j_2} \cdots e_{i_p,j_p})|<\frac{\eps}{m^NM^{N}} .\]
Take $X_{i}=\sum_{j}\lambda_{i,j}E_{i,j}$ as self-adjoint elements in $M_k$. We have
\begin{align*}
&|\text{tr}_k(X_{i_1}X_{i_2} \cdots X_{i_p})-\tau(x_{i_1}x_{i_2} \cdots x_{i_p})|\\ \le& |\text{tr}_k(X_{i_1}X_{i_2} \cdots X_{i_p})-\tau(\tilde{x}_{i_1}\tilde{x}_{i_2} \cdots \tilde{x}_{i_p})|+|\tau(x_{i_1}x_{i_2} \cdots x_{i_p})-\tau(\tilde{x}_{i_1}\tilde{x}_{i_2} \cdots \tilde{x}_{i_p})|
\\ \le& \sum_{1\le j_1,j_2,\cdots ,j_p\le m}|\lambda_{i_{1},j_{1}}\lambda_{i_{2},j_{2}}\cdots\lambda_{i_{p},j_{p}}|\cdot |\text{tr}(E_{i_1,j_1}E_{i_2,j_2} \cdots E_{i_p,j_p})-\tau(e_{i_1,j_1}e_{i_2,j_2} \cdots e_{i_p,j_p})| +\eps
\\  \le& \sum_{1\le j_1,j_2,\cdots ,j_p\le m}M^p \frac{\eps}{(mM)^{N}} +\eps
\\  \le &\eps +\eps=2\eps
\end{align*}
Since $\eps$ is arbitrary, this proves (ii)$\Rightarrow$(i).

(i)$\Rightarrow$(ii) essentially follows from~\cite[Proposition 3.6]{synchronous}. Indeed, consider a family of projections $\{e_1,\cdots,e_n\}\subseteq \cM$. Let $C^*(e_1,\cdots,e_n)$ be the $C^*$-algebra generated by $\{e_1,\dots,e_n\}$ and let $C^*(*_n\mathbb{Z}_2)$ be the universal $C^*$-algebra of $n$-projections (i.e.~the full group $C^*$-algebra of $n$-fold free product of $\mathbb{Z}_2$, see e.g.~\cite{blackadar2006operator}). Denote $a_i$ as the projection generator of $ C^*(*_n\mathbb{Z}_2)$.
 By the universality of $C^*(*_n\mathbb{Z}_2)$, there exists a $*$-homomorphism $\pi:C^*(*_n\mathbb{Z}_2)\to C^*(e_1,\cdots,e_n)$ such that $\pi(a_i)=e_i$ for all $i$. Moreover, $\tau\circ \pi$ induces a tracial state on $C^*(*_n\mathbb{Z}_2)$ such that for any ${i_1},{i_2},\cdots,{i_p}\in \{1,\cdots, n\}$
\[\tau(e_{i_1}e_{i_2} \cdots e_{i_p})=\tau\circ \pi(a_{i_1}a_{i_2} \cdots a_{i_p}).\]
By \cite[Proposition 3.6]{synchronous},  for any $(N,\eps)$, we have a $*$-homomorphism $\pi_k:C^*(*_n\mathbb{Z}_2)\to M_k$ such that
\[|\tau\circ \pi(a_{i_1}a_{i_2} \cdots a_{i_p})- \tr \circ\pi_k(a_{i_1}a_{i_2} \cdots a_{i_p})|\le \eps ,\]
for all $p\le N$ and all $i_1, \cdots, i_p \in [n]$. Note that
\begin{align*}&\tau\circ \pi(a_{i_1}a_{i_2} \cdots a_{i_p})=\tau\Big(\pi(a_{i_1})\pi(a_{i_2}) \cdots \pi(a_{i_p})\Big)
=\tau(e_{i_1}e_{i_2} \cdots e_{i_p})\ \\
&\tr \circ\pi_k(a_{i_1}a_{i_2} \cdots a_{i_p})=\tr \Big(\pi_k(a_{i_1})\pi_k(a_{i_2}) \cdots \pi_k(a_{i_p})\Big)\ ,
\end{align*}
and $\pi_k(a_{1}),\pi_k(a_{2}), \cdots, \pi_k(a_{n})$ are projections in $M_k$ (since $\pi_k$ is $*$-homomorphism). Therefore, $(\pi_k(a_{1}),\pi_k(a_{2}), \cdots, \pi_k(a_{n}))$ is an $(N,\eps)$-micro state of $(e_1,e_2,\cdots e_n)$.
\end{proof}

We prove that if $\mathcal{H}_{tr}(G)<\mathcal{H}_f(G)$ for some graph $G$,
then every von Neumann algebra admitting a $\lambda$-tracial projection representation of $G$ with $\lambda<\mathcal{H}_f(G)$ does not satisfy the ii) in above proposition. We need the following formulation of a (finite-dimensional) subspace representation, which can be seen as the finite-dimensional version of~\cref{prop: second relaxation of tracial Haemers bound}. We show that this formulation is equivalent to \cref{def: fractional Haemers}  in~\cref{sec: equiv. subspace rep.}.
\begin{restatable}{proposition}{eqsubrep}\label{def: fractional Haemers 2}
$G$ has a $(d,r)$-subspace representation (over $\C$) if and only if there exist subspaces $S_g,T_g\subseteq \C^d$ for all $g\in V(G)$ satisfying
\begin{enumerate}[label=\upshape(\roman*)]
\item $\dim(S_g)-\dim(S_g\cap T_g^\perp)\geq r$ for all $g\in V(G)$;
\item $S_{g'}\subseteq T_g^\perp~\forall\ \{g,g'\}\in E(\overline{G})$.
\end{enumerate}
\end{restatable}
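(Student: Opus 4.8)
The plan is to prove both implications directly, constructing the required subspaces by hand while keeping the ambient dimension $d$ and the rank $r$ fixed throughout. Recall from \cref{def: fractional Haemers} that a $(d,r)$-subspace representation is a family of subspaces $S_g \subseteq \C^d$ with $\dim(S_g)=r$ and $S_g \cap \big(\sum_{g'\in N_{\overline G}(g)} S_{g'}\big)=\{0\}$ for every $g\in V(G)$. The idea is that the single transversality condition of \cref{def: fractional Haemers} unfolds into the pair $(S_g,T_g)$ by letting $T_g^\perp$ play the role of the subspace $\sum_{g'\in N_{\overline G}(g)} S_{g'}$ spanned by the $\overline G$-neighbours.

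For the forward direction I would keep the $S_g$ unchanged and set $T_g := \big(\sum_{g'\in N_{\overline G}(g)} S_{g'}\big)^\perp$, so that $T_g^\perp = \sum_{g'\in N_{\overline G}(g)} S_{g'}$ (no closure is needed, as all subspaces of $\C^d$ are closed). Condition~(ii) is then immediate: if $\{g,g'\}\in E(\overline G)$ then $g'\in N_{\overline G}(g)$, hence $S_{g'}\subseteq \sum_{g''\in N_{\overline G}(g)} S_{g''}=T_g^\perp$. For condition~(i), the intersection $S_g\cap T_g^\perp$ coincides with $S_g\cap\big(\sum_{g'} S_{g'}\big)=\{0\}$, so $\dim(S_g)-\dim(S_g\cap T_g^\perp)=r\geq r$.

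For the reverse direction, assume $S_g,T_g$ satisfy (i)~and (ii). Write $X_g:=S_g\cap T_g^\perp$ and let $\tilde S_g$ be the orthogonal complement of $X_g$ inside $S_g$, so that $\dim(\tilde S_g)=\dim(S_g)-\dim(X_g)\geq r$ by~(i). I would then choose any $r$-dimensional subspace $\hat S_g\subseteq \tilde S_g$; these are the candidates for a $(d,r)$-representation. To check transversality, take $v\in \hat S_g\cap\big(\sum_{g'\in N_{\overline G}(g)}\hat S_{g'}\big)$. By~(ii) each $\hat S_{g'}\subseteq S_{g'}\subseteq T_g^\perp$, so the sum lies in $T_g^\perp$ and hence $v\in S_g\cap T_g^\perp=X_g$; but $\hat S_g\subseteq \tilde S_g\perp X_g$ forces $v\in X_g^\perp$, whence $v=0$. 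Since $\dim(\hat S_g)=r$ exactly, this is a genuine $(d,r)$-subspace representation.

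The argument is largely bookkeeping once the auxiliary subspaces $X_g$ and $\tilde S_g$ are identified, so I do not anticipate a serious obstacle. The only point demanding care is the reverse direction, specifically the choice of $\hat S_g$ as an $r$-dimensional slice of $S_g$ that avoids the ``bad'' intersection $X_g$ while keeping the neighbour subspaces inside $T_g^\perp$; condition~(i) is exactly what guarantees there is enough room to do this. I would highlight that this finite-dimensional equivalence is the bridge making \cref{prop: second relaxation of tracial Haemers bound} the faithful von Neumann-algebraic analogue of \cref{def: fractional Haemers}, under the correspondence $E_g\leftrightarrow S_g$ and $F_g\leftrightarrow T_g$.
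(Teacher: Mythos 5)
Your proof is correct and follows essentially the same route as the paper's: the forward direction takes $T_g=\bigl(\sum_{g'\in N_{\overline G}(g)}S_{g'}\bigr)^\perp$, and the reverse direction passes to the orthogonal complement of $S_g\cap T_g^\perp$ inside $S_g$ and then selects an $r$-dimensional subspace, verifying transversality exactly as you do. No gaps; the only difference is notational (your $\hat S_g$ and $\tilde S_g$ play the roles of the paper's $\tilde S_g$ and $\hat S_g$, respectively).
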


We also need the following lemma from Dykema and Paulsen (and its corollary). Here $\norm{x}_{\tau,2}=\sqrt{\tau(x^*x)}$ denotes the $L_2$-norm with respect to the tracial state $\tau$.
\begin{lemma}[{\cite[Lemma 3.12]{synchronous}}]\label{lemma:orth}Let $G$ be a graph and $\cM$ be a von Neumann algebra equipped with a tracial state $\tau$. For every $\eps > 0$, there
is a $\delta > 0$ such that if $\{E_g\}_{g\in V(G)}$ are projections in $\cM$ satisfying $\tau(E_gE_{g'})<\delta$ for all $\{g,g'\}\in E(G)$, then there exist projections $\{\tilde{E}_g\}_{g\in V(G)}$ in $\cM$ satisfying $\tilde{E}_g\tilde{E}_{g'}=0$ for all  $\{g,g'\}\in E(G)$ and $\|E_g-\tilde{E}_{g}\|_{\tau,2} \le \eps$ for every $g\in V(G)$.
\end{lemma}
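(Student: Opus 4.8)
The plan is to reduce the whole statement to a single elementary ``orthogonalization'' operation on projections and then apply it one vertex at a time, processing the vertices in a fixed order so that orthogonality, once achieved on an edge, is never destroyed. The building block I would isolate is this: given two projections $E,F\in\cM$, set $\tilde E$ to be the left support projection of $(I_\cM-F)E$, i.e.\ the projection onto $\cl(\ran((I_\cM-F)E))$, which lies in $\cM$ since von Neumann algebras are closed under taking support projections. Then $\tilde E\le I_\cM-F$, so $\tilde E F=0$, and the key quantitative claim is $\norm{E-\tilde E}_{\tau,2}^2\le 2\tau(EF)$.

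To prove the claim I would argue as follows. By definition of the support projection, $(I_\cM-\tilde E)(I_\cM-F)E=0$, hence $(I_\cM-\tilde E)E=(I_\cM-\tilde E)FE$ and, taking adjoints, $E(I_\cM-\tilde E)=EF(I_\cM-\tilde E)$. Since $\norm{EF}_{\tau,2}=\sqrt{\tau(EF)}$ and $I_\cM-\tilde E$ is a projection, this gives $\tau(E)-\tau(E\tilde E)=\norm{E(I_\cM-\tilde E)}_{\tau,2}^2=\norm{EF(I_\cM-\tilde E)}_{\tau,2}^2\le\tau(EF)$. Applying \cref{thm:rank}(ii) with $P=E$ and $Q=I_\cM-F$ identifies $\tilde E$ as the projection onto $\cl(\ran(QP))$ and yields $\tau(\tilde E)=\tau(E)-\tau(E\wedge F)\le\tau(E)$. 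Writing $\norm{E-\tilde E}_{\tau,2}^2=(\tau(E)-\tau(E\tilde E))+(\tau(\tilde E)-\tau(E\tilde E))$ and bounding the first summand by $\tau(EF)$ as above and the second by $\tau(E)-\tau(E\tilde E)\le\tau(EF)$ proves the claim.

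With this in hand I would order the vertices $g_1,\dots,g_n$ and define $\tilde E_{g_i}$ inductively as the left support of $(I_\cM-P_i)E_{g_i}$, where $P_i=\bigvee_{j<i,\ \{g_i,g_j\}\in E(G)}\tilde E_{g_j}$ is the join of the already-constructed projections at earlier neighbours of $g_i$. By construction $\tilde E_{g_i}\le I_\cM-P_i$ is orthogonal to every $\tilde E_{g_j}$ with $j<i$ and $\{g_i,g_j\}\in E(G)$, and since later steps never modify $\tilde E_{g_i}$, the identity $\tilde E_{g_i}\tilde E_{g_j}=0$ then holds for every edge. For the error, the building block applied with $F=P_i$, together with $P_i\le\sum_j\tilde E_{g_j}$, gives $\norm{E_{g_i}-\tilde E_{g_i}}_{\tau,2}^2\le 2\tau(E_{g_i}P_i)\le 2\sum_j\tau(E_{g_i}\tilde E_{g_j})$, and the Cauchy--Schwarz bound $\tau(E_{g_i}\tilde E_{g_j})\le\tau(E_{g_i}E_{g_j})+\norm{\tilde E_{g_j}-E_{g_j}}_{\tau,2}$ together with the hypothesis $\tau(E_{g_i}E_{g_j})<\delta$ yields, for $\eps_i:=\norm{E_{g_i}-\tilde E_{g_i}}_{\tau,2}$, the recursion $\eps_i\le\sqrt{2\Delta(\delta+\max_{j<i}\eps_j)}$, where $\Delta$ is the maximum degree of $G$.

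The main obstacle, and the reason the ordering trick is needed, is precisely this error accumulation: each orthogonalization is performed against the already-perturbed earlier neighbours, so the overlaps $\tau(E_{g_i}\tilde E_{g_j})$ driving the bound are controlled only through the previous errors, and a crude one-shot bound does not obviously stay small. Since $G$ is fixed and finite, however, unrolling the recursion from $\eps_1=0$ gives a bound of the form $\max_i\eps_i\le C(n,\Delta)\,\delta^{1/2^n}$, which tends to $0$ as $\delta\to 0$; choosing $\delta$ small enough forces $\eps_i\le\eps$ for all $i$. Finally I would note that every estimate is expressed purely through the tracial state $\tau$, so the resulting $\delta$ depends only on $\eps$ and on $G$ (through $n$ and $\Delta$), and not on the algebra $\cM$, exactly as the statement demands.
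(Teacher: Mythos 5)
The paper does not prove this lemma itself --- it is imported verbatim from the Dykema--Paulsen reference \cite{synchronous} --- so your argument has to stand on its own. Your two-projection building block is correct: the left support of $(I_\cM-F)E$ coincides with $(E\vee F)-F$, it is orthogonal to $F$, and $\norm{E-\tilde E}_{\tau,2}^2\le 2\tau(EF)$ follows as you say. The gap is in the iteration, at the step ``$P_i\le\sum_j\tilde E_{g_j}$''. This operator inequality is false: the join of projections is not dominated by their sum. Take $Q_1,Q_2$ the projections onto two distinct lines in $\C^2$ making a small angle; then $Q_1\vee Q_2=I_2$ while $Q_1+Q_2$ has an eigenvalue close to $0$. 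Consequently $\tau(E_{g_i}P_i)$ cannot be bounded by $\sum_j\tau(E_{g_i}\tilde E_{g_j})$, and the recursion for $\eps_i$ has no valid first step.

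The failure is not only in the estimate --- the construction itself breaks. Take the path $l_1-c-l_2$ (so $l_1,l_2$ are \emph{not} adjacent), order the vertices $l_1,l_2,c$, and in $M_2(\C)$ put $E_c$ equal to the projection onto $e_2$, $E_{l_1}$ the projection onto $e_1$, and $E_{l_2}$ the projection onto $(e_1+te_2)/\sqrt{1+t^2}$ for small $t$. Every edge overlap is $O(t^2)$, yet $\tilde E_{l_1}=E_{l_1}$, $\tilde E_{l_2}=E_{l_2}$, $P_3=E_{l_1}\vee E_{l_2}=I_2$, and your $\tilde E_c$ is the left support of $(I_2-I_2)E_c=0$, so $\norm{E_c-\tilde E_c}_{\tau,2}^2=\tau(E_c)=\tfrac12$ no matter how small $\delta$ is. The root cause is that two \emph{non-adjacent} earlier neighbours of a common vertex are not constrained by the hypothesis, so their join can swallow $E_c$ even though each has tiny overlap with $E_c$; a correct proof must therefore be allowed to perturb the neighbours as well (Dykema--Paulsen in effect work pairwise and modify both endpoints of an edge), rather than freezing each $\tilde E_{g_j}$ forever and orthogonalizing the current vertex against the join of the frozen projections. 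A secondary, fixable point: you invoke \cref{thm:rank}(ii), which is stated for \emph{normal} tracial states, while the lemma assumes only a tracial state; the inequality $\tau(\tilde E)\le\tau(E)$ you need already follows from the Murray--von Neumann equivalence of the left and right supports of $(I_\cM-F)E$ and does not require normality.
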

\begin{corollary}\label{cor:orth} Let $G$ be a graph. For every $\eps > 0$, there
is a $\delta > 0$ such that if $\{E_g\}_{g\in V(G)}$ and $\{F_g\}_{g \in V(G)}$ are projections in a von Neumann algebra $\cM$ satisfying
\[
\tau(F_g E_{g'})<\delta \qquad \text{ for all }\{g,g'\}\in E(G),
\]
then there exist projections $\{\tilde{E}_g\}_{g\in V(G)}$ and $\{\tilde{F}_g\}_{g\in V(G)}$ in $\cM$ satisfying
\[
\tilde{F}_g\tilde{E}_{g'}=0 \quad \text{for all  } \{g,g'\}\in E(G), \quad \text{and } \ \|E_g-\tilde{E}_{g}\|_{\tau,2}, \, \|F_g - \tilde F_g\|_{\tau,2} \leq \eps \quad \text{for every }g\in V(G).
\]
\end{corollary}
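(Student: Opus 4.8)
The plan is to reduce the corollary to \cref{lemma:orth} by folding the two families $\{E_g\}$ and $\{F_g\}$ into a \emph{single} family of projections indexed by an auxiliary graph, chosen so that the single-family orthogonalization provided by the lemma yields exactly the cross-orthogonality $\tilde F_g\tilde E_{g'}=0$ demanded here. All the analytic content lives in \cref{lemma:orth}; what remains is a graph-theoretic reindexing.

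First I would build the auxiliary graph. Take two disjoint copies of $V(G)$, writing $g_E$ and $g_F$ for the two copies of a vertex $g$, and let $G'$ be the graph on $V(G')=\{g_E:g\in V(G)\}\sqcup\{g_F:g\in V(G)\}$ whose edge set consists exactly of the pairs $\{g_F,g'_E\}$ with $\{g,g'\}\in E(G)$ (so that both $\{g_F,g'_E\}$ and $\{g'_F,g_E\}$ occur, since $\{g,g'\}$ is unordered). Because $G$ has no loops, no edge of $G'$ joins $g_E$ to $g_F$, so $G'$ is a finite simple loopless graph. Then define a single family of projections $\{P_u\}_{u\in V(G')}$ in $\cM$ by $P_{g_E}=E_g$ and $P_{g_F}=F_g$.

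Next I would match hypotheses and invoke the lemma. Applying \cref{lemma:orth} to the graph $G'$ and the given $\eps$ furnishes a $\delta>0$, which I take to be the $\delta$ of the corollary. For an edge $\{g_F,g'_E\}$ of $G'$ the associated overlap is $\tau(P_{g_F}P_{g'_E})=\tau(F_gE_{g'})$, which by hypothesis is $<\delta$; note that traciality gives $\tau(F_gE_{g'})=\tau(F_gE_{g'}F_g)\ge 0$, so this is a genuine nonnegative overlap that is moreover independent of the ordering of the edge. Hence $\tau(P_uP_v)<\delta$ holds for every edge $\{u,v\}$ of $G'$, and \cref{lemma:orth} produces projections $\{\tilde P_u\}_{u\in V(G')}$ with $\tilde P_u\tilde P_v=0$ for every edge $\{u,v\}$ and $\norm{P_u-\tilde P_u}_{\tau,2}\le\eps$ for every $u$. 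Setting $\tilde E_g:=\tilde P_{g_E}$ and $\tilde F_g:=\tilde P_{g_F}$, the edge relation gives $\tilde F_g\tilde E_{g'}=\tilde P_{g_F}\tilde P_{g'_E}=0$ for every $\{g,g'\}\in E(G)$, while the $L_2$-bounds read $\norm{E_g-\tilde E_g}_{\tau,2},\norm{F_g-\tilde F_g}_{\tau,2}\le\eps$, which is exactly the claimed conclusion.

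The only point requiring care—and the closest thing to an obstacle—is the bookkeeping of orientations. The two-family overlap $\tau(F_gE_{g'})$ is \emph{not} symmetric under swapping $g$ and $g'$, so the auxiliary graph $G'$ must contain both $\{g_F,g'_E\}$ and $\{g'_F,g_E\}$ for each unordered edge of $G$, and one must check via traciality that the single-family overlap condition of \cref{lemma:orth} on $G'$ reproduces precisely the directed conditions assumed here. Once $G'$ and the combined family are set up correctly, the reduction is immediate.
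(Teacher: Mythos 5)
Your reduction via the bipartite double graph (two copies of $V(G)$, with $E_g$ on one side, $F_g$ on the other, and edges $\{g_F,g'_E\}$ for $\{g,g'\}\in E(G)$) followed by an application of \cref{lemma:orth} is exactly the paper's proof, down to the same auxiliary graph. Your extra remarks on traciality and the unordered-edge bookkeeping are correct but not needed beyond what the paper states.
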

\begin{proof}
Let $G$ be a graph and let $H$ be the bipartite graph constructed from $G$ whose left and right vertex sets are two copies of the vertices of $G$. There is an edge between a vertex $g$ in the left copy and a vertex $g'$ in the right copy if and only if $\{g,g'\} \in E(G)$. For a vertex $g \in V(G)$ we assign $E_g$ to the left copy of $g$ and $F_g$ to the right copy of $g$. The corollary then follows from applying \cref{lemma:orth} to $H$.
\end{proof}

We then show the main result of this section.
\begin{theorem} \label{thrm:connes Haemers}
If $\mathcal{H}_{tr}(G)<\mathcal{H}_f(G)$ for a graph $G$, then the von Neumann algebra $(\cM,\tau)$ in any $\lambda$-tracial projection representation of $G$ for $\lambda<\mathcal{H}_f(G)$ does not satisfy Connes' embedding conjecture. In particular, the projections $\{E_g,F_g\}_{g\in V(G)}$ in the $\lambda$-tracial projection representation (as given in~\cref{def: tracial projection rep}) do not admit a matricial microstate.
\end{theorem}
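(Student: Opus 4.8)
The plan is to argue by contradiction. Suppose the projections $\{E_g,F_g\}_{g\in V(G)}$ of some $\lambda$-tracial projection representation $(\cH,\cM,\{E_g\},\{F_g\},\tau)$ with $\lambda<\cH_f(G)$ (as in \cref{def: tracial projection rep}) \emph{do} admit a matricial microstate of projections in the sense of \cref{prop:CEC}(ii). I will produce from such microstates finite-dimensional $(d,r)$-subspace representations of $G$ whose ratio $d/r$ is arbitrarily close to $\lambda$; by the definition of the fractional Haemers bound this forces $\cH_f(G)\le\lambda$, contradicting $\lambda<\cH_f(G)$. Hence no such microstate exists, and \cref{prop:CEC} then shows that $(\cM,\tau)$ is not embeddable into an ultraproduct of matrix algebras, i.e.\ it violates Connes' embedding conjecture.

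First I would extract the trace-level data that a microstate can see. Put $\delta:=\max_{g\in V(G)}\norm{E_g(I_\cM-F_g)}$, which is $<1$ since $G$ is finite. From \cref{def: tracial projection rep} I record: $\tau(E_g)\ge 1/\lambda$; for $\{g,g'\}\in E(\overline G)$ the identity $\tau(F_gE_{g'})=\norm{F_gE_{g'}}_{\tau,2}^2=0$; and, since $\norm{E_g(I_\cM-F_g)}<1$ forces $E_g\wedge(I_\cM-F_g)=0$ while $0\le E_g(I_\cM-F_g)E_g\le\delta^2 I_\cM$, the decay estimate
\[
\tau\big((E_g(I_\cM-F_g)E_g)^n\big)\le\delta^{2n}\quad\text{for every }n\in\N.
\]
Fixing $n$ and a tolerance $\eps>0$, I take a matricial microstate $\{E_g^{(k)},F_g^{(k)}\}\subseteq M_k$ of projections, matching the traces of all words of length at most $3n$ in $\{E_g,F_g\}$ to within $\eps$; by amplifying ($E\mapsto E\otimes I_m$) I may take $k$ as large as I wish.

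Next I would orthogonalize and read off a subspace representation. Because $\tr(F_g^{(k)}E_{g'}^{(k)})=\norm{F_g^{(k)}E_{g'}^{(k)}}_{\tr,2}^2$ is within $\eps$ of $\tau(F_gE_{g'})=0$ for $\{g,g'\}\in E(\overline G)$, \cref{cor:orth} applied to $\overline G$ (with $\eps$ below the tolerance it demands) yields projections $\hat E_g,\hat F_g\in M_k$ with $\hat F_g\hat E_{g'}=0$ for all $\{g,g'\}\in E(\overline G)$ and $\norm{\hat E_g-E_g^{(k)}}_{\tr,2},\norm{\hat F_g-F_g^{(k)}}_{\tr,2}\le\eps$. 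Setting $S_g=\ran(\hat E_g)$, $T_g=\ran(\hat F_g)$ and $d=k$, condition (ii) of \cref{def: fractional Haemers 2} is immediate. For condition (i), in $M_k$ one has $\dim(S_g)-\dim(S_g\cap T_g^\perp)=k\big(\tr(\hat E_g)-\tr(\hat X_g)\big)$ with $\hat X_g=\hat E_g\wedge(I-\hat F_g)$. The term $\tr(\hat E_g)=\norm{\hat E_g}_{\tr,2}^2\ge 1/\lambda-O(\eps)$ is bounded below by $2$-norm closeness, while the subtracted term is controlled by the key inequality $\hat X_g\le(\hat E_g(I-\hat F_g)\hat E_g)^n$ (valid since $\hat X_g$ is the eigenprojection at $1$ of the positive contraction $\hat E_g(I-\hat F_g)\hat E_g$), Lipschitz continuity of word-traces in $\norm{\cdot}_{\tr,2}$, and the microstate:
\[
\tr(\hat X_g)\le\tr\big((\hat E_g(I-\hat F_g)\hat E_g)^n\big)\le\tau\big((E_g(I_\cM-F_g)E_g)^n\big)+O_n(\eps)\le\delta^{2n}+O_n(\eps).
\]
Taking $r=\big\lfloor k(1/\lambda-\delta^{2n}-O_n(\eps))\big\rfloor$ therefore gives a $(k,r)$-subspace representation of $G$, so $\cH_f(G)\le k/r$. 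Sending $k\to\infty$ (by amplification) gives $\cH_f(G)\le(1/\lambda-\delta^{2n}-O_n(\eps))^{-1}$; then $\eps\to0$ and finally $n\to\infty$ yield $\cH_f(G)\le\lambda$, the contradiction sought.

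The crux — and the only genuinely delicate point — is that a matricial microstate controls only traces of words, whereas a projection representation comes with the \emph{operator-norm} hypothesis $\norm{E_g(I_\cM-F_g)}<1$, and the intersection dimension $\dim(S_g\cap T_g^\perp)$ is not $\norm{\cdot}_{\tr,2}$-continuous. The resolution is the power trick $\hat X_g\le(\hat E_g(I-\hat F_g)\hat E_g)^n$: it trades the discontinuous eigenprojection at $1$ for a fixed-degree polynomial whose trace the microstate can approximate, while the norm gap $\delta<1$ renders $\tau((E_g(I_\cM-F_g)E_g)^n)\le\delta^{2n}$ negligible once $n$ is large. One must also respect the order of quantifiers: $n$ is chosen first to beat $\delta^{2n}$, the precision $\eps$ is chosen afterwards (small relative to $n$ and to the tolerance of \cref{cor:orth}), and $k$ is taken large last.
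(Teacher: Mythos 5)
Your proposal is correct and follows essentially the same route as the paper's proof: contradiction via a matricial microstate, orthogonalization of the approximate orthogonality relations through \cref{cor:orth}, the monotonicity bound $\hat X_g\le(\hat E_g(I-\hat F_g)\hat E_g)^n$ to control the trace of the intersection projection by a fixed-degree word that the microstate approximates, and reading off a finite-dimensional subspace representation in the sense of \cref{def: fractional Haemers 2}. The only (harmless) differences are cosmetic: you make the decay explicit as $\delta^{2n}$ where the paper invokes $\|(e_g(I-f_g)e_g)^n\|\to 0$, and you handle the integrality of $r$ by amplification, a point the paper glosses over.
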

\begin{proof}
Let $\cM$ be a von Neumann algebra with a normal tracial state $\tau$. Assume that $\{e_g\}_{g\in V(G)}$ and $\{f_g\}_{g\in V(G)}$ are projections satisfying the conditions in \cref{def: tracial projection rep}, i.e.
\begin{enumerate}[label=\upshape(\roman*)]
\item $\norm{e_g(I_\cM-f_g)}{}<1$ for any $g\in V(G)$.
\item $f_ge_{g'}=0$ for any $\{g,g'\}\in E(\overline{G})$,
\item $\tau(e_g) \geq \frac{1}{\lambda}$ for all $g\in V(G)$.
\end{enumerate}
We show by contradiction that $\{e_g\}\cup\{f_g,I_\cM-f_g\}_{g\in V(G)}$ cannot have a matricial microstate. By condition (i), we have $\lim_{n\to\infty}\norm{(e_g(I_\cM-f_g)e_g)^n}{}=0$. Then for every $\eps>0$, there exists a finite integer $\ell$ such that for any $g\in V(G)$ and $n\ge \ell$,
\[\tau((e_g(I_\cM-f_g)e_g)^{n})\le \eps.\]
Take the $\delta$ corresponding to $\eps/(3\ell)$ in \cref{cor:orth}. Suppose
$\{e_g\}\cup\{f_g,I_\cM-f_g\}_{g\in V(G)}$ admits a $(3\ell,\eps_2)$-matrical microstate for $\eps_2:=\min\{\delta,\epsilon\}$. Then there exists $k\in\N$ and projections $E_g,F_g\subseteq M_{k}$ for $g\in V(G)$ satisfying
\begin{enumerate}[label=\upshape(\roman*)]
\item $\tr((E_g(I_k-F_g)E_g)^{\ell})\le \eps+\eps_2$ for each $g\in V(G)$,
\item $\tr(F_gE_{g'})\le \eps_2$ for any $\{g,g'\}\in E(\overline{G})$,
\item $\tr(E_{g}) \geq \frac{1}{\lambda}-\eps_2$ for all $g\in V(G)$.
\end{enumerate}
We now use \cref{cor:orth} (on the complement graph $\overline{G}$) to convert (ii) to an orthogonality condition. There exist projections (again of size $k\times k$) $\tilde{E}_{g},\tilde{F}_{g}$ for $g\in V(G)$ such that $\tilde{F}_g\tilde{E}_{g'}=0$ for any $\{g,g'\}\in E(\overline{G})$ and for all $g \in V(G)$ we have
\[\norm{E_g-\tilde{E}_{g}}_{\tr,2} \le \frac{\eps}{3\ell},~\norm{F_g-\tilde{F}_{g}}_{\tr,2} \le \frac{\eps}{3\ell}.\]

Then we have
\begin{align*}
\tr((\tilde{E}_g(I_k-\tilde{F}_g)\tilde{E}_g)^{\ell})&\le \tr((E_g(I_k-F_g)E_g)^{\ell})+\tr(|(E_g(I_k-F_g)E_g)^{\ell}-(\tilde{E}_g(I_k-\tilde{F}_g)\tilde{E}_g)^{\ell}|) \\& \le \eps + \eps_2+ \norm{(E_g(I_k-F_g)E_g)^{\ell}-(\tilde{E}_g(I_k-\tilde{F}_g)\tilde{E}_g)^{\ell}}_{\tr,2}
\\& \le \eps+\eps+3\ell\cdot \frac{\eps}{3\ell}=3\eps,
\end{align*}
where the second inequality uses $\tr(|A|)\leq\norm{A}_{\tr,2}$ for the normalized trace ``$\tr$'' and the third inequality uses the triangle inequality
$3\ell$ times and the fact that the norm of products of projections is at most~$1$.

This shows that for any $\eps>0$, there exist $\ell,k \in \N$ and $k\times k$ projections $\tilde{E}_g$, $\tilde{F}_g$ for all $g\in V(G)$ satisfying:
\begin{itemize}
\item $\tr((\tilde{E}_g(I_k-\tilde{F}_g)\tilde{E}_g)^{\ell})\leq 3\eps$ for any $g\in V(G)$,
\item $\tilde{F}_g\tilde{E}_{g'}=0$ for any $\{g,g'\}\in E(\overline{G})$.
\end{itemize}
Let $X_g=\tilde{E}_g\wedge (I_k-\tilde{F}_g)$ be the projection onto $\text{ran}(\tilde{E}_g)\cap \text{ran}(I_k-\tilde{F}_g)$. By monotonicity of the sequence
$(\tilde{E}_g(I_k-\tilde{F}_g)\tilde{E}_g)^{n}\to X_g$, we have
$\tr(X_g)\le \tr((\tilde{E}_g(I_k-\tilde{F}_g)\tilde{E}_g)^{\ell})\le 3\eps$. We show that this gives a feasible solution of $\cH_f(G)$ in terms of~\cref{def: fractional Haemers 2}. Let $S_g=\ran(\tilde{E}_g)$ and $T_g=\ran(\tilde{F}_g)$.
Note that
\[
\dim(S_g)-\dim(S_g\cap T_g^\perp)=k(\tr(\tilde{E}_g-X_g))\geq k(\tr(E_g)-\tr(|E_g-\tilde{E_g}|)-3\eps)\geq k(\frac{1}{\lambda}-\eps_2-\frac{\eps}{3\ell}-3\eps).
\]
Thus
\[
\cH_f(G)\leq \frac{1}{\frac{1}{\lambda}-5\eps}=\frac{\lambda}{1-5\lambda\eps}.
\]
Taking $\eps$ such that $\frac{\lambda}{1-5\lambda\eps}< \cH_f(G)$, we reach a contradiction.
\end{proof}


\section{Inertia upper bound on the commuting quantum independence number}\label{sec: inertia}

We recall the infinite-dimensional version of the projective packing number.
\begin{definition}[Tracial packing number]
Let $G$ be a graph. A $\lambda$-tracial packing of $G$ is a $C^*$-algebra $\cA$ containing projections $\{E_{g}\}\subseteq\mathcal{A}$ and equipped with a tracial state $\tau$, such that $E_gE_{g'}=0$ for all $\{g,g'\}\in E(G)$ and $\lambda=\sum_{g\in V(G)}\tau(E_g)$. The tracial packing number of $G$, which we denote by $\al_{tr}(G)$, is defined as the supremum of $\lambda$ over all possible $\lambda$-tracial packing of $G$.
\end{definition}
Note that the value $\alpha_{tr}(G)$ in the above definition remains the same if we replace $C^*$-algebras by von Neumann algebras and replace tracial states by normal tracial states.
When restricting to finite-dimensional $C^*$-algebras $\cA$, the above definition gives the projective packing number $\al_p(G)$, which is an upper bound on $\alpha_q(G)$~\cite{robersonthesis}. Similarly, the tracial packing number $\alpha_{tr}(G)$ upper bounds $\alpha_{qc}(G)$ (cf.~\cite[Proposition 9]{gribling2018bounds}).
\begin{definition}[Inertia bound]
Let  $A\in M_s$ be a Hermitian matrix. A subspace $S \subseteq\mathbb{C}^s$ is \emph{totally isotropic with respect to $A$} if for any vector $v\in S$, $\langle v,Av\rangle=0$.
The inertia bound of $A$, denoted as $n(A)$, is the maximum dimension of totally isotropic subspaces of $A$.
\end{definition}
It was proved in \cite{WE18,qinertia} that for any weighted adjacency matrix $A$ of $G$, we have $\al_{p}(G)\le n(W)$. We show this is also true for the tracial packing number. We first prove the following characterization of the inertia bound. Note that the original definition can be viewed as taking  $\cM = \C$.

\begin{proposition}\label{prop:inertia}
Let $A\in M_s$ be a Hermitian matrix and let $\cM$ be a von Neumann algebra equipped with a normal tracial state $\tau$. Then
\[n(A)=\sup_{E} \{ \Tr\otimes \tau(E)\  | \ E\in M_s\otimes\cM \text{ projection s.t.~for any $0\le P\le E$, } \Tr\otimes \tau((A\otimes I_{\cM})P)=0\}.\]
\end{proposition}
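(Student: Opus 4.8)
The plan is to prove the two inequalities defining the supremum separately, after recording a convenient reformulation of the feasibility condition. Throughout write $\tau' := \Tr\otimes\tau$, a normal trace on the finite von Neumann algebra $M_s\otimes\cM$ with $\tau'(I)=s$; since the proof of \cref{thm:rank}(i) only uses that $\tau'$ is a trace (via Murray--von Neumann equivalence), it applies verbatim to $\tau'$. Diagonalizing $A$, let $Q_+,Q_0,Q_-\in M_s$ be the spectral projections onto the spans of the eigenvectors with positive, zero and negative eigenvalues, of ranks $n_+,n_0,n_-$, and recall that the classical inertia bound equals $n(A)=n_0+\min(n_+,n_-)$. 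Replacing $\cM$ by $\cM z$, where $z$ is the central support projection of $\tau$, and restricting $\tau$ (which is then faithful on $\cM z$), we may assume $\tau$, and hence $\tau'$, is faithful: this affects neither $n(A)$ nor the supremum, since the discarded corner $I_s\otimes(1-z)$ carries no $\tau'$-mass and so cannot contribute (any feasible $E$ may be replaced by the support projection of $(I_s\otimes z)E$ without changing $\tau'(E)$, using \cref{thm:rank}(ii)).

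First I would show that, for a projection $E\in M_s\otimes\cM$, the stated feasibility condition is equivalent to the operator identity $E(A\otimes I_\cM)E=0$. One direction is immediate: if $E(A\otimes I_\cM)E=0$ and $0\le P\le E$, then $P=EPE$ (since $0\le P\le E$ forces $\ran(I-E)\subseteq\ker P$), so by traciality $\tau'((A\otimes I_\cM)P)=\tau'(E(A\otimes I_\cM)E\,P)=0$. For the converse, set $B=E(A\otimes I_\cM)E$; then $B=EBE$ commutes with $E$, so for each $\eps>0$ the spectral projection $P_\eps$ of $B$ onto $[\eps,\infty)$ satisfies $0\le P_\eps\le E$ and $P_\eps=EP_\eps E$. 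Feasibility then gives $0=\tau'((A\otimes I_\cM)P_\eps)=\tau'(BP_\eps)\ge\eps\,\tau'(P_\eps)\ge 0$, and faithfulness of $\tau'$ forces $P_\eps=0$; letting $\eps\to 0$ shows $B$ has no positive part, and the same argument applied to $-A$ shows it has no negative part, whence $B=0$.

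The lower bound on the supremum is then easy. Taking a maximal totally isotropic subspace $S$, so that $\dim S=n(A)$, polarization of the Hermitian form $v\mapsto\langle v,Av\rangle$ shows $P_S A P_S=0$, where $P_S$ denotes the orthogonal projection onto $S$. Hence $E=P_S\otimes I_\cM$ satisfies $E(A\otimes I_\cM)E=(P_SAP_S)\otimes I_\cM=0$, so it is feasible, with $\tau'(E)=\Tr(P_S)\,\tau(I_\cM)=n(A)$. For the upper bound, let $E$ be any feasible projection, so $E(A\otimes I_\cM)E=0$. The key geometric step is $E\wedge(Q_-\otimes I_\cM)=0$ (and symmetrically $E\wedge(Q_+\otimes I_\cM)=0$): if $0\neq v\in\ran(E)\cap\ran(Q_-\otimes I_\cM)$, then on one hand $\langle v,(A\otimes I_\cM)v\rangle=\langle v,E(A\otimes I_\cM)E v\rangle=0$, while on the other hand $A$ is negative definite on $\ran(Q_-)$, giving $\langle v,(A\otimes I_\cM)v\rangle<0$, a contradiction. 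Applying \cref{thm:rank}(i) to $E$ and $Q_-\otimes I_\cM$ yields
\[
\tau'(E)+n_-=\tau'\big(E\vee(Q_-\otimes I_\cM)\big)+\tau'\big(E\wedge(Q_-\otimes I_\cM)\big)\le s+0,
\]
that is $\tau'(E)\le n_++n_0$; symmetrically $\tau'(E)\le n_-+n_0$. Taking the minimum gives $\tau'(E)\le n_0+\min(n_+,n_-)=n(A)$, completing the proof.

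I expect the main obstacle to be the equivalence of the first paragraph together with the faithfulness bookkeeping: the spectral-projection argument must produce test elements $P_\eps$ genuinely satisfying $0\le P_\eps\le E$, and one must verify that $\tau'=\Tr\otimes\tau$, although not a state, is a bona fide normal trace to which \cref{thm:rank} applies. Once $E(A\otimes I_\cM)E=0$ is secured, the geometric vanishing $E\wedge(Q_\pm\otimes I_\cM)=0$ and the final counting via Kaplansky's formula are routine.
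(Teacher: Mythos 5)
Your proof is correct and follows the same core strategy as the paper: the lower bound via $P_S\otimes I_\cM$ for a maximal totally isotropic subspace $S$, and the upper bound by showing a feasible $E$ has trivial intersection with a definite spectral subspace of $A\otimes I_\cM$ and then invoking \cref{thm:rank}(i). The difference is in how you extract information from the feasibility condition. You first upgrade it to the operator identity $E(A\otimes I_\cM)E=0$, which forces you to reduce to a faithful trace (your central-support reduction is valid, and in fact simpler than you state: since $z$ is central, $(I_s\otimes z)E$ is already a projection, so \cref{thm:rank}(ii) is not needed), and then argue at the level of vectors to get $E\wedge(Q_\pm\otimes I_\cM)=0$. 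The paper instead stays at the level of traces: it tests feasibility directly on projections $P\le E\wedge(E_+\otimes I_\cM)$, where $E_+$ is the spectral projection of the \emph{larger} of the two definite parts (chosen WLOG), and concludes $\Tr\otimes\tau((A\otimes I_\cM)P)\ge\lambda\,\Tr\otimes\tau(P)$ forces the relevant trace to vanish; this gives $\Tr\otimes\tau(E)\le s-\Tr(E_+)=n(A)$ in one application of \cref{thm:rank}(i), with no faithfulness hypothesis genuinely needed and no min over the two sides. Your route buys a clean structural fact ($E(A\otimes I_\cM)E=0$, which is also the form in which feasibility is verified in the application that follows) and explicit care about non-faithful $\tau$, at the cost of extra bookkeeping; the paper's route is shorter but states the intermediate claim $(E_+\otimes I_\cM)\wedge E=0$ in a form that, strictly speaking, also presupposes faithfulness, even though only the trace-zero version is used.
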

\begin{proof}Let $S\subseteq \mathbb{C}^s$ be a totally isotropic subspace of $A$ and let $E_S$ be the projection onto $S$. Then the direction ``$\le$'' follows from the fact that $E_S\otimes I_{\cM}$ is feasible for the right hand side supremum.
For the other direction, recall from~\cite[Lemma 1]{WE18} that
\[ n(A)=\Tr(E_0)+\min\{ \Tr(E_+),\Tr(E_-)\}.\]
where $E_0$ (resp.~$E_+$ and $E_-$) is the projection onto the eigenspace of $A$ of zero spectrum (resp.~positive spectrum and negative spectrum). Without loss of generality, we assume $\rk(E_+)=\Tr(E_+)\ge \Tr(E_-)=\rk(E_-)$ and the smallest positive eigenvalue of $A$ is $\la>0$. Then $E_+\otimes I_{\cM}$  is the spectral projection of $A\otimes I_{\cM}$ on $[\la,\infty)$.
For any projection $E\in M_s\otimes\cM$ such that
\[ \Tr\otimes \tau((A\otimes I_{\cM})P)=0, \forall \  0\le P\le E,\]
we have $(E_+\otimes I_\cM)\wedge E=0$. Indeed, for any nonzero projection $P\le E_+\otimes I_\cM$ and $P\le E$,
\[ \Tr\otimes \tau((A\otimes I_{\cM})P)\ge \la\Tr\otimes \tau(P)>0=\Tr\otimes \tau((A\otimes I_{\cM})P)\ .\]
which is a contradiction.
Thus by \cref{thm:rank} (i),
\begin{align*}
\Tr\otimes\tau(E)=&\Tr\otimes\tau(E)+\Tr\otimes\tau(E_+\otimes I_\cM)-\Tr\otimes\tau(E_+\otimes I_\cM)
\\  = &\Tr\otimes\tau(E\vee (E_+\otimes I_\cM))+\Tr\otimes\tau(E\wedge (E_+\otimes I_\cM)) -\Tr\otimes\tau(E_+\otimes I_\cM)
\\  \le &\Tr\otimes\tau(I_s\otimes I_{\cM}) -\Tr\otimes\tau(E_+\otimes I_{\cM})=s-\Tr(E_+) = \Tr(E_0)+\Tr(E_-)=n(A),
\end{align*}
where the inequality uses the fact that $E\wedge (E_+\otimes I_\cM)=0$ and $E\vee (E_+\otimes I_\cM)$ is a projection on $M_s\otimes\cM$.
Taking the supremum over all such $E$ completes the proof.
\end{proof}

\begin{proposition}
Let $G$ be a graph and let $A$ be a weighted adjacency matrix of $G$. We have $\al_{qc}(G)\le\al_{tr}(G)\le n(A)$.
\end{proposition}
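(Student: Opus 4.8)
The statement combines two inequalities, and the plan is to establish each separately. For the first inequality $\al_{qc}(G)\le\al_{tr}(G)$, suppose $\al_{qc}(G)=d$, so that $K_d\QCto\overline G$: there is a $C^*$-algebra $\cA$ with a tracial state $\tau$ and projections $E_{i,h}$ for $i\in[d]$, $h\in V(G)$, as in \cref{def: qc homomorphism}, satisfying $\sum_h E_{i,h}=I$ for each $i$ and $E_{i,h}E_{i',h'}=0$ whenever $i\neq i'$ and ($\{h,h'\}\in E(G)$ or $h=h'$). I would set $F_h:=\sum_{i=1}^d E_{i,h}$ and check that $(\cA,\{F_h\},\tau)$ is a $d$-tracial packing of $G$. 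Indeed, for each fixed $i$ the projections $\{E_{i,h}\}_h$ sum to $I$ and are therefore mutually orthogonal; combined with the orthogonality at $h=h'$, $i\neq i'$, this gives $F_h^2=F_h$. For an edge $\{h,h'\}\in E(G)$, the expansion $F_hF_{h'}=\sum_{i,i'}E_{i,h}E_{i',h'}$ vanishes termwise (the $i=i'$ terms by the measurement orthogonality, the $i\neq i'$ terms by the edge condition). Finally $\sum_h\tau(F_h)=\sum_i\tau\big(\sum_h E_{i,h}\big)=d$, so $\al_{tr}(G)\ge d$.

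For the second inequality $\al_{tr}(G)\le n(A)$, let $A\in M_s$ ($s=|V(G)|$) be a weighted adjacency matrix of $G$, i.e.\ Hermitian with $A_{g,g'}=0$ unless $\{g,g'\}\in E(G)$; in particular $A$ has zero diagonal. Take any $\lambda$-tracial packing of $G$, which by the remark after the definition may be assumed given by a von Neumann algebra $\cM$ with a \emph{normal} tracial state $\tau$ and projections $E_g$ with $E_gE_{g'}=0$ on edges and $\lambda=\sum_g\tau(E_g)$. The plan is to feed the block-diagonal projection
\[
E:=\sum_{g\in V(G)}\ket{g}\bra{g}\otimes E_g\in M_s\otimes\cM
\]
into the variational formula for $n(A)$ from \cref{prop:inertia}. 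This $E$ is a projection (the $\ket{g}\bra{g}$ are orthogonal) with $\Tr\otimes\tau(E)=\sum_g\tau(E_g)=\lambda$.

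The crux is to verify the isotropy condition $\Tr\otimes\tau((A\otimes I_\cM)P)=0$ for every $0\le P\le E$. Writing $P=\sum_{g,g'}\ket{g}\bra{g'}\otimes P_{g,g'}$, the relation $0\le P\le E$ forces $P=EPE$, hence $P_{g,g'}=E_gP_{g,g'}E_{g'}$. A direct block computation then gives $\Tr\otimes\tau((A\otimes I_\cM)P)=\sum_{g,g'}A_{g,g'}\,\tau(P_{g',g})$, and I would show each summand vanishes: the diagonal terms ($g=g'$) because $A_{g,g}=0$; the off-diagonal non-edge terms because $A_{g,g'}=0$; and the edge terms because, using traciality and $E_gE_{g'}=0$, one has $\tau(P_{g',g})=\tau(E_{g'}P_{g',g}E_g)=\tau(E_gE_{g'}P_{g',g})=0$. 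Thus $E$ is feasible, so $\lambda=\Tr\otimes\tau(E)\le n(A)$; taking the supremum over all tracial packings yields $\al_{tr}(G)\le n(A)$.

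The projection and trace bookkeeping are routine; the main point to get right is the isotropy verification for \emph{all} sub-projections $P\le E$ simultaneously. That quantifier is handled cleanly by the identity $P=EPE$, which confines the matrix blocks $P_{g,g'}$ to the form $E_g(\cdot)E_{g'}$, after which the edge orthogonality of the packing kills exactly the off-diagonal blocks on which $A$ is supported, via a single application of traciality. The zero-diagonal assumption on $A$ disposes of the diagonal blocks, and it is precisely this hypothesis that makes the classical inertia bound hold for independent sets in the first place.
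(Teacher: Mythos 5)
Your proof is correct and, for the substantive inequality $\al_{tr}(G)\le n(A)$, follows essentially the same route as the paper: the same block-diagonal projection $E=\sum_g \proj{g}\otimes E_g$, the same reduction $P=EPE$ for $0\le P\le E$, and the same use of traciality together with the edge-orthogonality $E_gE_{g'}=0$ and the vanishing of $A$ off the edge set (the paper packages this as the single identity $E(A\otimes I_\cM)E=0$ rather than your blockwise case analysis, but the content is identical). The only difference is that you supply an explicit, correct derivation of $\al_{qc}(G)\le\al_{tr}(G)$ via $F_h=\sum_i E_{i,h}$, whereas the paper simply cites this inequality from prior work.
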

\begin{proof}
We only need to prove the last inequality. Let $(\cM,\{E_g\},\tau)$ be a tracial packing of $G$. Define the projection $E=\sum_{g\in V(G)}e_ge_g^*\otimes E_g\in M_n\otimes\cM$, where $n=|V(G)|$ and $\{e_g\}$ is a set of standard basis of $\C^{n}$.
Note that
\[E(A\otimes I_\cM)E=\sum_{g,g'\in V(G)} \langle e_g,Ae_{g'}\rangle e_ge_{g'}^*\otimes E_gE_{g'}=0,\]
since $\langle e_g,Ae_{g'}\rangle=0$ for all $\{g,g'\}\notin E(G)$ and $E_gE_{g'}=0$ for all $\{g,g'\}\in E(G)$. Then for any $0\le P\le E$,
\[ \Tr\otimes \tau((A\otimes I_\cM)P)=\Tr\otimes \tau((A\otimes I_\cM)EPE)=\Tr\otimes\tau(E(A\otimes I_\mathcal{\cM})EP)=0.\]
It follows from \cref{prop:inertia} that
\[\Tr\otimes \tau(E)=\sum_{g\in V(G)}\tau(E_g)\le n(A),\]
which completes the proof.
\end{proof}

\paragraph*{Acknowledgments.}
We thank Monique Laurent for many interesting and helpful discussions and comments, and we thank the anonymous reviewers for many helpful comments. Part of this work was done when SG and YL were postdoctoral researchers at Centrum Wiskunde \& Informatica (CWI) and QuSoft, the Netherlands, and when LG was a postdoctoral researcher at Zentrum Mathematik, Technische Universit\"{a}t M\"{u}nchen.
SG's research was partially supported by SIRTEQ-grant QuIPP. YL's research was partially supported by MEXT Quantum Leap Flagship Program (MEXT Q-LEAP) Grant Number JPMXS0120319794.

\bibliographystyle{alphaurl}
\bibliography{alls}

\appendix

\section{Relaxing the equality in \texorpdfstring{\cref{def: tracial subspace rep}}{Definition 4.2}} \label{App: inequality}
We first show a similar result as \cref{def: tracial rank}: in the definition of a $\lambda$-tracial subspace representation we may relax the equality in (ii) to an inequality $\geq \frac{1}{\lambda}$. The proof strategy is similar to the one for tracial rank in~\cite[Theorem 6.11]{PAULSEN2016}.
\begin{proposition}\label{prop: alternative definition of tracial Haemers bound}
Let $G$ be a graph. If $\cM\subseteq B(\cH)$ is a von Neumann algebra containing projections $E_g$ for all $g\in V(G)$, with a normal tracial state $\tau:\cM \to \C$, such that, for
\[
E_{N_{\overline{G}}(g)} = \bigvee_{g'\in N_{\overline{G}}(g)} E_{g'}=\text{sot-}\lim_{n\to\infty }\Big(\sum_{g'\in N_{\overline{G}}(g)} E_{g'}\Big)^{1/n} \text{ for } g \in V(G),
\]
we have
\begin{enumerate}[label=\upshape(\roman*)]
\item $\ran(E_g) \cap \ran(E_{N_{\overline{G}}(g)})=\{0\}$.
\item $\tau(E_g) \geq \frac{1}{\lambda}$ for all $g\in V(G)$
\end{enumerate}
then $G$ has a $\lambda$-tracial subspace representation.
\end{proposition}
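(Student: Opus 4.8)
The only gap between the present hypotheses and the definition of a $\lambda$-tracial subspace representation (\cref{def: tracial subspace rep}) is that condition~(ii) has been weakened from the equality $\tau(E_g)=\frac1\lambda$ to the inequality $\tau(E_g)\ge\frac1\lambda$; condition~(i) is literally the same, since $E_g\wedge E_{N_{\overline G}(g)}$ is precisely the projection onto $\ran(E_g)\cap\ran(E_{N_{\overline G}(g)})$. The plan is therefore to cut each $E_g$ down to a subprojection of trace exactly $\frac1\lambda$. The key point that makes this suffice is a monotonicity observation: if $E_g'\le E_g$ for every $g$, then $\bigvee_{g'\in N_{\overline G}(g)}E_{g'}'\le\bigvee_{g'\in N_{\overline G}(g)}E_{g'}=E_{N_{\overline G}(g)}$, because $\ran(\bigvee_iP_i)=\cl(\sum_i\ran(P_i))$ is monotone in each $P_i$. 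Consequently
\[
\ran(E_g')\cap\ran\Big(\bigvee_{g'\in N_{\overline G}(g)}E_{g'}'\Big)\subseteq\ran(E_g)\cap\ran(E_{N_{\overline G}(g)})=\{0\},
\]
so condition~(i) is automatically inherited by any family of subprojections, regardless of how much each $E_g$ is shrunk.

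The main obstacle is that in a general von Neumann algebra the trace need not take the value $\frac1\lambda$ on any subprojection of $E_g$: already in a matrix algebra $M_n$ with its normalized trace, the only attainable values are integer multiples of $1/n$. To create continuous room I would pass to a tensor product with a diffuse abelian algebra. Set $\tilde\cM=\cM\,\overline{\otimes}\,L^\infty([0,1])$ and let $\tilde\tau=\tau\otimes\int_0^1\!\cdot\,dx$ be the tensor product of $\tau$ with integration against Lebesgue measure; by the discussion of tensor products of von Neumann algebras with normal (tracial) states in \cref{subsec: operator algebra}, $\tilde\cM$ is a von Neumann algebra and $\tilde\tau$ is a normal tracial state. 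For each $g$ put
\[
s_g=\frac{1/\lambda}{\tau(E_g)}\in(0,1]\qquad\text{and}\qquad E_g'=E_g\otimes\chi_{[0,s_g]},
\]
which is a projection in $\tilde\cM$ dominated by $\hat E_g:=E_g\otimes 1$, and satisfies $\tilde\tau(E_g')=\tau(E_g)\,s_g=\frac1\lambda$. Here the relaxed hypothesis $\tau(E_g)\ge\frac1\lambda$ is exactly what guarantees $s_g\le 1$.

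It remains to verify condition~(i) for the family $\{E_g'\}$ inside $\tilde\cM$, which is the monotonicity argument above. Since $\hat E_{g'}=E_{g'}\otimes 1$ gives $\bigvee_{g'\in N_{\overline G}(g)}\hat E_{g'}=E_{N_{\overline G}(g)}\otimes 1=:\hat E_{N_{\overline G}(g)}$, and $E_{g'}'\le\hat E_{g'}$ for all $g'$, we get $\bigvee_{g'\in N_{\overline G}(g)}E_{g'}'\le\hat E_{N_{\overline G}(g)}$, whence
\[
\ran(E_g')\cap\ran\Big(\bigvee_{g'\in N_{\overline G}(g)}E_{g'}'\Big)\subseteq\ran(\hat E_g)\cap\ran(\hat E_{N_{\overline G}(g)})=\big(\ran(E_g)\cap\ran(E_{N_{\overline G}(g)})\big)\otimes L^2[0,1]=\{0\}
\]
by hypothesis~(i). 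Thus $E_g'\wedge\big(\bigvee_{g'\in N_{\overline G}(g)}E_{g'}'\big)=0$ for every $g$. Passing to the von Neumann subalgebra of $\tilde\cM$ generated by $\{E_g'\}$, equipped with the restriction of the normal tracial state $\tilde\tau$, the tuple $(\tilde\cM,\{E_g'\},\tilde\tau)$ is a $\lambda$-tracial subspace representation of $G$ in the sense of \cref{def: tracial subspace rep}. I expect the only steps needing care to be the routine identifications of ranges of tensor-product projections and the standard verification that $\tilde\tau$ is normal and tracial; the substantive content is the tensoring trick that makes the exact trace value attainable together with the monotonicity of $\vee$ that preserves the intersection condition.
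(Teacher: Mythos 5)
Your proof is correct and follows essentially the same route as the paper: both tensor with $L^\infty(0,1)$ and replace $E_g$ by $E_g\otimes\chi_{[0,\,1/(\lambda\tau(E_g))]}$ to force the trace to equal exactly $\frac1\lambda$. The only (cosmetic) difference is in verifying condition (i): you use monotonicity of $\vee$ under subprojections plus the tensor factorization of the intersection, whereas the paper argues directly by slicing with linear functionals on $L^2(0,1)$ --- the same underlying mechanism.
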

\begin{proof}
Define the scalars  $c_g=\tau(E_g)$ for $g\in V(G)$. By (ii) we have $c_g\geq 1/\lambda$ for all $g\in V(G)$. Consider the von Neumann algebra $L_\infty(0,1)$ represented on the Hilbert space $L_2(0,1)$ with tracial state $\tau_{2}(f) = \int_{(0,1)} f(x) \, \mathrm d x$,
where $\mathrm d x$ is the Lebesgue measure. For a scalar $r\in(0,1)$, let $P_r=I_{(0,r)}$ be the characteristic function on the open interval $(0,1)$. Since $c_g \geq 1/\lambda$ for all $g \in V(G)$, we have $(\lambda c_g)^{-1} \leq 1$. For $g\in V(G)$, define $\widehat{E}_g=E_g \otimes P_{(\lambda c_g)^{-1}}$, which is a projection in $\cM\overline{\otimes} L_\infty(0,1)\subseteq B(\cH\otimes L_2(0,1))$. Similarly, let $\widehat E_{N_{\overline{G}}(g)}$ be the projection
\[
\widehat E_{N_{\overline{G}}(g)}:=\text{sot-}\lim_{n\to\infty}\Big(\sum_{g'\in N_{\overline{G}}(g)} \widehat E_{g'}\Big)^{1/n} \text{ for } g \in V(G).
\]
Let $\tilde{\cM}\subseteq \cM\overline{\otimes} L_\infty(0,1)\subseteq B(\cH\otimes L^2(0,1))$ be the von Neumann algebra generated by the set of projections $\{\widehat{E}_g:~g\in V(G)\}$, equipped with the tracial state $\widehat{\tau} =\tau \otimes \tau_2$.
Then, for $g\in V(G)$, we have
\[
\widehat{\tau}(\widehat{E}_g)=\tau(E_g) \tau_2(P_{(\lambda c_g)^{-1}}) =c_g \cdot \frac{1}{\lambda c_g} = \frac{1}{\lambda}.
\]
We are only left to prove $\ran(\widehat{E}_g) \cap \ran(\widehat E_{N_{\overline{G}}(g)})=\{0\}$. Let $x\in\ran(\widehat{E}_g) \cap \ran(\widehat E_{N_{\overline{G}}(g)})$. Then there exist $y,y^{(N)}_{g'}\in\cH\otimes L^2(0,1)$ for all $g'\in N_{\overline{G}}(g)$ and $N\in\N$ such that
\[
x=\widehat{E}_g y=(E_g\otimes P_{(\lambda c_g)^{-1}})y
\]
and
\[
\lim_{N\to\infty}\|(\sum_{g'\in N_{\overline{G}}(g)}\widehat{E}_{g'}y_{g'}^{(N)})-x\|=\lim_{N\to\infty}\|(\sum_{g'\in N_{\overline{G}}(g)}(E_{g'}\otimes P_{(\lambda c_{g'})^{-1}}) y_{g'}^{(N)})-x\|=0.
\]
Note that for any linear functional (dual vector) $\mathcal L\in (L^2(0,1))^*$, we have $(I_\cA\otimes \mathcal L)(x)\in\ran(E_g)$. Moreover, let $z_{g'}^{(N)}=(I_\cA\otimes \mathcal L)((E_{g'}\otimes P_{(\lambda c_g)^{-1}})y_{g'}^{(N)})\in\cH$ for every $g'\in N_{\overline{G}}(g)$, we have
\[
\lim_{N\to\infty}\|(\sum_{g'\in N_{\overline{G}}(g)}z_{g'}^{(N)})-(I_\cA\otimes \mathcal L)(x)\|\leq\| \mathcal L\|\cdot \lim_{N\to\infty}\|(\sum_{g'\in N_{\overline{G}}(g)}(E_{g'}\otimes P_{(\lambda c_g)^{-1}}) y_{g'}^{(N)})-x\|=0
\]
Thus
\[
(I_\cA\otimes \mathcal L)(x)\in\ran(E_g)\cap\cl(\sum_{g'\in N_{\overline{G}}(g)}\ran(E_{g'}))=\{0\},
\]
for any $\mathcal L\in L^2(0,1)^*$,
which implies that $x=0$. Thus we obtain a $\lambda$-tracial subspace representation of~$G$.
\end{proof}

\section{Equivalent formulation of the subspace representation of a graph}
\label{sec: equiv. subspace rep.}
We prove the following equivalent formulation of the subspace representation of a graph (cf.~\cref{def: fractional Haemers}).
\eqsubrep*
\begin{proof}
Let $\{S_g\}_{g\in V(G)}$ be a $(d,r)$-subspace representation of $G$ as given in~\cref{def: fractional Haemers}. Let $T_g=(\sum_{g'\in N_{\overline{G}}(g)}S_{g'})^\perp$. Then it is straightforward to see that $S_{g'}\subseteq T_g^\perp$ if $\{g,g'\}\in E(\overline{G})$ and $S_g\cap T_g^\perp=S_g\cap(\sum_{g'\in N_{\overline{G}}(g)}S_{g'})=\{0\}$. Thus, $\{S_g\}_{g\in V(G)}$ and $\{T_g\}_{g\in V(G)}$ satisfy the conditions in~\cref{def: fractional Haemers 2}.

On the other hand, let $\{S_g\}_{g\in V(G)}$ and $\{T_g\}_{g\in V(G)}$ satisfy the conditions in~\cref{def: fractional Haemers 2} with value $(d,r)$, we construct a $(d,r)$-subspace representation of $G$ as given in~\cref{def: fractional Haemers}. For $g\in V(G)$, let $\hat{S}_g$ be the orthogonal complement of $S_g\cap T_g^\perp$ in $S_g$. By (i), we know that $\dim(\hat{S}_g)\geq r$. Let $\tilde{S_g}$ be a dimension-$r$ subspace of $\hat{S_g}$.  We are left to show that $\tilde{S}_g\cap(\sum_{g'\in N_{\overline{G}}(g)}\tilde{S}_{g'})=\{0\}$. Note that
\[
(\sum_{g'\in N_{\overline{G}}(g)}\tilde{S}_{g'})\subseteq(\sum_{g'\in N_{\overline{G}}(g)}S_{g'})\subseteq T_g^\perp,
\]
where the first inclusion is due to the fact that $\tilde{S}_g\subseteq \hat{S}_g\subseteq S_g$ for any $g\in V(G)$ and the second inclusion is due to (ii). Since $\tilde{S}_g$ is contained in the orthogonal complement of $S_g\cap T_g^\perp$ in $S_g$, we have $\tilde{S}_g\cap T_g^\perp=\{0\}$ and it follows that $\tilde{S}_g\cap(\sum_{g'\in N_{\overline{G}}(g)}\tilde{S}_{g'})=\{0\}$. This shows that $\{\tilde{S}_g\}_{g\in V(G)}$ is a $(d,r)$-subspace representation of $G$.
\end{proof}

\end{document}